\newtheorem{definition}{Definition}[section]
\newtheorem{theorem}{Theorem}[section]
\newtheorem{corollary}{Corollary}[theorem]
\newtheorem{lemma}{Lemma}[section]
\begin{document}

\title{Graph Approach to Quantum Systems}

\author{Mladen Pavi\v ci\'c}
\homepage{http://m3k.grad.hr/pavicic}
\affiliation{Institute for Theoretical Atomic, Molecular,
and Optical Physics at Physics Department at
Harvard University and Harvard-Smithsonian Center for
Astrophysics, Cambridge, MA 02138 USA \\
and Chair of Physics, Faculty of Civil Engineering,
University of Zagreb, 10000 Zagreb, Croatia.}

\author{Brendan D.~McKay}
\homepage{http://cs.anu.edu.au/$\sim$bdm}
\affiliation{School of Computer Science,
Australian National University,
Canberra, ACT, 0200, Australia.}

\author{Norman D.~Megill}
\homepage{http://www.metamath.org}
\affiliation{Boston Information Group, 19 Locke Ln., Lexington,
MA 02420, U.~S.~A.}

\author{Kre\v simir Fresl}
\homepage{http://www.grad.hr/nastava/gs}
\affiliation{Department of Technical Mechanics,
Faculty of Civil Engineering,
University of Zagreb, 10000 Zagreb, Croatia.}

\begin{abstract}

Using a graph approach to quantum systems, we show that
descriptions of 3-dim Kochen-Specker (KS) setups as well
as descriptions of 3-dim spin systems by means of Greechie
diagrams (a kind of lattice) that we find in the literature
are wrong. Correct lattices generated by McKay-Megill-Pavicic
(MMP) hypergraphs and Hilbert subspace equations are given. 
To enable future exhaustive
generation of 3-dim KS setups by means of our recently found
{\em stripping technique}, bipartite graph generation is used to
provide us with lattices with equal numbers of elements and blocks
(orthogonal triples of elements)---up to 41 of them. We obtain
several new results on such lattices and hypergraphs, in
particular on properties such as superposition and orthoraguesian
equations.
\end{abstract}

\pacs{03.65, 03.67, 20.00, 42.10}

\keywords{Kochen-Specker setups, Hilbert space subspaces,
cubic bipartite graphs, Greechie diagrams, MMP hypergraphs,
strong sets of states, orthoarguesian equations,
superposition of states}

\maketitle

\section{\label{sec:intro}Introduction}

We make use of hypergraphs (defined in Sec.~\ref{sec:represent}) and
bipartite graphs (defined in Sec.~\ref{sec:one-state}) to describe
large 3-dim quantum setups.  One way to describe a quantum system in
Hilbert space is through the use of lattices, specifically Hilbert
lattices (Def.~\ref{def:hl}), and our approach is based on a
correspondence between graphs and lattices.

Many authors have tried to justify empirically a mathematically
well-proved ortho-isomorphism between a Hilbert lattice
and the lattice of subspaces of an
infinite-dimensional Hilbert space, which has been worked out by
over the last 60 years.\cite{beltr-cass-book,holl95}
The finite-dimensional case was elaborated even earlier by
G.~Birkhoff and J.~von Neumann.\cite{birk-v-neum} The results 
were crowned by the result of Maria Pia Sol\`er\cite{soler}
that the field (e.g., complex numbers) over which the Hilbert space
can be defined follows from the Hilbert lattice conditions.

Yet, a satisfactory empirical justification has not been achieved.
First steps have been attempted with a description of spin-1, i.e.,
3-dim systems. Several
authors\cite{shimony,hultgren,svozil-tkadlec,svozil-book-ql,tka98,tkadlec,tkadlec-2,smith03,foulis99,pmmm04b,pavicic-rmp09}
have obtained a number of results in applications of the so-called
Greechie diagrams (see Subsec.~\ref{subsec:greechie}) to spin
systems. For instance, a correspondence found between
orthomodular lattices and MMP hypergraphs (see Sec.~\ref{sec:represent}, 
Def.~\ref{def:mmp})
enabled an exhaustive generation of all 3-dim Kochen-Specker (KS)
sets with up to 24 vectors.\cite{pmm-2-09}

On the other hand, many results on equations holding in Hilbert
lattices (see  Sec.~\ref{subsec:eqns})  have recently been
obtained.\cite{mpoa99,mayet06,pm-ql-l-hql2,mayet06-hql2,mp-alg-hilb-eq-09}
An immediate idea was to verify these equations on the sets for
which an experimental setups was designed---KS sets. To our
surprise it turned out that the standard KS setups described by
Greechie diagrams do not allow a verification of these equations.
Moreover known KS systems described by Greechie diagrams do not
pass even the property of modularity which any spin lattice should
pass. Hence, something was missing in the known description of those sets.

A missing link between empirical quantum measurements
and its lattice structure was a proper description of a correspondence
between the standard quantum measurements, which use Hilbert space
vectors and states, and Hilbert lattices, which make use of Hilbert space
subspaces that contain these vectors and/or are spanned by them.
What hampered a search for such a correspondence was a too
narrow focus on orthogonality via lattices represented by Greechie
diagrams (Def.~\ref{def:greechie}).

As we show in Subsec.~\ref{subsec:greechie} Greechie diagrams
cannot serve the purpose because they in general turn out not to be
subalgebras of a Hilbert lattice (Theorem \ref{th:notsubalgebra}).

We give two examples which were most elaborated in the literature:
empirical reconstruction of quantum mechanics via lattice theory
and a description of Kochen-Specker's setups via lattice theory.
The examples show how the application of the Greechie diagrams
lead these elaborations to a dead end.

As for the empirical reconstruction,  B.~O.~Hultgren, III
and A.~Shimony used Greechie diagrams in their
detailed attempt to build up a Hilbert lattice of a realistic
quantum system for a 3-dim spin-1 system passing through
Stern-Gerlach filters.\cite{shimony,hultgren}
They did not succeed in building a Hilbert lattice because
the Greechie diagrams, as we show below (Th.~\ref{th:notsubalgebra}), are not
subalgebras of a Hilbert lattice. They failed to obtain some features
they  thought they should have obtained and they obtained some
features they thought they should not have obtained. As for
the former features, e.g., superposition, we show
that they do not cause necessarily problems (see the remark after
Lemma~\ref{lem:subalgebra}). As for the latter
features, it has been shown that their appearance was due
to the fact that they did not take into account
both electric and magnetic fields.\cite{anti-shimony}
However, even if Hultgren and Shimony had used them
they could have only repaired some faulty Greechie diagrams.
In particular, they could have patched the missing links in their
Fig.~3 (dashed lines) and with them their lattice would read: 
{\tt 123,456,789,ABC,58B} (using MMP hypergraph encoding,
described below).

As for the KS setup, S.~Kochen and E.~P.~Specker\cite{koch-speck}
in their proof used a {\em partial Boolean
algebra} (PBA), which is a very general class of algebras.
The closed subspaces of a Hilbert space form a particular,
specialised PBA.  However, conditions that make  PBA
isomorphic to a lattice of Hilbert space subspaces have not been
discovered, although steps in that direction have been taken by
D.~Smith.\cite{smith99,smith03,smith04}
The equivalence of PBA and atomic ortholattices was proved by
I.~Pitowsky in 1982.\cite{pitowski} Apparently misled by this
equivalence, some authors have represented KS setups by means
of Greechie diagrams in a series of
publications.\cite{svozil-tkadlec,svozil-book-ql,tka98,tkadlec,tkadlec-2}
In Sec.~\ref{sec:represent}, we show that KS setups
cannot be described by means of Greechie diagrams because
Greechie diagrams are not subalgebras of a Hilbert lattice.

Now, in Sec.~\ref{sec:represent} we show that both a lattice
reconstruction of quantum mechanics and a lattice
description of KS setups must take nonorthogonal subsets into
account. They are required by the conditions and equations that
must hold in every Hilbert space.\footnote{Pitowski
(Ref.~\onlinecite{pitowski}, p.~392) says:  ``Kochen and Specker (1967)
constructed a finitely generated sublattice L' of L for which
no truth function exists,'' but neither he nor Kochen and
Specker gave a blueprint for such a lattice, i.e., we do
not have their constructive definition.}
This is the reason why KS setups cannot be described by means of
Greechie diagrams, as we prove for all known spin-1 KS
setups, notably Kochen-Specker's\cite{koch-speck}, Peres'\cite{peres},
Kernaghan's\cite{kern}, Bub's\cite{bub}, and
Conway-Kochen's\cite{bub}.

We also find a way to obtain lattices that we {\em can}
use to describe a quantum setup to any desired degree of
accuracy. They make use of subspaces that contain
non-orthogonal vectors and/or are spanned by them.
The subspaces that appear in them are filtered by the
aforementioned conditions and equations that must hold in
every Hilbert space. We call such lattices
MMPLs (see Def.~\ref{def:mmpl} and Fig.~\ref{fig:bub-proof}).

However, our programs written for a generation of arbitrary
MMP hypergraphs that can be used for a construction
of MMPLs with more than 30 vectors take too much time.
Therefore, we consider lattices that have some of the
properties MMPLs require and lack some others, with the
idea---which turns out to be rewarding---of getting
lattices with more than 40 vectors that can be obtained
faster and that can in turn give all interesting MMPLs by
means of different very fast algorithms and programs.
For instance, to obtain all 4-dim KS sets with 18 through
24 vectors requires several months on a cluster with 500
3GHz CPUs, while in Ref.~\onlinecite{pmm-2-09} we found an
algorithm and a program to obtain them all from a
single KS set with 24 vectors in less than 10 min on
a single PC. This 24 vector KS set also belongs to the
aforementioned class of lattices that have ``some of the
properties MMPLs require and lack some others.''
Vectors correspond to atoms in lattices and to vertices in
MMP hypergraphs, and tetrads correspond to blocks in
lattices and edges in MMP hypergraphs. MMP hypergraphs are
defined in Ref.~\onlinecite{pmmm04b} and in Sec.~\ref{sec:represent}, 
Def.~\ref{def:mmp}.

The aforementioned ``10 min'' method we call a
{\em stripping technique}.\cite{pmm-2-09}
It consists in stripping blocks off of a single
initial KS set with 24 vertices (vectors) and 24
edges (tetrads) until we reach the  smallest such
sets---called {\em critical KS sets}---in the sense
that any of them would cease to be a KS set if we
stripped any further blocks away. The technique
provided us with all 1232 KS subsets with vector component
values from \{-1,0,1\} contained in the 24-24 class
of KS \{-1,0,1\} set.

We also applied the same technique to 60-60 KS sets
that we obtained from a 60-75 set and generated a huge
number of critical sets (for 60-65 through 60-75 we
rigorously verified that no critical set exists and
for 60-61 through 60-64 we confirmed that statistically
with a high confidence). All the KS sets and critical
sets we generated in this way form a new KS class (we
call it ``60-75 KS class'') which is
disjoint from the 24-24 class.\cite{mp-nm-pk-10}
The smallest critical set from this class is a
13-26 KS set shown in Fig.~1 of Ref.~\onlinecite{mp-nm-pk-10}.

In the above generation of 4-dim KS sets by the stripping technique
we were fortunate to find covering KS sets with the same number
of vertices and edges (24-24, 60-60). For 3-dim KS sets no such
covering set with the the same number of vertices and edges is
known so in this paper we pave the road of its generation by
means of bipartite graphs (see Secs.~\ref{sec:one-state} and
\ref{sec:one-state-p}).

As we report in Sec.~\ref{sec:one-state}, such a generation of
bipartite graphs is still computationally too demanding. We
previously considered 3-dim systems with equal number of atoms
(vertices) and blocks (edges) with up to 38 atoms and
blocks.\cite{pavicic-rmp09} Now we use much faster algorithms
and programs and are able to reach 41 atoms and blocks. This is still
not enough for a realistic system, but we obtain
several important properties of such classes of lattices that might
help us to obtain even better algorithms and reach the 50 atoms
required for generation of realistic KS setups with the help of
the {\em stripping technique}.

The results we invoke and make use of are well-known in
lattice theory. They have not been reformulated
in Hilbert space theory itself, so, we present all our results in
the lattice theory, and only when it would really help the reader
to see what a Hilbert-space version of particular properties and axioms
would look like, do we formulate some result directly in the
Hilbert space parlance as, e.g., in Theorems \ref{th:hs-ssnoa}
and \ref{th:hs-noa}. Hence for the reader who is not too
familiar with the lattice theory, we first introduce and characterise
its basic notions in Sec.~\ref{sec:ortho}, and here we give a
general framework in which we shall make use of the
lattice theory.

A spin state of a system is assumed to be repeatedly prepared,
manipulated, and/or filtered by a device. The directions of vectors
of the spin projections coincide with the orientations of the device.
Hilbert space subspaces that contain these vectors form lattices.
To distinguish between device orientations and spin orientations
we use the term {\em experimental setup} to mean a description of
the devices and their fields. We use the term {\em formalised setup}
to mean a theoretical description of the quantum systems.

We start with a very general class of lattices---orthomodular
lattices (OMLs) (see Def.~\ref{def:OML-MOL-BA}). Elements of spin-1
OMLs correspond to subspaces (1-dim rays and 2-dim planes) spanned by
Hilbert space vectors which must satisfy two classes of conditions:

(1)  {\em Equations}, e.g., the orthoarguesian
and Godowski equations (see Table~\ref{tab:eqns} for a summary
of these and other equations mentioned);

(2) {\em Quantified expressions}, e.g.,
the superposition principle [Def.~\ref{def:hl}(3); Eq.~(\ref{eq:superp})].

They  are essential for understanding the
ramification of all quantum setups:

(1) {\em Equations} that fail in a subalgebra
of a lattice will also fail in the lattice
(see Lemma~\ref{lem:subalgebra} below).  So no experimental setup
for which quantum mechanical equations cannot have a solution can
be used for measuring properties of a quantum system. Such setups
are non-quantum setups;

(2) {\em Quantified expressions} that fail in a
subalgebra of a lattice may, however, pass in the lattice
(see the remark after Lemma~\ref{lem:subalgebra} below).
Smaller setups, in which e.g.~superposition cannot be measured, are
``sub-setups'' of setups in which superposition is possible.

{\em Quantum setups} and {\em quantum lattices} refer to systems
whose OMLs are subalgebras of a Hilbert lattice. {\em Semi-quantum
lattices} refer to systems whose OMLs are not subalgebras of a Hilbert
lattice. Examples of the former are proper KS lattices
in the sense of being subalgebras of a Hilbert lattice.

Semi-quantum lattices with equal number of atoms and blocks we
consider are atomic lattices. They admit real-valued and
vector states, satisfy superposition, and yet violate, e.g.,
orthoarguesian equations. To deal with them we can use
Greechie diagrams because we consider lattices that consist
of concatenated orthogonal triples and are not subalgebras
of a Hilbert lattice.

To generate semi-quantum lattices we proceed as follows.
We first use algorithms that exhaustively generate cubic
bipartite graphs. We then show that they are equivalent
to MMP hypergraphs which in turn
correspond to OMLs with equal numbers of atoms and blocks.
We generate OMLs with up to 41 of atoms and blocks, and prove
that they all have the above features. The obtained OMLs narrow
down the non-quantum classes of OMLs and might enable us
to generate quantum classes of OMLs of high
complexity and KS setups.   They also enable us to obtain
several new results in Hilbert lattice theory that rely on the
features that the generated OMLs possess. In Sec.~\ref{sec:one-state},
we analyse the properties of the OMLs obtained in Sec.\
\ref{sec:one-state-p}, and provide a new type of graphical
representation for them in Sec.~\ref{sec:figs}. We discuss the
obtained results in Sec.~\ref{sec:conclusions}.

The ``negative results'' that we consider in this paper (classes
of lattices that do {\em not} pass particular equations) we have
recently used as a tool for generating other equations
\cite{mp-alg-hilb-eq-09} and, in the case of the aforementioned
4-dim KS sets, for generating new KS sets.

Our results also provide us with novel algorithms and results
in the theory of bipartite graphs and hypergraphs.
Lattices that do not admit strong sets of states serve
as inputs to algorithms for finding new Hilbert lattice equations,
and lattices that admit just one state serve for establishing
new lattice features and theorems.\cite{shultz74,navara08}

Bipartite graphs have recently been studied extensively in the
field of quantum information.
A bipartite entanglement of the states constructed
from the algebra of a finite group with a bilocal
representation ($G$) acting on a separable reference state
has been studied in Ref.~\onlinecite{hamma}.
If $G$ is a group of spin flips acting on a set of qubits, these
states are locally equivalent to bipartite (two-colorable) graph
states and they include GHZ, CSS, cluster states,  etc.
Equivalence of CSS states (of which GHZ states are a special case)
and bipartite graph states has been shown in Ref.~\onlinecite{chen}.

Graph states form class of multipartite entangled states
associated with combinatorial graphs (see, e.g.,
Refs.~\onlinecite{heinbriegel04} and \onlinecite{cosentino})
and have applications in diverse areas of quantum
information processing, such as quantum error correction
and the one-way model.

On the other hand, bipartite graphs have been shown to have
an important application for quantum search and related
quantum walks, span-programs, and search algorithms such
as Grover's.\cite{carneiro-knight,reichard-arXiv}

\section{\label{sec:ortho}Preliminary Definitions and Theorems
and the Semi-Quantum Lattices}

This section covers most definitions and background material.
It is organized as follows.
\begin{itemize}
\item Hilbert lattices (Subsection~\ref{subsec:hl})
\item Overview of equations holding in Hilbert lattices (Subsection~\ref{subsec:eqns})
\item States (Subsection~\ref{subsec:states})
\item Vector-valued states (Subsection~\ref{subsec:vectorstates})
\item Superposition (Subsection~\ref{subsec:superpos})
\item Orthoarguesian equations (Subsection~\ref{subsec:noa})
\item Greechie diagrams (Subsection~\ref{subsec:greechie})
\item Semi-quantum lattices (Subsection~\ref{subsec:semi})
\end{itemize}

\subsection{Hilbert lattices}\label{subsec:hl}

The closed subspaces of a Hilbert space $\mathcal{H}$ form an algebra
called $\mathcal{C}(\mathcal{H})$, which is a member of the class of
orthomodular lattices (OML).  An OML, in turn, is a member of a more
general class called OL (ortholattices).  We will first define OLs,
OMLs, and related structures, then we will describe how the closed
subspaces of Hilbert space form a member of (some) of these
classes of structures.

We define OL as follows, along with auxiliary constants $0$ and $1$, an
ordering relation, and an implication operation.  The binary operations
$\cup$ and $\cap$ are called {\em join} and {\em meet} respectively, and
the unary operation $'$ is called {\em orthocomplementation}.  Recall
that an {\em algebra} is an $n$-tuple consisting of a base set and $n-1$
operations on that base set.

\begin{definition}\label{def:ourOL}
An {\em ortholattice}, {\rm OL\/}, is an algebra
$\langle{\mathcal{OL}}_0,',\cup,\cap\rangle$
such that the following conditions are satisfied for any
$a,b,c\in \,{\mathcal{OL}}_0$ {\rm \cite{mpqo02}}:
$a\cup b\>=\>b\cup a$, $(a\cup b)\cup c\>=\>a\cup (b\cup c)$,
$a''\>=\>a$, $a\cup (b\cup b\,')\>=\>b\cup b\,'$,
$a\cup (a\cap b)\>=\>a$, and $a\cap b\>=\>(a'\cup b\,')'$.
In addition, since $a\cup a'=b\cup b\,'$ for any $a,b\in
\,{\mathcal{OL}}_0$, we define the {\em greatest element of
the lattice} {\rm (1)} and the {\em least element of the
lattice} {\rm (0)}, $\textstyle{1}\,{\buildrel\rm def\over=}a\cup a'$
and \ $\rm\textstyle{0}\,{\buildrel\rm def\over =}a\cap a'$, respectively
and the {\rm ordering relation ($\le$) on the lattice}:
$a\le b\ \quad{\buildrel\rm def\over\Longleftrightarrow}\quad\ a\cap b=a
\quad\Longleftrightarrow\quad a\cup b=b$. Quantum (Sasaki) implication
 is defined as $a\to b=a'\cup(a\cap b)$.
\end{definition}

When we say a lattice is an OL (or an OML, etc.) we mean that the lattice
is a member of the class OL (OML, etc).

By adding an additional condition, we can restrict the class OL to
become the successively smaller (less general) classes OML, MOL, and BA
as follows.

\begin{definition}\label{def:OML-MOL-BA}
An {\em ortholattice} {\rm (OL)} in which
\begin{align}
b\le a\ \&\ c\le a'\ \Rightarrow\ &a\cap(b\cup c)=(a\cap
b)\cup(a\cap c),\label{eq:oml} \\
b\le a\ \Rightarrow\ &a\cap(b\cup
c)=(a\cap b)\cup(a\cap c),\label{eq:mod}\\
\text{\rm or }\qquad&a\cap(b\cup c)=(a\cap b)\cup(a\cap c)
\end{align}
holds, is an orthomodular lattice {\rm (OML)}, modular
ortholattice {\rm (MOL)}, or Boolean algebra {\rm (BA)},
respectively.
\end{definition}

Our primary interest is in the subclass of OML called HL (Hilbert
lattices).

\begin{definition}\label{def:hl}\footnote{For additional
definitions of the terms used in this section see
Refs.~\onlinecite{beltr-cass-book,holl95,kalmb83}}
An orthomodular lattice that satisfies the following
con\-di\-tions is a {\em Hilbert lattice} ({\rm HL}).
\begin{enumerate}
\item {\em Completeness:\/}
The meet and join of any subset of
an {\rm HL} exist.
\item {\em Atomicity:\/}\label{atomicity}
Every non-zero element in an {\rm HL} is greater
than or equal to an atom. (An {\em atom} $a$ is a non-zero lattice element
with $0< b\le a$ only if $b=a$.)
\item {\em Superposition principle:\/}
(The atom $c$
is a {\em superposition} of the atoms $a$ and $b$ if
$c\ne a$, $c\ne b$, and $c\le a\cup b$.)
\begin{description}
\item[{\rm (a)}] Given two different atoms $a$ and $b$, there is at least
one other atom $c$, $c\ne a$ and $c\ne b$, that is a superposition
of $a$ and $b$.
\item[{\rm (b)}] If the atom $c$ is a superposition of  distinct atoms
$a$ and $b$, then atom $a$ is a superposition of atoms $b$ and $c$.
\end{description}
\item {\em Minimum height:\/} The lattice contains at least
two elements $a,b$ satisfying: $0<a<b<1$.
\end{enumerate}
\end{definition}

These conditions imply an infinite number of atoms in HL, as
shown by Ivert and Sj{\"o}din.\cite{ivertsj}

With suitably defined operations, the closed set of subspaces of
a Hilbert space, $\mathcal{C}(\mathcal{H})$, can be shown to be
a Hilbert lattice (a member of HL).   The meet operation $a\cap b$
corresponds to the
set intersection ${\mathcal H}_a\bigcap{\mathcal H}_b$
of subspaces ${\mathcal
H}_a,{\mathcal H}_b$ of Hilbert space ${\mathcal H}$; the ordering relation
$a\le b$ corresponds to ${\mathcal H}_a\subseteq{\mathcal H}_b$; the
join operation $a\cup b$ corresponds to the smallest closed subspace of
$\mathcal H$ containing the set union
${\mathcal H}_a\bigcup{\mathcal H}_b$; and
the orthocomplementation operation $a'$ corresponds
to ${\mathcal H}_a^\perp$, the set of vectors orthogonal to all vectors in
${\mathcal H}_a$. Within Hilbert space there is also an operation which
has no parallel in the Hilbert lattice: the sum of two subspaces
${\mathcal H}_a+{\mathcal H}_b$, which is defined as the set of sums of vectors
from ${\mathcal H}_a$ and ${\mathcal H}_b$. We also have
${\mathcal H}_a+{\mathcal H}_a^\perp={\mathcal H}$, i.e.~the
subspace that equals the whole of Hilbert space itself. One can define
all the lattice operations on a Hilbert space itself following the above
definitions (${\mathcal H}_a\cap{\mathcal H}_b={\mathcal H}_a\bigcap{\mathcal H}_b$,
etc.). Thus we have
${\mathcal H}_a\cup{\mathcal H}_b=\overline{{\mathcal H}_a+{\mathcal H}_b}=
({\mathcal H}_a+{\mathcal H}_b)^{\perp\perp}=
({\mathcal H}_a^\perp\bigcap{\mathcal
H}_b^\perp)^\perp$,\cite[p.~175]{isham} where
$\overline{{\mathcal H}_c}$ is the closure of ${\mathcal H}_c$, and therefore
${\mathcal H}_a+{\mathcal H}_b\subseteq{\mathcal H}_a\cup{\mathcal H}_b$.
When ${\mathcal H}$ is finite-dimensional or when
the closed subspaces ${\mathcal H}_a$ and  ${\mathcal H}_b$ are orthogonal
to each other then ${\mathcal H}_a+{\mathcal H}_b=
{\mathcal H}_a\cup{\mathcal H}_b$.~(Refs.~\onlinecite[pp.~21-29]{halmos}, 
\onlinecite[pp.~66,67]{kalmb83}, \onlinecite[pp.~8-16]{mittelstaedt-book})

Using these operations, it is straightforward to verify that closed
subspaces $\mathcal{C}(\mathcal{H})$ of a finite- or infinite-dimensional
 Hilbert space form an OML (Ref.~\onlinecite[pp.~66,67]{kalmb83}) and
more specifically an HL (Ref.~\onlinecite[pp.~105--108,166,167]{beltr-cass-book}).
[In the case of a finite Hilbert space, $\mathcal{C}(\mathcal{H})$ is
also an MOL.~(Ref.~\onlinecite[p.~107]{beltr-cass-book})]
Specifically, we have the following theorem.

\begin{theorem}
Let $\mathcal H$ be a finite- or infinite-dimensional Hilbert space
over a field $\mathcal K$ and let
\begin{align}
{\mathcal C}({\mathcal H})\ {\buildrel\rm def\over =}\ \{ {\mathcal X}\
\subseteq {\mathcal H}\ | \>{\mathcal X}^{\perp\perp}={\mathcal X}\}
\end{align}
be the set of all closed subspaces of $\mathcal H$.
Then ${\mathcal C}({\mathcal H})$ is a Hilbert lattice
relative to:
\begin{align}
a\cap b={\mathcal X}_a\cap {\mathcal X}_b
\ \ \text{\rm and}\ \ a\cup b
 =({\mathcal X}_a+{\mathcal X}_b)^{\perp\perp}.\qquad
\end{align}
\end{theorem}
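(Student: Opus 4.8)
The plan is to verify directly that $\mathcal{C}(\mathcal{H})$, equipped with the stated meet and join and with orthocomplementation $a' := \mathcal{X}_a^\perp$, satisfies every clause of Definition~\ref{def:hl}: that it is an ortholattice, that it is orthomodular, and that it meets the four defining HL conditions. For the ortholattice axioms of Definition~\ref{def:ourOL}, the pivotal fact is that a subspace $\mathcal{X}$ is closed precisely when $\mathcal{X}^{\perp\perp} = \mathcal{X}$, which is the membership condition for $\mathcal{C}(\mathcal{H})$ and which immediately yields the involution law $a'' = a$. Set intersection is automatically a closed subspace, and $a \cup b = (\mathcal{X}_a + \mathcal{X}_b)^{\perp\perp}$ is exactly the smallest closed subspace containing $\mathcal{X}_a$ and $\mathcal{X}_b$; commutativity, associativity, and absorption then follow from the corresponding properties of intersection, algebraic sum, and closure, while De Morgan's law $a \cap b = (a' \cup b')'$ follows from the identity $(\mathcal{X}_a^\perp + \mathcal{X}_b^\perp)^\perp = \mathcal{X}_a \cap \mathcal{X}_b$ already recorded in the text.

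Next I would establish the orthomodular law, Eq.~(\ref{eq:oml}). The cleanest route is via the projection theorem: every closed subspace admits an orthogonal complement, so whenever $\mathcal{X}_b \subseteq \mathcal{X}_a$ one has the orthogonal decomposition $\mathcal{X}_a = \mathcal{X}_b \oplus (\mathcal{X}_a \cap \mathcal{X}_b^\perp)$. Since the two summands are mutually orthogonal, their algebraic sum is already closed and hence coincides with their lattice join---the special case, quoted in the excerpt, in which $+$ and $\cup$ agree. Translating back into lattice language, $b \le a$ gives the standard orthomodular identity $a = b \cup (a \cap b')$, from which the distributive form (\ref{eq:oml}) under the hypotheses $b \le a$ and $c \le a'$ follows.

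I would then dispatch the four conditions of Definition~\ref{def:hl}. Completeness holds because the meet of an arbitrary family is $\bigcap_i \mathcal{X}_i$, closed as an intersection of closed sets, while its join is $\bigl(\bigcap_i \mathcal{X}_i^\perp\bigr)^\perp$, again closed. Atomicity is witnessed by the one-dimensional rays: any nonzero closed subspace contains a nonzero vector whose span is a ray lying below it, and nothing nonzero lies strictly below a ray. For the superposition principle, both parts reduce to elementary two-dimensional geometry inside the plane $a \cup b$: if $a$ and $b$ are the rays spanned by $\mathbf{u}$ and $\mathbf{v}$, then the ray spanned by $\mathbf{u} + \mathbf{v}$ is a third atom below their join, establishing (a), while the fact that any two of three distinct rays in a two-dimensional subspace span that subspace gives the symmetry required for (b). Minimum height is immediate once $\dim \mathcal{H} \ge 3$: take a ray $a$ inside a plane $b$, so that $0 < a < b < 1$.

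The main obstacle---and the only place where infinite dimensionality genuinely intervenes---is the correct handling of the join. In infinite dimensions the algebraic sum $\mathcal{X}_a + \mathcal{X}_b$ of two closed subspaces need not itself be closed, which is exactly why the join must be taken as the closure $(\mathcal{X}_a + \mathcal{X}_b)^{\perp\perp}$ rather than as the bare sum; all of the lattice identities, and in particular the orthomodular decomposition above, ultimately rest on the projection theorem, which guarantees both that this closure is well behaved and that orthogonal sums stay closed. Once this point is secured, the remaining verifications are the routine translations of standard Hilbert-space facts found in the references cited above (Beltrametti--Cassinelli, Kalmbach), and the theorem follows.
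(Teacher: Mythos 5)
Your proof is correct and follows essentially the route the paper itself takes: the paper does not actually prove this theorem but delegates it to standard references (Kalmbach, pp.~66--67, for the OML part and Beltrametti--Cassinelli, pp.~105--108, 166--167, for the HL part), and your direct verification---closedness of intersections, the double-orthocomplement closure for joins, the projection theorem yielding the orthomodular decomposition $\mathcal{X}_a = \mathcal{X}_b \oplus (\mathcal{X}_a \cap \mathcal{X}_b^\perp)$, rays as atoms, and two-dimensional geometry for the superposition principle---is precisely the argument those references contain and that the paper's own discussion preceding the theorem (the identity $({\mathcal H}_a^\perp\bigcap{\mathcal H}_b^\perp)^\perp$ for the join, and the fact that orthogonal closed subspaces have closed sum) gestures at. The only point worth flagging is that the minimum-height condition of Def.~\ref{def:hl} forces $\dim\mathcal{H}\ge 3$, so the theorem as literally stated fails for one- and two-dimensional spaces; you note this caveat correctly, while the paper's statement silently omits it.
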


A more difficult problem is to determine, given an HL, how much of
Hilbert space can be reconstructed from it.  An {\em isomorphism} is a
bijection between two lattices that preserves the lattice ordering (or
equivalently the meet and join operations).  An {\em ortho-isomorphism}
is an isomorphism that also preserves the orthocomplement operation.
One can prove the following representation
theorem.\cite{mackey,maclaren,varad}

\begin{theorem}\label{th:repr}
For every Hilbert lattice
({\rm HL}), there exists a field $\mathcal K$ and a Hilbert space
$\mathcal H$ over $\mathcal K$ such that the set of closed
subspaces of the Hilbert space, ${\mathcal C}({\mathcal H})$, is
ortho-isomorphic
\index{ortho-isomorphism}%
to {\rm HL}.
(Note that multiplication is not necessarily commutative in this field,
which some authors call a ``division ring'' or ``skew field.'')
\end{theorem}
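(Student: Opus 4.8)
The plan is to follow the classical coordinatization route of Mackey, Maclaren, and Varadarajan \cite{mackey,maclaren,varad}: reconstruct the underlying vector space from the lattice as a projective geometry, recover a Hermitian form from the orthocomplementation, and then verify that the orthomodularity of HL makes that form Hilbertian.

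First I would extract from the HL axioms (Def.~\ref{def:hl}) the combinatorial properties needed to recognize the lattice as a projective geometry. The superposition principle, Def.~\ref{def:hl}(3), is the crucial input: part (b) yields the covering (exchange) property---if $p$ is an atom and $p\not\le b$ then $p\cup b$ covers $b$---while part (a) guarantees irreducibility, so that the associated geometry is connected rather than a direct product of smaller factors. Completeness and atomicity together with this covering property make HL atomistic, i.e.\ every element is the join of the atoms beneath it. The minimum-height condition, Def.~\ref{def:hl}(4), together with the infinitude of atoms forced by these axioms, ensures the dimension is large enough (length $\ge 4$) for the coordinatization theorems to apply without low-dimensional exceptions.

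Next I would invoke the fundamental theorem of projective geometry in its lattice form: a complete, irreducible, atomistic lattice with the covering property and length $\ge 4$ is isomorphic to the subspace lattice $L(V)$ of a (left) vector space $V$ over a division ring $\mathcal K$ \cite{mackey,maclaren,varad}. Under this isomorphism the orthocomplementation $'$, being an order-reversing involution of period two, becomes a polarity of the projective geometry. By the Birkhoff--von Neumann/Baer theory of polarities, every such polarity is induced by a nondegenerate $\theta$-Hermitian form $\langle\cdot,\cdot\rangle$ on $V$, where $\theta$ is an involutory anti-automorphism of $\mathcal K$; concretely $a'$ corresponds to the orthogonal subspace $M^\perp=\{x\in V:\langle x,m\rangle=0\text{ for all }m\in M\}$. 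That $\langle\cdot,\cdot\rangle$ is definite (anisotropic), so that atoms correspond to one-dimensional rays with $\langle x,x\rangle\ne 0$, follows from the fact that no nonzero atom is orthogonal to itself, a consequence of orthomodularity together with $a\cap a'=0$.

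Finally I would translate the orthomodular law back into a property of the form. Orthomodularity of HL, Eq.~(\ref{eq:oml}), says precisely that every $\perp$-closed subspace $M$ satisfies $M\oplus M^\perp=V$; this is the defining condition of an orthomodular (generalized Hilbert) space, and it yields $M^{\perp\perp}=M$ for all such $M$. Setting $\mathcal H=(V,\langle\cdot,\cdot\rangle)$, the closed subspaces $\mathcal C(\mathcal H)$ are exactly the $\perp$-closed subspaces, atoms match rays, meet matches intersection, join matches $(\cdot+\cdot)^{\perp\perp}$, and $'$ matches ${}^\perp$; hence the isomorphism of geometries upgrades to an ortho-isomorphism $\mathcal C(\mathcal H)\cong\mathrm{HL}$. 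The hard part will be the orthomodularity step in the \emph{infinite-dimensional} case: the fundamental theorem must be deployed in the atomistic/continuous setting, and showing that the coordinatized Hermitian space is genuinely Hilbertian---that $M\oplus M^\perp=V$ holds for \emph{every} closed $M$, not merely the finite-dimensional ones---is the technically deep point. Pinning the division ring down to $\mathbb R$, $\mathbb C$, or $\mathbb H$ would additionally require an infinite orthonormal sequence and Sol\`er's theorem \cite{soler}, but the statement as given only asks for a division ring, so that refinement is not needed here.
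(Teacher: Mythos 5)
You should first be clear about what you are comparing against: the paper does not prove Theorem~\ref{th:repr} at all --- it quotes it as a known result, citing Mackey, Maclaren, and Varadarajan \cite{mackey,maclaren,varad} --- so the only meaningful benchmark is the coordinatization proofs in those sources, which is indeed the route you sketch (projective geometry, polarity via Birkhoff--von Neumann, then orthomodularity forcing $M\oplus M^\perp=V$). The genuine gap is your very first step: the claim that the superposition principle, Def.~\ref{def:hl}(3b), ``yields the covering (exchange) property.'' In every one of the cited coordinatization theorems the covering law is a separate \emph{hypothesis}; it is not derived from superposition, and Definition~\ref{def:hl} conspicuously omits it. Part (3b) is exchange restricted to triples of atoms: it only controls which atoms lie below a join $a\cup b$ of two atoms (it makes ``lines'' well defined), whereas the covering property concerns $p\cup b$ for an \emph{arbitrary} element $b$, and in a complete, non-algebraic lattice such as an HL one cannot reduce $p\le b\cup q$ to finitely generated joins. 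The easy implication runs the other way: given covering, (3b) is immediate, since $b\cup c$ and $a\cup b$ both cover the atom $b$ and one lies below the other, so they coincide. Without covering you also cannot verify the Veblen--Young (Pasch) axiom needed before the fundamental theorem of projective geometry can even be invoked, since that axiom constrains joins of three atoms, about which (3b) says nothing. So either you must prove that the paper's HL axioms imply the covering law --- a nontrivial claim that appears nowhere in the cited literature --- or every subsequent step of your chain rests on unverified hypotheses.

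There is a second, independent error in your dimension count. You assert that minimum height together with the infinitude of atoms gives length $\ge 4$, ``without low-dimensional exceptions.'' But $\mathcal{C}(\mathcal{H})$ for a $3$-dimensional Hilbert space satisfies all of the paper's HL axioms (the minimum-height condition only demands $0<a<b<1$) and has length exactly $3$ with infinitely many atoms; an infinite supply of atoms does not raise the height. For length-$3$ lattices (projective planes) the fundamental theorem of projective geometry is unavailable: coordinatization of a plane requires Desargues' theorem, which is automatic only in projective dimension $\ge 3$ and is not a consequence of the lattice axioms for an abstract orthocomplemented plane. The classical theorems you invoke avoid this by assuming rank $\ge 4$ outright. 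As written, your argument therefore silently excludes exactly the height-$3$ case that the rest of the paper (spin-$1$ systems and $3$-dim KS setups) is concerned with; that case needs a separate argument or an explicit additional hypothesis.
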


In order to determine the field over which the Hilbert space
in Theorem \ref{th:repr} is defined, we make use of a
theorem proved by Maria Pia Sol{\`e}r.~\cite{soler,holl95} 
First, we need a definition.

\begin{definition}\label{def:har-conj}
Let $p$ and $q$ be orthogonal atoms in a Hilbert lattice and $c$
be an atom different from $p$ and $q$ such that $c\le p\cup q$.
Let $x$ be any atom such that
$x \nleq p\cup q$.  Let $y$ an atom different from $x$ and $p$
such that $y\le x\cup p$.
Define $d_1=(c\cup y)\cap(q\cup x)$ and $d_2=(p\cup d_1)\cap (q\cup y)$.
Then $(x\cup d_2)\cap(p\cup q)$ is the (unique) {\em harmonic conjugate} of $c$
with respect to $p$ and $q$.
\end{definition}

Now we can state the following application of Sol{\`e}r's theorem
to an HL lattice.\cite[Th.~4.1]{holl95}
\begin{theorem}\label{th:sol}
The Hilbert space $\mathcal H$ from Theorem \ref{th:repr} is
an infinite-dimen\-sional Hilbert space defined over
a real,
\index{field!of real numbers}%
\index{real numbers!field of}%
complex, or
\index{field!of complex numbers}%
\index{complex numbers!field of}%
quaternion (skew) field
\index{field!of quaternions}%
\index{quaternions!(skew) field of}%
if the following conditions are met:
\begin{itemize}
\item {\em Infinite orthogonality:} The {\rm HL} contains a countably
infinite sequence of orthogonal atoms $p_i, i=1,2,\ldots$
\item {\em Harmonic conjugate condition:} The {\rm HL}
\index{harmonic conjugate}%
contains a corresponding sequence of atoms
$c_i\le p_i\cup p_{i+1}$ such that the harmonic conjugate of
$c_i$ with respect to $p_i,p_{i+1}$ equals
$c'_i\cap(p_i \cup p_{i+1})$.
\end{itemize}
\end{theorem}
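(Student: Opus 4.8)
The plan is to deduce Theorem~\ref{th:sol} from Sol\`er's theorem in its vector-space form, using Theorem~\ref{th:repr} as a dictionary between the lattice and an orthomodular space, and then checking that the two displayed hypotheses are exactly the lattice encodings of the single hypothesis Sol\`er requires: an infinite \emph{orthonormal} (not merely orthogonal) sequence.

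First I would apply Theorem~\ref{th:repr} to the given HL: there is a (possibly skew) $*$-field $\mathcal K$, carrying an involution, and a space $\mathcal H$ over $\mathcal K$ equipped with a Hermitian form making $\mathcal C(\mathcal H)$ orthomodular, together with an ortho-isomorphism $\varphi$ from HL onto $\mathcal C(\mathcal H)$. Under $\varphi$, atoms correspond to one-dimensional subspaces (rays), join and meet to the closed-subspace operations, and orthogonality of atoms to orthogonality of rays, since $\varphi$ preserves $'$. Hence the infinite-orthogonality hypothesis transports to an infinite family of pairwise orthogonal rays $\varphi(p_i)=[u_i]$; choosing a spanning vector $u_i$ of each gives a sequence with $\langle u_i,u_j\rangle=0$ for $i\ne j$. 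This already yields an infinite orthogonal sequence and, in particular, forces $\dim\mathcal H=\infty$.

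The decisive step is the harmonic-conjugate hypothesis. I would first verify that the purely lattice-theoretic recipe of Definition~\ref{def:har-conj} --- the auxiliary atoms $x,y,d_1,d_2$ forming a complete quadrangle over the line $p\cup q$ --- reproduces, under $\varphi$, the classical projective harmonic conjugate (von Staudt's construction), so that the harmonic conjugate of $\varphi(c)$ with respect to $[u_i],[u_{i+1}]$ is coordinate-independent and, for $c=[u_i+u_{i+1}]$, equals $[u_i-u_{i+1}]$. On the other hand $c'\cap(p_i\cup p_{i+1})$ is the orthogonal complement of $[u_i+u_{i+1}]$ inside the plane $[u_i]\oplus[u_{i+1}]$; solving $\langle u_i+u_{i+1},\lambda u_i+\mu u_{i+1}\rangle=0$ shows this complement is $[u_i-u_{i+1}]$ precisely when $\langle u_i,u_i\rangle=\langle u_{i+1},u_{i+1}\rangle$. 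Thus the hypothesis that the harmonic conjugate of $c_i$ equals $c_i'\cap(p_i\cup p_{i+1})$ is exactly the assertion that consecutive spanning vectors can be chosen of equal ``length,'' and an inductive rescaling turns $(u_i)$ into an \emph{orthonormal} sequence.

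With an infinite orthonormal sequence now in hand, I would invoke Sol\`er's theorem~\cite{soler,holl95}: an orthomodular space possessing such a sequence forces the $*$-field $\mathcal K$ to be $\mathbb R$, $\mathbb C$, or the quaternions $\mathbb H$ with the standard involution, and identifies $\mathcal H$ as the associated infinite-dimensional Hilbert space, which together with Theorem~\ref{th:repr} completes the proof. The main obstacle is the middle step: confirming that Definition~\ref{def:har-conj} genuinely computes the projective harmonic conjugate, and then extracting the equal-norm condition cleanly over a possibly noncommutative $*$-field, where one must track on which side scalars act and how the involution enters $\langle\cdot,\cdot\rangle$; by contrast, the orthogonality transport and the final appeal to Sol\`er are routine once the dictionary is fixed.
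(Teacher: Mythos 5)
The paper gives no proof of Theorem~\ref{th:sol} at all---it simply cites Holland's Th.~4.1 \cite{holl95}---and your proposal correctly reconstructs precisely that cited argument: transport through the ortho-isomorphism of Theorem~\ref{th:repr}, identification of the lattice-theoretic harmonic conjugate of Definition~\ref{def:har-conj} with the projective one, so that with representatives chosen so that $c_i=[u_i+u_{i+1}]$ the hypothesis becomes the equal-norm condition $\langle u_i,u_i\rangle=\langle u_{i+1},u_{i+1}\rangle$, followed by a final appeal to Sol\`er's theorem \cite{soler}. The one step to state more carefully is your ``inductive rescaling'': over a skew $*$-field the common value $d=\langle u_i,u_i\rangle$ need not be of the form $\lambda^*\lambda$, so the vectors themselves cannot in general be normalized; instead (as in Holland's proof) one rescales the Hermitian form to $d^{-1}\langle\cdot\,,\cdot\rangle$ with the modified involution $\alpha\mapsto d^{-1}\alpha^*d$, which preserves orthogonality and hence $\mathcal{C}(\mathcal{H})$ while making $(u_i)$ orthonormal, after which Sol\`er's theorem applies verbatim.
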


Thus we do arrive at a full Hilbert space, but as we can see the
axioms for the Hilbert lattices that we used for this purpose are
rather involved. This is because in the past, the axioms were
simply read off from the Hilbert space structure and were
formulated as first-order quantified statements that cannot
be implemented into a quantum computer. As opposed to this, the
equations describing properties of Hilbert lattices and elaborated
on in Defs.~\ref{def:strong}, \ref{def:noa}, Eqs.~(\ref{eq:E-3}),
(\ref {eq:E-4}), (\ref{eq:oa3}), and Th.~\ref{th:noa} are directly
applicable to experimental setups and that is the reason why the
results we obtain in this paper and in the recent previous
paper of ours have not been conjectured previously.

\subsection{Overview of equations holding in Hilbert lattices}
  \label{subsec:eqns}

The families of lattices OL, OML, MOL, and BA are completely
characterized by identities, i.e., equational conditions.  Such
families are called {\em equational varieties}.  Equations, as
opposed to quantified conditions, offer many advantages, such as
fast algorithms for testing finite lattice examples and the
use of tools and techniques from propositional calculus.  At the
very least, the manipulation of identities is much simpler both
conceptually and practically than the use of predicate calculus
to work with quantified conditions.

Until 1975, it was thought that the equations defining OML were the only
ones holding in HL.  Then Alan Day discovered the orthoarguesian
equation that holds in any Hilbert lattice but does not in all
OMLs.~\cite{gr-non-s}  Since then, much progress has been made
in finding many new equations that hold in HL and are independent
from the others.

By Birkhoff's HSP theorem \cite[p.~2]{jipsen}, the family {\rm HL} is
not an equational variety, since a finite sublattice is not an {\rm HL}.
A goal of studying equations that hold in {\rm HL} is to find the
smallest variety that includes {\rm HL}, so that the fewest number of of
non-equational (quantified) conditions such as those in
Def.~\ref{def:hl} will be needed to complete the specification of {\rm
HL}.

First we will summarize the equations known so far that hold in
HLs but not in all OMLs (see
Table~\ref{tab:eqns}).  They fall into three major categories:
geometry-related, state-related, and vector-state-related.
The last hold in all ``quantum'' HLs, i.e., those ortho-isomorphic
to Hilbert spaces with real, complex, or quaternion fields but
not necessarily with other fields.

\begin{table}[htp] \caption{Summary of known equations holding in
  (quantum) Hilbert lattices} \label{tab:eqns}
\begin{center}
\begin{tabular}{|l|l|l|l|}
\hline
Equation & Variety & Based on & Definition\\
\hline
\hline
Orthoarguesian & 4OA & geometry & Eq.~(\ref{eq:noa}) \\
\hline
Generalized OA & $n$OA, $n\ge 3$ & geometry & Eq.~(\ref{eq:noa}) \\
\hline
Mayet's $\mathcal{E}_A$ & $\mathcal{E}_A$ & geometry
               & Ref.~\onlinecite{mp-alg-hilb-eq-09} \\
\hline
Godowski & $n$GO, $n\ge 3$ & states & Th.~\ref{th:god-eq} \\
\hline
Mayet-Godowski & MGO & states & Def.~\ref{def:gge} \\
\hline
Mayet's E-equations & $E_n$,\ $n\ge 3$ & vector
                                 & Eqs.~(\ref{eq:E-3}), \\  [-0.5ex]
          &                & states &  (\ref{eq:E-4})\\
\hline
\end{tabular}
\end{center}
\end{table}

The geometry-related equations are derived using the properties
of vectors and subspace sums that hold in a Hilbert space.
They include Day's original
orthoarguesian equation, the generalized orthoarguesian equations,
and Mayet's $\mathcal{E}_A$ equations.

The state-related equations are derived by imposing states (probability
measures) onto Hilbert lattices, and include Godowski's equations and
Mayet-Godowski equations.  (The justification for doing so is that
such states can be defined in Hilbert space, and we map them back to
HL via the ortho-isomorphism of Th.~\ref{th:repr}.) These
equations are derived by finding finite OMLs that
do not admit the ``strong set of states'' condition
(Def.~\ref{def:strong}) that Hilbert lattices do admit, then analyzing
the strong set of states failure in a prescribed way in order to derive
an equation holding in HL but failing in the finite OML.

Vector-state-related equations are derived by imposing
``states'' onto HLs that map to Hilbert-space vectors instead of real
numbers (again, justified by the fact that such ``states'' can
be defined in Hilbert space).  They do not always hold when the
Hilbert-space field implied
by the representation theorem (Th.~\ref{th:repr}) does not have
characteristic 0. ({\em Characteristic 0} means, roughly, that the
number 1 added to itself repeatedly grows without limit.)
This remarkable property narrows down, from the
equation alone, the possible fields for the Hilbert space.
The real, complex, and quaternion fields of quantum mechanics have
characteristic 0, so vector-state-related equations do hold in all
``quantum'' HLs that have the additional properties demanded
by Sol{\`e}r's theorem in Th.~\ref{th:sol}.  The vector-state-related
equations known to date are Mayet's E-equations.

\subsection{States}\label{subsec:states}

\begin{definition}\label{def:state} A state on a lattice {\rm L}
 is a function $m:{\rm L}\longrightarrow [0,1]$
(for real interval $[0,1]$) such
that $m(1)=1$ and $a\perp b\ \Rightarrow\ m(a\cup b)=m(a)+m(b)$,
where $a\perp b$ means $a\le b'$.
\end{definition}

This implies $m(a)+m(a')=1$ and $a\le b\ \Rightarrow\ m(a)\le
m(b)$.

Now, let us recall that the KS theorem and the Bell
inequalities and equalities are all about states and their
experimental recordings that cannot be predetermined i.e.\
fixed in advance. The latter states might be called ``purely''
quantum,\footnote{Because there are quantum states
that can be predetermined for particular setups---{\em repeatable
measurements}.}  as opposed to those that can be
{\em only} predetermined and are called {\em classical}.
We can formalize these two kinds of states as follows.

\begin{definition}\label{def:strong} A nonempty set $S$ of
states on {\rm L} is called a
strong set of {\em classical\/} states if
\begin{align}
(\exists m \in S)(\forall a,b\in{\rm L})((m(a)=1\Rightarrow
 m(b)=1)\Rightarrow a\le b)\,\label{eq:st-cl}
\end{align}
and a strong set of {\em quantum\/} states if
\begin{align}
(\forall a,b\in{\rm L})(\exists m \in S)((m(a)=1 \Rightarrow
 m(b)=1) \Rightarrow a\le b)\,.\label{eq:st-qm}
\end{align}
We assume that {\rm L} contains more than one element and that
an empty set of states is not strong.
\end{definition}

Two important classes of equations that hold in all OMLs with
strong sets of states (and in particular all HLs), but not in all
OMLs, are the Godowski equations and the more general Mayet-Godowski
equations.  Here we only define them for reference; for theorems
and proofs, see Refs.~\onlinecite{pm-ql-l-hql2,mp-alg-hilb-eq-09}.

\begin{definition}\label{def:god-equiv}Let us call the
following expression the {\em Godowski identity}:
\begin{align}
a_1{\buildrel\gamma\over\equiv}a_n{\buildrel{\rm def}
\over =} & (a_1\to a_2)\cap(a_2\to a_3)\cap\cdots
\cap(a_{n-1}\to a_n)  \notag \\ & \qquad
  \cap(a_n\to a_1),
\ n=3,4,\dots\label{eq:god-equiv}
\end{align}
We define $a_n{\buildrel\gamma\over\equiv}a_1$ in the same way with
variables $a_i$ and $a_{n-i+1}$ swapped.
\end{definition}

\begin{theorem}\label{th:god-eq} Godowski's equations {\em\cite{godow}}
\begin{align}
a_1{\buildrel\gamma\over\equiv}a_3
=&a_3{\buildrel\gamma\over\equiv}a_1
\label{eq:godow3o}\\
a_1{\buildrel\gamma\over\equiv}a_4
=&a_4{\buildrel\gamma\over\equiv}a_1
\label{eq:godow4o}\\
a_1{\buildrel\gamma\over\equiv}a_5
=&a_5{\buildrel\gamma\over\equiv}a_1
\label{eq:godow5o}\\
&\dots \notag
\end{align}
hold in all {\rm OML}s with strong sets of states.
\end{theorem}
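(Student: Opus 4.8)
The plan is to prove the single inequality $a_1\buildrel\gamma\over\equiv a_n\le a_n\buildrel\gamma\over\equiv a_1$; since the right-hand side is by construction the left-hand side after the substitution $a_i\mapsto a_{n-i+1}$, applying the proven inequality to these relabelled elements yields the reverse inequality, and hence equality. Write $F=a_1\buildrel\gamma\over\equiv a_n=\bigcap_{i=1}^n(a_i\to a_{i+1})$ and $B=a_n\buildrel\gamma\over\equiv a_1=\bigcap_{i=1}^n(a_{i+1}\to a_i)$, with cyclic convention $a_{n+1}=a_1$. The engine is the defining property of a strong set of states (Def.~\ref{def:strong}): $x\le y$ holds whenever every state taking value $1$ on $x$ also takes value $1$ on $y$. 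So to get $F\le x$ it suffices to check that $m(F)=1$ forces $m(x)=1$ for every state $m$.

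First I would record the one computation I need, using the Sasaki form $a\to b=a'\cup(a\cap b)$ from Def.~\ref{def:ourOL}. Since $a\cap b\le a$ gives $a\cap b\perp a'$, additivity of a state on orthogonal elements yields $m(a\to b)=1-m(a)+m(a\cap b)$, so $m(a\to b)=1$ iff $m(a)=m(a\cap b)$, and in that case $m(a)=m(a\cap b)\le m(b)$. Now fix any state $m$ with $m(F)=1$. As $F\le a_i\to a_{i+1}$ for each $i$, every forward conjunct has value $1$, so going once around the cycle $m(a_1)\le m(a_2)\le\cdots\le m(a_n)\le m(a_1)$; hence all $m(a_i)$ equal a common value $\mu$, and moreover $m(a_i\cap a_{i+1})=m(a_i)=\mu$ for every $i$.

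The one delicate point is that I must \emph{not} argue ``each backward conjunct has value $1$, therefore so does their meet $B$,'' because $\{z:m(z)=1\}$ is not closed under meets in a general OML: in $MO_2$ one can have $m(a)=m(b)=1$ while $m(a\cap b)=0$. I avoid this by proving $F\le B$ one conjunct at a time. Since $B$ is the greatest lower bound of the elements $a_{j+1}\to a_j$, it is enough to establish $F\le a_{j+1}\to a_j$ separately for each $j$. For fixed $j$ and any state $m$ with $m(F)=1$, the data above give $m(a_{j+1}\to a_j)=1-m(a_{j+1})+m(a_{j+1}\cap a_j)=1-\mu+\mu=1$, using $m(a_{j+1}\cap a_j)=m(a_j\cap a_{j+1})=\mu$; strongness then delivers $F\le a_{j+1}\to a_j$. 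Taking the meet over all $j$ gives $F\le B$, and applying this to the relabelled elements $a_i\mapsto a_{n-i+1}$ gives $B\le F$, so $F=B$. This is the equation $n$GO, and specializing to $n=3,4,5,\dots$ produces (\ref{eq:godow3o})--(\ref{eq:godow5o}) and their continuation. The main obstacle is exactly the meet step just discussed; once it is dissolved by reducing $F\le B$ to inequalities against the individual conjuncts of $B$, the rest is the routine state bookkeeping sketched above.
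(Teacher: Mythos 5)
Your proof is correct, and it is worth noting that the paper itself contains no proof of this theorem at all: it states the result and defers entirely to the cited references (Godowski's original paper and Refs.~\onlinecite{pm-ql-l-hql2,mp-alg-hilb-eq-09}), so your argument is a genuine, self-contained reconstruction rather than a variant of something in the text. What you produce is essentially the standard state-bookkeeping proof that those references give: the computation $m(a\to b)=1-m(a)+m(a\cap b)$ (valid since $a\cap b\perp a'$), the chain $m(a_1)\le\cdots\le m(a_n)\le m(a_1)$ forcing a common value $\mu$ with $m(a_i\cap a_{i+1})=\mu$, and then strongness applied to each reversed conjunct. The most important thing you did is to identify and correctly dissolve the one real trap: since the set $\{z:m(z)=1\}$ need not be closed under meets (your $MO_2$ example is right --- there $a\not\perp b$, so a state may assign $m(a)=m(b)=1$ yet $m(a\cap b)=0$), one cannot pass from ``every conjunct of $a_n{\buildrel\gamma\over\equiv}a_1$ has value $1$'' to ``the meet has value $1$.'' Your workaround --- use strongness once per conjunct to get the lattice inequality $a_1{\buildrel\gamma\over\equiv}a_n\le a_{j+1}\to a_j$ for each $j$, and only then take the meet, which is a purely lattice-theoretic step --- is exactly the correct mechanism, and the final relabeling $a_i\mapsto a_{n-i+1}$ does convert the proven inequality into its converse, yielding equality. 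Two small remarks: your argument never invokes orthomodularity itself, which is consistent with the paper's Theorem~\ref{th:strong-oml} (an ortholattice admitting a strong set of quantum states is automatically orthomodular), and your use of Def.~\ref{def:strong} is the right reading of its $(\forall a,b)(\exists m)$ quantifier structure --- to conclude $x\le y$ it suffices that every state with $m(x)=1$ satisfies $m(y)=1$.
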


We call these equations {\rm $n$-Go} {\rm (}{\rm 3-Go},
{\rm 4-Go}, etc.\/{\rm )}.  We also denote by
{\rm $n$GO} {\rm (}{\rm 3GO}, {\rm 4GO}, etc.\/{\rm )} the
{\rm OL} variety determined by {\rm $n$-Go}, and we call
equation {\rm $n$-Go}  the {\rm $n$GO law}.

Next, we define a generalization of this family, first described by
Mayet. \cite{mayet85}  These equations also hold in all lattices
admitting a strong set of states, and in particular in all HLs.

\begin{definition}\label{def:gge}
A {\em Mayet-Godowski equation} ({\rm MGE}) is an equality
with $n\ge 2$ conjuncts on each side:
\begin{align}
t_1 \cap \cdots\cap t_n =& u_1 \cap \cdots\cap u_n
\end{align}
where each conjunct $t_i$ (or $u_i$) is a term consisting of
either a variable or a disjunction of two or more distinct
variables:
\begin{align}
t_i =& a_{i,1}\cup \cdots\cup a_{i,p_i}\qquad \mbox{i.e., $p_i$ disjuncts}\\
u_i =& b_{i,1}\cup \cdots\cup b_{i,q_i}\qquad \mbox{i.e., $q_i$ disjuncts}
\end{align}
and where the following conditions are imposed on the set of variables
in the equation:
\begin{enumerate}
\item{All variables in a given term $t_i$ or $u_i$ are
      mutually orthogonal.}
\item{ Each variable occurs the same number of times on each side of
       the equality.}
\end{enumerate}
\end{definition}
We call a lattice in which all MGEs hold an MGO; i.e., MGO is the
largest class of lattices (equational variety) in which all MGEs hold.
The simplest known example of an equation implied by an MGE that
is independent from all Godowski equations is\cite[p.~775]{pm-ql-l-hql2}
\begin{align}
((a\to  b)\to (c\to  b))\cap(a\to  c)\cap(b\to  a)\le & c\to  a.
\label{eq:newst1d}
\end{align}

Note that a strong set of classical states can be a special case of a
strong set of quantum states for which there exists only a single state
$m$ in Eq.~(\ref{eq:st-qm}). According to the following
theorems, that means that both quantum and classical
states must be orthomodular.

\begin{theorem}\label{th:strong-oml} Any ortholattice that admits a
strong set of quantum states is orthomodular.
\end{theorem}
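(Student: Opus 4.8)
The plan is to argue by contradiction, using a convenient equivalent form of the orthomodular law \eqref{eq:oml} together with the additivity of states on orthogonal elements. First I would replace \eqref{eq:oml} by the standard equivalent condition that an OL is orthomodular if and only if
\[
 a\le b\quad\text{and}\quad a'\cap b=0\ \Longrightarrow\ a=b
\]
(equivalently, the OL contains no copy of the ``benzene'' hexagon $O_6$). Verifying this equivalence from Def.~\ref{def:ourOL} and \eqref{eq:oml} is routine and I would treat it as known. So it suffices to derive a contradiction from the assumption that there exist $a,b$ with $a\le b$, $a'\cap b=0$, and $a\ne b$.

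Next I would unpack the strong-quantum-state condition \eqref{eq:st-qm}. Reading the nested implication contrapositively, Eq.~\eqref{eq:st-qm} says exactly that whenever $x\nleq y$ there is a state $m\in S$ with $m(x)=1$ and $m(y)\ne 1$. Since the supposed counterexample has $a<b$, we have $b\nleq a$, so applying this to the pair $(b,a)$ yields a state $m$ with $m(b)=1$ and $m(a)<1$.

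The computational heart is then a short state estimate. By De Morgan in the OL, $a'\cap b=0$ gives $a\cup b'=(a'\cap b)'=0'=1$. Since $a\le b$ is the same as $a\perp b'$, Def.~\ref{def:state} gives $m(a\cup b')=m(a)+m(b')$; and from $m(b)=1$ we get $m(b')=1-m(b)=0$, so $m(a\cup b')=m(a)$. But $a\cup b'=1$ forces $m(a\cup b')=m(1)=1$, whence $m(a)=1$, contradicting $m(a)<1$. This contradiction establishes orthomodularity.

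The only real obstacle is conceptual rather than computational. One must (i) pick the right equivalent formulation of \eqref{eq:oml}, since in a bare OL one cannot distribute and the naive manipulation of $a\cup(a'\cap b)$ stalls; and (ii) parse \eqref{eq:st-qm} correctly so as to apply it to the \emph{reversed} pair $(b,a)$ and thereby extract a state separating $b$ from $a$. Once these two points are in place, additivity of $m$ on the orthogonal pair $a,b'$ does all the remaining work.
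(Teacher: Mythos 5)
Your proof is correct, but it follows a genuinely different route from the paper. The paper's own proof is essentially a citation: it invokes Theorem~3.10 of Ref.~\onlinecite{mpoa99} and remarks that any ortholattice admitting a strong set of quantum states satisfies the whole infinite sequence of Godowski equations, hence is ``much stronger'' than a bare OML. You instead give a short, self-contained argument: you replace the orthomodular law (\ref{eq:oml}) by the standard $O_6$-style characterization ($a\le b$ and $a'\cap b=0$ imply $a=b$), read Eq.~(\ref{eq:st-qm}) contrapositively to extract, from $b\nleq a$, a state $m$ with $m(b)=1$ and $m(a)<1$, and then use additivity of $m$ on the orthogonal pair $a,b'$ together with $a\cup b'=(a'\cap b)'=1$ to force $m(a)=1$, a contradiction. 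Each step checks out: the equivalence you treat as known is indeed classical (and the verification is routine, as you say), the contrapositive reading of (\ref{eq:st-qm}) is exactly right, and the state computation uses only Def.~\ref{def:state} and its consequence $m(b')=1-m(b)$. What your approach buys is transparency and independence from prior machinery---a reader needs nothing beyond the definitions in this paper. What the paper's approach buys is extra information: routing through the Godowski equations shows not merely that such lattices are orthomodular but that they lie in the much smaller varieties $n$GO for all $n$, which is relevant to the paper's broader program of locating lattices between OML and HL.
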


\begin{proof}
The proof follows from Theorem 3.10 of Ref.~\onlinecite{mpoa99}. Note
that an ortholattice that admits a strong set of quantum states
is much stronger than a bare OML because an infinite sequence of
the Godowski equations holds in every such lattice.
\end{proof}

\begin{theorem}\label{th:strong-distr} Any ortholattice that admits a
strong set of classical states is distributive and therefore also
orthomodular.
\end{theorem}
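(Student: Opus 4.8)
The plan is to extract everything from the single witnessing state at once: I would show that the hypothesis forces $L$ to be totally ordered, after which both distributivity and orthomodularity fall out of standard lattice facts. The point I want to exploit is the quantifier order in Def.~\ref{def:strong}. Admitting a strong set of classical states means, by Eq.~(\ref{eq:st-cl}), that there is a \emph{single} state $m\in S$ with $(\forall a,b\in L)\big((m(a)=1\Rightarrow m(b)=1)\Rightarrow a\le b\big)$; unlike the quantum condition~(\ref{eq:st-qm}), where the separating state may depend on the pair $(a,b)$, here one and the same $m$ must order every pair.

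The core step is to rule out incomparable elements. Suppose, for contradiction, that $a,b\in L$ are incomparable, so that $a\not\le b$ and $b\not\le a$. Applying the defining implication to the ordered pair $(a,b)$: since its conclusion $a\le b$ is false, its hypothesis $m(a)=1\Rightarrow m(b)=1$ must fail, giving $m(a)=1$ and $m(b)\ne 1$. Applying it instead to $(b,a)$ yields, symmetrically, $m(b)=1$ and $m(a)\ne 1$. These conclusions are contradictory (both $m(b)=1$ and $m(b)\ne1$), so no incomparable pair exists and $L$ is a chain. One can even push this to $L=\{0,1\}$: for any $a$ the elements $a$ and $a'$ are now comparable, and $a\cap a'=0$, $a\cup a'=1$ force $a\le a'$ to give $a=0$ and $a'\le a$ to give $a=1$.

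Finally I would invoke two routine facts. A totally ordered lattice is distributive, since meet and join reduce to $\min$ and $\max$; this establishes the distributivity claimed. Because $L$ is moreover an ortholattice, distributivity makes it a Boolean algebra, and by the class inclusions of Def.~\ref{def:OML-MOL-BA} (BA inside MOL inside OML) it is in particular orthomodular, which is the ``therefore also orthomodular'' of the statement. Alternatively, that last clause follows directly from Theorem~\ref{th:strong-oml} together with the observation, noted just before the theorem, that a strong set of classical states is the special case of a strong set of quantum states in which $S$ has a single element.

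I expect the only genuine content to be the incomparability step, and the main thing to get right there is the use of \emph{both} ordered instances $(a,b)$ and $(b,a)$: a single instance only delivers $m(a)=1,\ m(b)\ne1$, which is not yet contradictory, and it is the symmetric second instance that closes the argument. This is also precisely where the classical (single-state) hypothesis is strictly stronger than the quantum one, so it is the natural place where the extra strength---distributivity rather than mere orthomodularity---is produced. Everything after that is bookkeeping with standard lattice-theoretic implications.
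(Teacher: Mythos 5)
Your proposal is correct, but it takes a genuinely different route from the paper's own proof. The paper proceeds in two stages: it first obtains orthomodularity by observing that Eq.~(\ref{eq:st-cl}) implies Eq.~(\ref{eq:st-qm}) and invoking Theorem~\ref{th:strong-oml} (which rests on the Godowski-equation machinery of Ref.~\onlinecite{mpoa99}); it then shows, for arbitrary $a,b$, that $m(b)=1$ forces $b=1$, hence $m(a\to b)=m(a')+m(a\cap b)=1$, so that strongness gives $b\le a\to b$, i.e.\ $a$ commutes with $b$, and distributivity follows from the Foulis--Holland theorem. You instead exploit the quantifier order of Def.~\ref{def:strong} directly: the single witnessing state must order \emph{every} pair, so no two elements can be incomparable, the ortholattice is a chain, and a chain ortholattice collapses to $\{0,1\}$ (or the trivial lattice), after which distributivity and orthomodularity are immediate from the class inclusions of Def.~\ref{def:OML-MOL-BA}. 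Your reading of the implications as material is exactly the reading the paper's own proof uses (it, too, derives $b=1$ from $m(b)=1$ via the pair $(1,b)$), so there is no interpretive gap. Your argument is more elementary---it needs neither Theorem~\ref{th:strong-oml} nor Foulis--Holland---and it proves strictly more: the only ortholattice with more than one element admitting a strong set of classical states is the two-element Boolean algebra, which shows the hypothesis is far more restrictive than the stated conclusion suggests. What the paper's route buys is structure that survives weakening of the hypothesis: the commutation step $b\le a\to b$ is the part one would try to salvage if the inner implication were instead quantified over many states, and it keeps the classical case formally parallel to the quantum one; what your route buys is brevity, self-containedness, and the sharper collapse result.
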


\begin{proof} Eq.~(\ref{eq:st-qm}) follows from Eq.~(\ref{eq:st-cl})
and by Theorem \ref{th:strong-oml} an ortholattice that admits a strong
set of classical states is orthomodular. Let now $a$ and $b$ be
any two lattice elements.  Assume, for state $m$,
that $m(b)=1$.  Since the lattice admits a strong set of classical
states, this implies $b=1$, so $m(a\cap b)=m(a\cap 1)=m(a)$.
But $m(a')+m(a)=1$ for any state, so $m(a\to b)=m(a')+m(a\cap b)=1$.
Hence we have $m(b)=1\Rightarrow m(a\to b)=1$, which means (since
the ortholattice admits a strong set of classical states)
that $b\le a\to b$.  This is another way of saying $aCb$. \cite{zeman}
By F-H (the Foulis-Holland theorem), an OML in which any
two elements commute is distributive.
\end{proof}

This receives the following explanation within experiments.
Systems submitted to a series of preparations and measurements
are described in a Hilbert space, which is often a product of
Hilbert spaces, but in the Bell and KS experiments,
the experiments are counterfactual. If they give different
outcomes for the same observable under the same preparation
and detection depending on the preparations of other observables,
then they might turn out to be genuinely ``quantum.'' If,
however, they always give one and the same outcome for each
observable, then they are genuinely classical.

\subsection{Vector-valued states}\label{subsec:vectorstates}

What underlies all quantum measurements
is the orthomodular structure of subspaces, i.e., vectors
and---as recently shown by Mayet \cite{mayet06}---states that
related to to the fields over which both quantum and classical
spaces are built: real, complex, or quaternion (skew) field.
These {\em Mayet vector states}
are admitted by quantum, classical, and KS setups but also
those that are wider than quantum.

We stress here that the term {\em setup} basically means a physical
experimental arrangement of devices that manipulate and/or
measure quantum systems. But when we describe the behavior
of a system subjected to these manipulations and measurements,
we include the way the devices affect the systems in the
equations we describe the systems with. Such a description, which
includes the operators and equations that refer to experimental
manipulation and measurements, we also call a {\em setup}.
In our approach, the latter term refers to the particular set of
OML equations that apply to corresponding experimental
manipulations---{\em setup} in the former meaning.
When an ambiguity in the meaning appears, we call the former term
an {\em experimental setup} or {\em e-setup}  for short and the
latter term a {\em formalized setup} or {\em f-setup} for short.
In this paper, the distinction is always clear from the context.
For instance, KS setups are {\em f-setups} throughout
because no realistic experiment is discussed. We formalize the
definition of a {\em setup} as follows.

\begin{definition}\label{def:setup} An {\em experimental setup
(e-setup)} is an experimental arrangement of devices that
manipulate and/or measure quantum systems.
A {\em formalized setup (f-setup)} is a theoretical
description of an {\em experimental setup} within a Hilbert
lattice or a Hilbert space formalism. When it is clear from
context which setup is meant we use the term {\em setup} for
both of them.
\end{definition}

Not all OMLs admit Mayet vector states. There is a class of lattice
OML equations that characterize OMLs that admit these states. Two
smallest equations from the class, $E_3$ and $E_4$, respectively, read:
\begin{align}\hskip-30pt
a\perp &b\ \&\ a\perp c\ \&\ b\perp c\ \&\ a\perp d\ \&\ b\perp e\
    \&\ c\perp f
         \notag \\
& \Rightarrow\
((a\cup b)\cup c)\cap(((a\cup d)\cap(b\cup e))\cap(c\cup f))   \notag \\
&\le
(d\cup e)\cup f,\label{eq:E-3}
  \\
a\perp b\ & \&\ a\perp c\ \&\ a\perp d\ \&\ b\perp c\ \&\ b\perp d\
               \notag \\
& \qquad\&\ c\perp d\
\&\ a\perp e\ \&\ b\perp f\ \&\ c\perp g\ \&\ d\perp h\quad
         \notag \\
\Rightarrow &(((a\cup b)\cup c)\cup d)\cap
        ((((a\cup e)\cap (b\cup f))
                  \notag \\ &\cap (c\cup g))\cap
         (d\cup h))\le ((e\cup f)\cup g)\cup h.
\label{eq:E-4}
\end{align}
These equations pass in most OMLs that characterize properties of both
quantum (Hilbert) and classical spaces including all our
lattices with equal number of vertices (atoms) and
edges (blocks) that we primarily consider in this paper.
However, Eq.~(\ref{eq:E-3}) fails in (a) and (b) OMLs from
Fig.~\ref{fig:l42-e3-e4} and Eq.~(\ref{eq:E-4}) fails in Fig.\
\ref{fig:l42-e3-e4}$\>$(c).

\subsection{Superposition}\label{subsec:superpos}

What also characterizes the quantum---as opposed to
classical---measurements as well as those wider than quantum
is the principle of superposition.
Its main feature is that any two pure states can be superposed
generate a new pure state. In a lattice a pure state $m$
corresponds to an atom $a(m)$. (Atoms are
defined in Def.~\ref{def:hl}(2).)

The following two theorems then cast the superposition within
an OML framework that we need.

\begin{theorem}\label{th:bc1}{\em [Th.~14.8.1 from
\cite{beltr-cass-book}]} Two pure states $m,n$
admit quantum superpositions iff the join of atoms
$a=s(m)$ and  $b=s(n)$, $a\cup b$, contains
at least one different atom $c$, which
then satisfies: $c\ne a$,  $c\ne b$,  $c\le a\cup b$.
\end{theorem}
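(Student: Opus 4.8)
The plan is to pass through the ortho-isomorphism of Theorem~\ref{th:repr} and reduce the statement to an elementary fact about rays inside a two-dimensional subspace. Since $m$ and $n$ are distinct pure states, their atoms $a=s(m)$ and $b=s(n)$ are distinct and correspond, under the ortho-isomorphism, to one-dimensional rays spanned by linearly independent unit vectors $\psi_m,\psi_n\in\mathcal H$. Consequently the join $a\cup b$ corresponds to the two-dimensional subspace $P=\overline{\mathcal H_a+\mathcal H_b}=\mathrm{span}\{\psi_m,\psi_n\}$, and the atoms below $a\cup b$ are exactly the rays lying in $P$. Throughout, I would identify the physical notion that $m$ and $n$ admit a quantum superposition with the existence of a pure state $p$ whose representing vector has the form $\psi_p=\alpha\psi_m+\beta\psi_n$ with both coefficients nonzero (so that $p$ is genuinely distinct from both $m$ and $n$); this is the standard Hilbert-space meaning of superposition and matches the lattice definition in Def.~\ref{def:hl}(3).

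For the forward direction, I would start from a quantum superposition $p$ of $m$ and $n$ and let $c=s(p)$ be the corresponding atom, i.e.\ the ray through $\psi_p$. Because $\psi_p\in P$ we obtain $c\le a\cup b$, and because both $\alpha\neq0$ and $\beta\neq0$ the vector $\psi_p$ is proportional to neither $\psi_m$ nor $\psi_n$, giving $c\neq a$ and $c\neq b$. This exhibits the required atom $c\le a\cup b$ with $c\neq a$, $c\neq b$.

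For the converse, I would take any atom $c\le a\cup b$ with $c\neq a$ and $c\neq b$ and run the argument backwards: $c$ corresponds to a ray through some unit vector $\psi_c\in P$, and since $P$ is spanned by $\psi_m,\psi_n$ I may write $\psi_c=\alpha\psi_m+\beta\psi_n$. The hypothesis $c\neq b$ forces $\alpha\neq0$ and $c\neq a$ forces $\beta\neq0$, so $\psi_c$ is a nontrivial superposition and the pure state it represents is a quantum superposition of $m$ and $n$. The two directions together yield the asserted equivalence.

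The main obstacle I expect is bookkeeping at the interface between the lattice and the Hilbert space rather than any deep analytic step: one must invoke Theorem~\ref{th:repr} to guarantee that atoms correspond bijectively to rays and that $a\cup b$ is precisely the two-dimensional span, and one must keep the nontriviality of the superposition (both coefficients nonzero) exactly aligned with the two inequalities $c\neq a$ and $c\neq b$. A minor point to handle carefully is the standing assumption $m\neq n$ (equivalently $a\neq b$): if it were dropped, $a\cup b$ would collapse to a single atom and neither side of the equivalence could hold, so the statement is understood for distinct pure states. A second minor point, needed only if one wants full generality under Theorem~\ref{th:sol}, is that over a quaternion skew field the word \emph{proportional} must be read as a one-sided scalar multiple; this changes none of the logic above.
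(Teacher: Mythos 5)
The paper does not prove this statement at all: it quotes it, with attribution, as Theorem~14.8.1 of Beltrametti and Cassinelli \cite{beltr-cass-book}, where the proof is purely order- and state-theoretic. Measured against that, your proposal has a genuine gap, in fact two. First, scope: by passing through the ortho-isomorphism of Theorem~\ref{th:repr} you confine yourself to Hilbert lattices, whereas the theorem concerns pure states on the general orthomodular proposition--state structures of Ref.~\cite{beltr-cass-book}. That generality is exactly what the paper needs: superposition (Theorem~\ref{th:bc2}, Eq.~(\ref{eq:superp})) is tested on finite OMLs---the KS lattices, the 36-36 lattice, the lattices of Theorem~\ref{th:rksa}---and no finite OML is an HL, since an HL has infinitely many atoms \cite{ivertsj}. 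Your argument is therefore silent about every lattice to which the paper actually applies the theorem. Second, and more fundamentally, you replaced the notion being analyzed by its intended characterization. In the source, ``$m,n$ admit quantum superpositions'' means: there exists a pure state $p$, distinct from $m$ and $n$, such that $m(x)=n(x)=1$ implies $p(x)=1$ for every lattice element $x$. The theorem's entire content is that this probabilistic condition is equivalent to the atomic condition. By decreeing that superposition means $\psi_p=\alpha\psi_m+\beta\psi_n$ with $\alpha,\beta\ne0$ (``the standard Hilbert-space meaning''), you make the statement true essentially by definition, and what remains of your proof is only the routine check that the rays of a plane other than two given spanning rays are exactly the combinations with both coefficients nonzero.

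The proof the citation points to is shorter and needs no Hilbert space. In that framework the support map $s$ is a bijection from pure states onto atoms with the characteristic property $m(x)=1\Leftrightarrow s(m)\le x$. Writing $a=s(m)$, $b=s(n)$, $c=s(p)$, the defining condition of superposition, $\forall x\,(m(x)=n(x)=1\Rightarrow p(x)=1)$, becomes $\forall x\,(a\cup b\le x\Rightarrow c\le x)$, because $a\le x$ and $b\le x$ hold iff $a\cup b\le x$; and this holds iff $c\le a\cup b$ (take $x=a\cup b$ for one direction, transitivity of $\le$ for the other). Injectivity of $s$ converts $p\ne m$, $p\ne n$ into $c\ne a$, $c\ne b$, and surjectivity of $s$ converts ``some superposition $p$ exists'' into ``some atom $c\ne a,\,c\ne b$ with $c\le a\cup b$ exists.'' If you wish to keep your Hilbert-space computation, it can stand only as an illustration of the HL special case, and even there you would owe the identification of lattice-theoretic pure states with vector states and of Beltrametti--Cassinelli superposition with linear combination---precisely the step your proposal takes as given.
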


\begin{theorem}\label{th:bc2}{\em [Th.~14.8.2 from
\cite{beltr-cass-book}]} An OML is classical (distributive)
iff no pair of pure states admits quantum superpositions.
\end{theorem}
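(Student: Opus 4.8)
The plan is to reduce the statement to a purely lattice-theoretic condition on atoms and then prove each implication separately. By Theorem~\ref{th:bc1}, two pure states $m,n$ admit a quantum superposition exactly when the join $a\cup b$ of the associated atoms $a=s(m)$, $b=s(n)$ dominates some atom $c$ with $c\ne a$ and $c\ne b$. Hence ``no pair of pure states admits quantum superpositions'' is equivalent to: for every pair of distinct atoms $a,b$, the only atoms below $a\cup b$ are $a$ and $b$ themselves. So it suffices to prove that an OML is distributive iff for all distinct atoms $a,b$ the element $a\cup b$ contains no third atom. I work in the atomistic setting that is implicit once pure states are identified with atoms, so every element is the join of the atoms beneath it.

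For the easy direction, suppose the OML is distributive (hence Boolean) and let $c$ be an atom with $c\le a\cup b$ for distinct atoms $a,b$. Distributivity gives $c=c\cap(a\cup b)=(c\cap a)\cup(c\cap b)$. Since $a$ and $b$ are atoms, each meet $c\cap a$ and $c\cap b$ is either $0$ or the corresponding atom, and since $c$ is an atom the two meets cannot both vanish. A value $c\cap a=a$ forces $a\le c$ and hence $a=c$, and likewise $c\cap b=b$ forces $c=b$. Therefore $c\in\{a,b\}$, so no third atom exists and no superposition is admitted.

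The harder direction I would carry out in three steps. First I would show that any two distinct atoms $a,b$ are orthogonal. If they were not, set $p=a\cup b$ and $c=p\cap b'$; orthomodularity applied to $b\le p$ yields $p=b\cup(p\cap b')=b\cup c$, and $c>0$ because $a\ne b$ forces $p>b$. Choosing, by atomicity, an atom $d\le c$, we get $d\le p=a\cup b$ with $d\perp b$ (so $d\ne b$) and $d\ne a$ (else $a\le b'$, contradicting non-orthogonality); this third atom $d$ contradicts the hypothesis. Second, with all distinct atoms pairwise orthogonal, I would lift orthogonality---equivalently commutation, since for distinct atoms $a\perp b\Leftrightarrow aCb$---from atoms to arbitrary elements: writing each element as a join of atoms and using that in an OML the commutant of a fixed element is a subalgebra (so closed under the joins involved), I would conclude $xCy$ for all $x,y$. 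Third, I would invoke the Foulis--Holland theorem exactly as in the proof of Theorem~\ref{th:strong-distr}: an OML in which every pair of elements commutes is distributive.

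The main obstacle is this second direction, and within it the first step---manufacturing the forbidden third atom from a non-orthogonal pair. This is where both orthomodularity and atomicity are indispensable: orthomodularity supplies the decomposition $p=b\cup(p\cap b')$, while atomicity supplies an atom beneath the nonzero complement $p\cap b'$. I would also flag that the theorem tacitly requires the lattice to be atomistic, since an atomless OML can be non-distributive while vacuously admitting no atom-superpositions; this is the natural standing assumption once pure states are identified with atoms.
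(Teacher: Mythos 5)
The paper does not actually prove this theorem: it is imported verbatim as Th.~14.8.2 of Ref.~\onlinecite{beltr-cass-book}, so there is no in-paper proof to compare yours against. Judged on its own, your argument is correct and is essentially the standard one. The reduction via Theorem~\ref{th:bc1} to a purely lattice-theoretic statement (for distinct atoms $a,b$, no atom other than $a,b$ lies below $a\cup b$) is exactly right, and the Boolean-to-no-superposition direction is clean. For the converse, your three steps are sound: the orthomodular decomposition $p=b\cup(p\cap b')$ together with atomicity manufactures the forbidden third atom from any non-orthogonal pair of distinct atoms; for distinct atoms $a\cap b=0$, so orthogonality and commutation coincide; and the Foulis--Holland endgame is the very device the paper itself uses in its proof of Theorem~\ref{th:strong-distr} (``an OML in which any two elements commute is distributive''), so your route stays inside the paper's own toolkit. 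Two glosses are worth adding. First, in the lifting step, ``the commutant of a fixed element is a subalgebra'' only yields closure under \emph{finite} joins; when an element is the join of infinitely many atoms you need the stronger (true, but separate) fact that in a complete OML the commutant of $b$ is closed under arbitrary existing joins: write each $a_i=(a_i\cap b)\cup(a_i\cap b')$, so that $\bigvee_i a_i=\bigl[\bigvee_i(a_i\cap b)\bigr]\cup\bigl[\bigvee_i(a_i\cap b')\bigr]$, where the two bracketed terms lie below $b$ and $b'$ respectively and hence commute with $b$, and then apply finite-join closure once. (For the finite lattices this paper works with, finite closure of course suffices.) Second, the atomisticity you flag as a tacit hypothesis is not an extra assumption beyond the standing ones of the cited source: in a complete atomic OML, the same orthomodular decomposition you use in your first step shows every element is the join of the atoms below it --- if $y=\bigvee\{\text{atoms below }x\}$ were strictly less than $x$, then $x\cap y'$ would contain an atom simultaneously below and orthogonal to $y$, a contradiction. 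With these two points made explicit, your proposal is a complete proof.
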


The superposition from Theorem \ref{th:bc1} can be formulated
in prenex normal form (to make it easier to use in conjunction
with certain first-order logic algorithms, including our
{\tt latticeg.c} program) as follows
\begin{align}\label{eq:superp}
\hskip-60pt
&(\exists c)(\exists z)(\forall w)
  \\
&((((\neg (a=0)\ \&\ ((\neg (z=0)\ \&\
(z\le a))\ \Rightarrow\ (z=a))) \notag\\
&\&\ (\neg (b=0)
\&\ ((\neg (z=0)\ \&\ (z\le b))\
\Rightarrow\ (z=b)))
  ) \notag\\
&\&\ \neg (a=b))
\Rightarrow\ ((\neg (c=0)\ \&\ ((\neg (w=0)\ \&\ (w\le c))\notag\\
& \Rightarrow\ (w=c)))\&((\neg (c=a)\&\neg (c=b))\&(c\le (a\cup b)))))\notag
\end{align}
where $\neg$, $\&$, and $\Rightarrow$ are classical metaoperations:
negation, conjunction, and implication, respectively.

\subsection{Orthoarguesian equations}\label{subsec:noa}

In the end, there is a series of algebraic equations---we call them
{\em generalized orthoarguesian equations} ($n$OA, $n=3,4,\dots$)---at
least properly overlapping with those characterizing
states and superpositions, that must  hold in all lattices of
closed subspaces of both finite- and infinite-dim Hilbert space (and
therefore in a Hilbert lattice).
They follow from the following set of equations that hold in any
Hilbert space.

\begin{theorem} \label{th:hs-ssnoa}
Let ${\mathcal M}_0,\ldots, {\mathcal M}_n$ and ${\mathcal N}_0,\ldots,
{\mathcal N}_n$, $n\ge 1$,
 be any subspaces (not necessarily closed) of a Hilbert
space, and let $\bigcap$ denote set-theoretical
intersection and $+$ subspace sum.
We define the subspace term ${\mathcal T}_n(i_0,\ldots,i_n)$ recursively as
follows, where $0\le i_0,\ldots,i_n\le n$:
\begin{align}
&{\mathcal T}_1(i_0,i_1)=({\mathcal M}_{i_0}
         +{\mathcal M}_{i_1})\text{$\bigcap$}({\mathcal N}_{i_0}
         +{\mathcal N}_{i_1}) \label{eq:hs-rec1} \\
{\mathcal T}_m&(i_0,\ldots,i_m)={\mathcal T}_{m-1}(i_0,i_1,i_3,\ldots,i_m)
          \notag \\
&\text{$\bigcap$}({\mathcal T}_{m-1}
(i_0,i_2,i_3,\ldots,i_m)+{\mathcal  T}_{n-1}(i_1,i_2,i_3,\ldots,i_m)),\notag \\
       &\qquad\qquad 2\le m\le n    \label{eq:hs-recn}
\end{align}

For $m=2$, this means ${\mathcal T}_2(i_0,i_1,i_2)={\mathcal T}_1(i_0,i_1)$
$\bigcap\ ({\mathcal T}_1(i_0,i_2)+{\mathcal T}_1(i_1,i_2))$.  Then the following condition holds in any
finite- or infinite-dimensional Hilbert space for $n\ge 1$:
\begin{align}
&({\mathcal M}_0+{\mathcal N}_0)\text{$\bigcap$}
\cdots\text{$\bigcap$}({\mathcal M}_n+{\mathcal N}_n)\notag\\
&\subseteq {\mathcal N}_0
+({\mathcal M}_0\text{$\bigcap$}
         ({\mathcal M}_1+{\mathcal T}_n(0,\ldots,n) )).  \label{eq:hs-ssnoa}
\end{align}
\end{theorem}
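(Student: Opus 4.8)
The plan is to prove the inclusion by an elementary element-chase combined with induction, using only the vector-space operations (subspace sum and intersection); the inner product and closedness are never used, which is exactly why the statement is valid for arbitrary, not necessarily closed, subspaces. Fix a vector $v$ in the left-hand side $(\mathcal{M}_0+\mathcal{N}_0)\cap\cdots\cap(\mathcal{M}_n+\mathcal{N}_n)$. For each $i$ I would choose once and for all a decomposition $v=m_i+n_i$ with $m_i\in\mathcal{M}_i$ and $n_i\in\mathcal{N}_i$; these fixed vectors are used throughout. Since $v=m_0+n_0$, to place $v$ in the right-hand side $\mathcal{N}_0+(\mathcal{M}_0\cap(\mathcal{M}_1+\mathcal{T}_n(0,\ldots,n)))$ it suffices to show $m_0\in\mathcal{M}_1+\mathcal{T}_n(0,\ldots,n)$, for then $m_0\in\mathcal{M}_0\cap(\mathcal{M}_1+\mathcal{T}_n(0,\ldots,n))$ and $v=n_0+m_0$ has the required form. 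Writing $m_0=m_1+(m_0-m_1)$, this reduces the entire theorem to the single claim $m_0-m_1\in\mathcal{T}_n(0,1,\ldots,n)$.

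The key step is to prove a more symmetric statement by induction on $m$: for every index tuple $(i_0,i_1,\ldots,i_m)$ one has $m_{i_0}-m_{i_1}\in\mathcal{T}_m(i_0,i_1,\ldots,i_m)$; the case $m=n$, $(i_0,\ldots,i_n)=(0,\ldots,n)$ is the claim above. For the base case $m=1$, the two decompositions $v=m_{i_0}+n_{i_0}=m_{i_1}+n_{i_1}$ give $m_{i_0}-m_{i_1}=n_{i_1}-n_{i_0}$, so this single vector lies in $\mathcal{M}_{i_0}+\mathcal{M}_{i_1}$ and in $\mathcal{N}_{i_0}+\mathcal{N}_{i_1}$ at once, i.e. in $\mathcal{T}_1(i_0,i_1)$. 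This is precisely the $n=1$ verification of the theorem, and it is the only point where the defining property of $v$ is invoked.

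For the inductive step I would use the telescoping identity $m_{i_0}-m_{i_1}=(m_{i_0}-m_{i_2})-(m_{i_1}-m_{i_2})$, valid for the fixed vectors. Applying the induction hypothesis to the three $(m-1)$-tuples obtained by deleting one of $i_0,i_1,i_2$ yields $m_{i_0}-m_{i_1}\in\mathcal{T}_{m-1}(i_0,i_1,i_3,\ldots,i_m)$, $m_{i_0}-m_{i_2}\in\mathcal{T}_{m-1}(i_0,i_2,i_3,\ldots,i_m)$, and $m_{i_1}-m_{i_2}\in\mathcal{T}_{m-1}(i_1,i_2,i_3,\ldots,i_m)$. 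Since each $\mathcal{T}_{m-1}$ is a subspace, the telescoping identity places $m_{i_0}-m_{i_1}$ in the sum $\mathcal{T}_{m-1}(i_0,i_2,\ldots,i_m)+\mathcal{T}_{m-1}(i_1,i_2,\ldots,i_m)$, and it already lies in $\mathcal{T}_{m-1}(i_0,i_1,i_3,\ldots,i_m)$; hence it lies in their intersection, which is exactly $\mathcal{T}_m(i_0,i_1,\ldots,i_m)$ by the recursion~(\ref{eq:hs-recn}). This closes the induction and therefore the theorem.

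I expect the only real obstacle to be bookkeeping rather than anything conceptual: one must guess the correct invariant (the component differences $m_{i_0}-m_{i_1}$, which turn the problem into a Desargues-type telescoping) and then keep the index permutations in~(\ref{eq:hs-recn}) aligned so that the three deletions of $i_0,i_1,i_2$ match the three $\mathcal{T}_{m-1}$ summands. I read the final summand in~(\ref{eq:hs-recn}) as $\mathcal{T}_{m-1}(i_1,i_2,\ldots,i_m)$, treating the printed subscript $n-1$ as a typo for $m-1$; with that reading the recursion is precisely the deletion pattern the induction requires, and everything else is routine linear algebra.
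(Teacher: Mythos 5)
Your proposal is correct and follows essentially the same route as the paper's own proof: the same decomposition $v=m_i+n_i$, the same reduction of the theorem to $m_0-m_1\in\mathcal{T}_n(0,\ldots,n)$, and the same induction on $m$ via the telescoping identity $m_{i_0}-m_{i_1}=(m_{i_0}-m_{i_2})-(m_{i_1}-m_{i_2})$ applied to the three index deletions. Your reading of the printed $\mathcal{T}_{n-1}$ in the recursion as a typo for $\mathcal{T}_{m-1}$ is also vindicated by the paper, whose inductive step uses $\mathcal{T}_{m-1}(i_1,i_2,i_3,\ldots,i_m)$ exactly as you do.
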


\begin{proof} (Originally given---in effect---in the proof of Theorem~5.2
of \cite{mpoa99}; a similar proof was also given by R.~Mayet
\cite{mayet06-hql2})
We will use $+$ to denote subspace sum when connecting two subspaces
and vector sum when connecting two vectors; no confusion should arise.
Let  $x$ be a vector belonging to the left-hand side of
Eq.~(\ref{eq:hs-ssnoa}).  Then
$x\in {\mathcal M}_i+{\mathcal N}_i$ for $i=0,\ldots,n$.
From the definition of subspace sum,
$x\in {\mathcal M}_i+{\mathcal N}_i$ implies there
exist vectors $x_i$ and $y_i$ such that $x_i\in {\mathcal M}_i$,
$y_i\in {\mathcal N}_i$, and $x=x_i+y_i$.  From the last property,
we have $x_i+y_i=x=x_j+y_j$ or
\begin{align}
  x_i-x_j&=-y_i+y_j,   &0\le i,j\le n.     \label{eq:hs-diff}
\end{align}

For the case $n=1$ of Eq.~(\ref{eq:hs-ssnoa}), we need to prove
\begin{align}
&({\mathcal M}_0+{\mathcal N}_0) \text{$\bigcap$}({\mathcal M}_1
            +{\mathcal N}_1)
    \notag \\
      & \hskip-10pt\subseteq
          {\mathcal N}_0+({\mathcal M}_0\text{$\bigcap$} ({\mathcal M}_1
               +(({\mathcal M}_0+{\mathcal M}_1)\text{$\bigcap$}({\mathcal N}_0
                   +{\mathcal N}_1)) ))  \label{eq:hs-3oa}
\end{align}


Any linear combination of vectors from two subspaces belongs to their
subspace sum.  Since $y_0\in {\mathcal N}_0$ and $y_1\in {\mathcal N}_1$, we have $-y_0+y_1\in
{\mathcal N}_0+{\mathcal N}_1$.  Therefore by Eq.~(\ref{eq:hs-diff}), $x_0-x_1\in {\mathcal N}_0+{\mathcal N}_1$.
Also, $x_0-x_1\in {\mathcal M}_0+{\mathcal M}_1$.  Therefore
\begin{align}
x_0-x_1\in ({\mathcal M}_0+{\mathcal M}_1)\mbox{$\bigcap$}({\mathcal N}_0
     +{\mathcal N}_1) \label{eq:hs-01}.
\end{align}
Since $x_1\in {\mathcal M}_1$, we have $x_0=x_1+(x_0-x_1)\in
{\mathcal M}_1+(({\mathcal M}_0+{\mathcal M}_1)\bigcap({\mathcal N}_0+{\mathcal N}_1))$.  Also, $x_0\in {\mathcal M}_0$, so $x_0\in
{\mathcal M}_0\bigcap({\mathcal M}_1+(({\mathcal M}_0+{\mathcal M}_1)\bigcap({\mathcal N}_0+{\mathcal N}_1)))$.  Finally, since $y_0\in {\mathcal N}_0$, we
have $x=y_0+x_0\in {\mathcal N}_0+({\mathcal M}_0\bigcap({\mathcal M}_1+(({\mathcal M}_0+{\mathcal M}_1)\bigcap({\mathcal N}_0+{\mathcal N}_1))))$, proving
that $x$ belongs to the right-hand side of Eq.~(\ref{eq:hs-3oa}) and
thus establishing the subset relation.  This argument is illustrated by
the following diagram:
\begin{widetext}
\begin{align}
  \cdots\subseteq
\underbrace{
  \underbrace{{\mathcal N}_0}_{\textstyle y_0}
  +(
  \underbrace{{\mathcal M}_0}_{\textstyle x_0}
  \mbox{$\bigcap$}
  \underbrace{(
    \underbrace{{\mathcal M}_1}_{\textstyle x_1}
    +(
    \underbrace{
      \underbrace{({\mathcal M}_0+{\mathcal M}_1)}_{\textstyle x_0-x_1}
      \mbox{$\bigcap$}
      \underbrace{({\mathcal N}_0+{\mathcal N}_1)}_{\textstyle -y_0+y_1=x_0-x_1}
      )))
    }_{\textstyle x_0-x_1}
  }_{\textstyle x_1+(x_0-x_1)=x_0}
}_{\textstyle y_0+x_0=x}.    \notag 
\end{align}
\end{widetext}

For $n>1$, notice that on the right-hand side, the term
${\mathcal T}_1(0,1)=({\mathcal M}_0+{\mathcal M}_1)\bigcap({\mathcal N}_0+{\mathcal N}_1)$ in Eq.~(\ref{eq:hs-3oa}) is replaced
by the larger term ${\mathcal T}_n(0,\ldots,n)$, with the rest of the right-hand
side the same.  From the diagram above, it is apparent that if we can
prove
\begin{align}
x_0-x_1\in {\mathcal T}_n(0,\ldots,n), \label{eq:hs-0-1}
\end{align}
then Eq.~(\ref{eq:hs-ssnoa}) is
established.
We will actually prove a more general result,
\begin{align}
x_{i_0}-x_{i_1} &\in {\mathcal T}_m(i_0,\ldots,i_m), &
                  0\le i_0,\ldots,i_m\le n, &  1\le m\le n
                \label{eq:hs-i0-i1}
\end{align}
from which Eq.~(\ref{eq:hs-0-1}) follows as a special case by setting
$m=n$ and $i_0=0,\ldots,i_m=n$.

We will prove Eq.~(\ref{eq:hs-i0-i1}) by induction on $m$.
For the basis step $m=1$,
the same argument that led to Eq.~(\ref{eq:hs-01}) above shows that
\begin{align}
x_{i_0}-x_{i_1}\in {\mathcal T}_1(i_0,i_1) &=({\mathcal M}_{i_0}
     +{\mathcal M}_{i_1})\mbox{$\bigcap$}({\mathcal N}_{i_0}
       +{\mathcal N}_{i_1}).
             \notag
\end{align}
for $0\le i_0,i_1\le n$.  For $m>1$, assume we have proved
$x_{i_0}-x_{i_1}\in {\mathcal T}_{m-1}(i_0,i_1,\ldots,i_{m-1})$
for all $0\le i_0,\ldots,i_{m-1}\le n$.
Then, in particular, we have the substitution instances
\begin{align}
x_{i_0}-x_{i_1}&\in {\mathcal T}_{m-1}(i_0,i_1,i_3,\ldots,i_m) \label{eq:hs-n01} \\
x_{i_0}-x_{i_2}&\in {\mathcal T}_{m-1}(i_0,i_2,i_3,\ldots,i_m) \label{eq:hs-n02} \\
x_{i_1}-x_{i_2}&\in {\mathcal T}_{m-1}(i_1,i_2,i_3,\ldots,i_m). \label{eq:hs-n12}
\end{align}
Combining Eqs.~(\ref{eq:hs-n02}) and (\ref{eq:hs-n12}),
\begin{align}
&x_{i_0}-x_{i_1}=(x_{i_0}-x_{i_2})-(x_{i_1}-x_{i_2})\notag\\
&\in{\mathcal T}_{m-1}(i_0,i_2,i_3,\ldots,i_m)  +{\mathcal T}_{m-1}(i_1,i_2,i_3,\ldots,i_m). \notag
\end{align}
Combining this with Eq.~(\ref{eq:hs-n01}) and using
Eq.~(\ref{eq:hs-recn}),
\begin{align}
&x_{i_0}-x_{i_1} \in
{\mathcal T}_{m-1}(i_0,i_1,i_3,\ldots,i_m)    \notag \\
 &\mbox{$\bigcap$}({\mathcal T}_{m-1}(i_0,i_2,i_3,\ldots,i_n)
  +{\mathcal T}_{m-1}(i_1,i_2,i_3,\ldots,i_m))  \notag \\
    & ={\mathcal T}_m(i_0,\ldots,i_m)                  \notag
\end{align}
as required.
\end{proof}

We will use the above theorem to derive a condition that holds in the
lattice of closed subspaces of a Hilbert space. In doing so we will
make use of the definitions introduced at the beginning of Sec.\
\ref{sec:ortho} and  the following well-known \cite[p.~28]{halmos}
lemma.
\begin{lemma}
\label{lem:hs-sum} Let ${\mathcal M}$ and ${\mathcal N}$ be two closed
subspaces of a Hilbert space. Then
\begin{align}
{\mathcal M}+{\mathcal N} & \subseteq {\mathcal M}\text{$\bigcup$} {\mathcal N} \label{eq:hs-sumss} \\
{\mathcal M}\perp {\mathcal N} \quad & \Rightarrow\quad {\mathcal M}+{\mathcal N}= {\mathcal M}\text{$\bigcup$} {\mathcal N}
\label{eq:hs-sumeq}
\end{align}
\end{lemma}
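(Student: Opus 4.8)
The plan is to reduce both claims to the identity recalled in Subsec.~\ref{subsec:hl}, namely that the join satisfies ${\mathcal M}\text{$\bigcup$}{\mathcal N}=\overline{{\mathcal M}+{\mathcal N}}=({\mathcal M}+{\mathcal N})^{\perp\perp}$, i.e.\ the join is precisely the smallest closed subspace containing the algebraic sum ${\mathcal M}+{\mathcal N}$. Granting this, Eq.~(\ref{eq:hs-sumss}) is immediate and requires no orthogonality: every set is contained in its own closure, so ${\mathcal M}+{\mathcal N}\subseteq\overline{{\mathcal M}+{\mathcal N}}={\mathcal M}\text{$\bigcup$}{\mathcal N}$.

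For Eq.~(\ref{eq:hs-sumeq}), given Eq.~(\ref{eq:hs-sumss}) and the identity above, the only remaining task is the reverse inclusion under the hypothesis ${\mathcal M}\perp{\mathcal N}$, and this follows at once if I can show that the algebraic sum ${\mathcal M}+{\mathcal N}$ is itself closed: then ${\mathcal M}+{\mathcal N}=\overline{{\mathcal M}+{\mathcal N}}={\mathcal M}\text{$\bigcup$}{\mathcal N}$. So I would fix a sequence $z_k=x_k+y_k$ in ${\mathcal M}+{\mathcal N}$, with $x_k\in{\mathcal M}$ and $y_k\in{\mathcal N}$, converging to some $z$ in the ambient Hilbert space, and argue that $z\in{\mathcal M}+{\mathcal N}$.

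The key step---and the only place orthogonality enters---is the Pythagorean identity $\|x_k-x_l\|^2+\|y_k-y_l\|^2=\|(x_k-x_l)+(y_k-y_l)\|^2=\|z_k-z_l\|^2$, valid because $x_k-x_l\in{\mathcal M}$ and $y_k-y_l\in{\mathcal N}$ are orthogonal. A convergent, hence Cauchy, sequence $(z_k)$ therefore forces both $(x_k)$ and $(y_k)$ to be Cauchy; by completeness of the Hilbert space together with the closedness of ${\mathcal M}$ and ${\mathcal N}$, these converge to limits $x\in{\mathcal M}$ and $y\in{\mathcal N}$, whence $z=\lim_k z_k=x+y\in{\mathcal M}+{\mathcal N}$. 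This establishes closedness and finishes the argument.

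The main obstacle, such as it is, is exactly this closedness of the orthogonal sum: without orthogonality the algebraic sum of two closed subspaces of an infinite-dimensional space need not be closed, so the Pythagorean splitting of the Cauchy condition is precisely what makes the argument go through. Everything else is either definitional (the role of closure in the join, already recorded in Subsec.~\ref{subsec:hl}) or a routine appeal to completeness.
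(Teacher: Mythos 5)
Your proof is correct. The paper itself gives no proof of this lemma---it is invoked as ``well-known'' with a citation to Halmos (p.~28)---so there is nothing in the text to compare against; what you have written is precisely the classical argument that the citation stands for. The first inclusion is, as you say, definitional once one recalls from Subsec.~\ref{subsec:hl} that ${\mathcal M}\cup{\mathcal N}=\overline{{\mathcal M}+{\mathcal N}}$, and the second claim reduces correctly to the closedness of an orthogonal sum of closed subspaces, which your Pythagorean splitting $\|z_k-z_l\|^2=\|x_k-x_l\|^2+\|y_k-y_l\|^2$ of the Cauchy condition establishes: both component sequences are Cauchy, converge by completeness, and their limits stay in ${\mathcal M}$ and ${\mathcal N}$ by closedness. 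You also correctly identify where orthogonality is indispensable---without it the algebraic sum of two closed subspaces can fail to be closed in infinite dimensions, so Eq.~(\ref{eq:hs-sumeq}) genuinely needs the hypothesis while Eq.~(\ref{eq:hs-sumss}) does not.
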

\begin{theorem} \label{th:hs-noa} {\rm (Generalized Orthoarguesian Laws)}
Let ${\mathcal M}_0,\ldots, {\mathcal M}_n$ and ${\mathcal N}_0,\ldots,
{\mathcal N}_n$, $n\ge 1$, be closed subspaces of a Hilbert space.
We define the term ${\mathcal T}^{\tiny\mbox{$\bigcup$}}_n(i_0,\ldots,i_n)$ by
substituting $\bigcup$ for $+$ in the term
${\mathcal T}_n(i_0,\ldots,i_n)$ from Theorem~\ref{th:hs-ssnoa}.
Then following condition holds in any finite- or
infinite-dimensional Hilbert space for
$n\ge 1$:
\begin{align}
{\mathcal M}_0 & \perp {\mathcal N}_0 \ \& \ \cdots \ \& \
      {\mathcal M}_n\perp {\mathcal N}_n \quad\Rightarrow\quad
          \notag \\
  &  ({\mathcal M}_0\text{$\bigcup$} {\mathcal N}_0)\text{$\bigcap$}
\cdots\text{$\bigcap$}({\mathcal M}_n\text{$\bigcup$} {\mathcal N}_n)
          \notag \\
&\le
          {\mathcal N}_0\text{$\bigcup$} ({\mathcal M}_0\text{$\bigcap$}
         ({\mathcal M}_1\text{$\bigcup$}
              {\mathcal T}^\text{$\bigcup$}_n(0,\ldots,n) )). \label{eq:hs-noa}
\end{align}
\end{theorem}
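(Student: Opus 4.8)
\emph{Proof sketch.} The plan is to reduce the lattice statement (\ref{eq:hs-noa}) to the subspace-sum inclusion (\ref{eq:hs-ssnoa}) already established in Theorem~\ref{th:hs-ssnoa}, using the two halves of Lemma~\ref{lem:hs-sum} to pass between the subspace sum $+$ and the lattice join $\cup$. Throughout I use that for closed subspaces the lattice order $\le$ is exactly set inclusion $\subseteq$ and that the meet $\cap$ coincides with set-theoretic intersection; hence both sides of (\ref{eq:hs-noa}) are closed subspaces and it suffices to prove the corresponding inclusion of sets.

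First I would rewrite the left-hand side. Since $\mathcal{M}_i\perp\mathcal{N}_i$ for every $i$, Eq.~(\ref{eq:hs-sumeq}) gives $\mathcal{M}_i\cup\mathcal{N}_i=\mathcal{M}_i+\mathcal{N}_i$, so the meet $(\mathcal{M}_0\cup\mathcal{N}_0)\cap\cdots\cap(\mathcal{M}_n\cup\mathcal{N}_n)$ equals the intersection $(\mathcal{M}_0+\mathcal{N}_0)\cap\cdots\cap(\mathcal{M}_n+\mathcal{N}_n)$, i.e.\ precisely the left-hand side of Eq.~(\ref{eq:hs-ssnoa}). This is the only place the orthogonality hypotheses enter, which explains why only the ``diagonal'' relations $\mathcal{M}_i\perp\mathcal{N}_i$ appear. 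Applying Theorem~\ref{th:hs-ssnoa} then bounds this intersection above (in the $\subseteq$ sense) by $\mathcal{N}_0+(\mathcal{M}_0\cap(\mathcal{M}_1+\mathcal{T}_n(0,\ldots,n)))$.

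It remains to enlarge this subspace-sum term to the join term on the right of (\ref{eq:hs-noa}). The key auxiliary fact is the monotonicity inclusion $\mathcal{T}_m(i_0,\ldots,i_m)\subseteq\mathcal{T}^{\cup}_m(i_0,\ldots,i_m)$ (the join-version term of the theorem statement), which I would prove by induction on $m$ from the recursion (\ref{eq:hs-recn}): the base case $m=1$ is immediate because the two terms differ only by $+$ versus $\cup$ and Eq.~(\ref{eq:hs-sumss}) gives $+\subseteq\cup$, while the inductive step follows because subspace sum, join, and intersection are each monotone in their arguments, so replacing every $+$ by the larger $\cup$ can only enlarge the term. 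With this in hand, Eq.~(\ref{eq:hs-sumss}) yields $\mathcal{M}_1+\mathcal{T}_n(0,\ldots,n)\subseteq\mathcal{M}_1+\mathcal{T}^{\cup}_n(0,\ldots,n)\subseteq\mathcal{M}_1\cup\mathcal{T}^{\cup}_n(0,\ldots,n)$; intersecting with $\mathcal{M}_0$ and then prefixing $\mathcal{N}_0$ (again upgrading the outer $+$ to $\cup$ via Eq.~(\ref{eq:hs-sumss})) gives the right-hand side of (\ref{eq:hs-noa}). Chaining the three inclusions completes the argument.

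The only real content beyond bookkeeping is the inductive inclusion $\mathcal{T}_m\subseteq\mathcal{T}^{\cup}_m$, so that is where I would be most careful; but since every operation involved is monotone it is routine. The conceptual point worth stressing is the asymmetry between the two sides: on the right one only ever \emph{enlarges}, using $+\subseteq\cup$, whereas on the left one needs the \emph{equality} $+=\cup$ furnished by orthogonality, so as not to loosen the tight intersection that drives the bound.
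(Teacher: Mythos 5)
Your proposal is correct and follows essentially the same route as the paper's proof: use the orthogonality hypotheses with Eq.~(\ref{eq:hs-sumeq}) to identify the left-hand side of Eq.~(\ref{eq:hs-noa}) with that of Eq.~(\ref{eq:hs-ssnoa}), apply Theorem~\ref{th:hs-ssnoa}, and then enlarge the right-hand side via Eq.~(\ref{eq:hs-sumss}) and transitivity of inclusion. The only difference is that you spell out, by induction on $m$, the monotonicity inclusion ${\mathcal T}_m\subseteq{\mathcal T}^{\cup}_m$, a detail the paper leaves implicit in its appeal to Eq.~(\ref{eq:hs-sumss}); this is a welcome clarification rather than a departure.
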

\begin{proof}
By the orthogonality hypotheses and Eq.~(\ref{eq:hs-sumeq}), the
left-hand side of Eq.~(\ref{eq:hs-noa}) equals the left-hand side
of Eq.~(\ref{eq:hs-ssnoa}).  By Eq.~(\ref{eq:hs-sumss}),
the right-hand side of Eq.~(\ref{eq:hs-ssnoa}) is a subset
of the right-hand side of Eq.~(\ref{eq:hs-noa}).  Eq.~(\ref{eq:hs-noa})
follows by Theorem~\ref{th:hs-ssnoa} and the transitivity of
the subset relation.
\end{proof}

Ref.~\onlinecite{mpoa99} shows that in any OML (which includes the lattice of
closed subspaces of a Hilbert space, i.e., the Hilbert lattice),
Eq.~(\ref{eq:hs-noa}) is equivalent to the $m$OA law Eq.~(\ref{eq:noa})
for $m=n+2$, thus establishing the proof of  Theorem~\ref{th:noa}.

\begin{definition}
\label{def:noa}
We define an operation
${\buildrel (n)\over\equiv}$ on $n$ variables
$a_1,\ldots,a_n$ ($n\ge 3$) as follows:
\begin{align}
a_1&{\buildrel (3)\over\equiv}a_2\notag\\
&{\buildrel\rm def\over =}\
((a_1\to  a_3)\cap(a_2\to  a_3))
\cup((a_1'\to  a_3)\cap(a_2'\to  a_3)) \notag\\
a_1&{\buildrel (n)\over\equiv}a_2\notag\\
&{\buildrel\rm def\over =}\ (a_1{\buildrel (n-1)\over\equiv}a_2)\cup
((a_1{\buildrel (n-1)\over\equiv}a_n)\cap
(a_2{\buildrel (n-1)\over\equiv}a_n)), \notag\\
&\qquad\qquad n\ge 4\,.\label{eq:noaoper}
\end{align}
\end{definition}

\begin{theorem}\label{th:noa}
The $n${\rm OA} {\em laws}
\begin{align}
(a_1\to a_3) \cap (a_1{\buildrel (n)\over\equiv}a_2)
\le a_2\to  a_3\,.\label{eq:noa}
\end{align}
hold in any Hilbert lattice.
\end{theorem}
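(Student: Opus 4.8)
The plan is to reduce the purely lattice-theoretic statement~(\ref{eq:noa}) to the Hilbert-space inclusion already proved in Theorem~\ref{th:hs-noa}, transported through the representation Theorem~\ref{th:repr}. Since~(\ref{eq:noa}) is an identity, and identities are preserved by ortho-isomorphisms, it is enough to establish it in the concrete lattice $\mathcal{C}(\mathcal{H})$ of closed subspaces; the abstract case then follows because every Hilbert lattice is ortho-isomorphic to some $\mathcal{C}(\mathcal{H})$ by Theorem~\ref{th:repr}. To obtain the $n$OA law for a fixed $n\ge 3$, I would apply Theorem~\ref{th:hs-noa} with its parameter equal to $n-2$, matching the index shift $m=n+2$ recorded just before Def.~\ref{def:noa}.

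The core of the argument is then an algebraic equivalence carried out inside an arbitrary OML, of which $\mathcal{C}(\mathcal{H})$ is a member (Subsec.~\ref{subsec:hl}). Reading the symbols $\bigcup,\bigcap,\le,\perp$ in~(\ref{eq:hs-noa}) as the OML operations $\cup,\cap,\le$ together with $a\perp b \Leftrightarrow a\le b'$, the inclusion~(\ref{eq:hs-noa}) becomes a quasi-identity, namely an orthogonality-guarded inequality, valid in $\mathcal{C}(\mathcal{H})$. I would then show that, in every OML, this quasi-identity is equivalent to the unguarded identity~(\ref{eq:noa}). The translation uses suitable substitutions of the guarded pairs $\mathcal{M}_i\perp\mathcal{N}_i$ by lattice terms in $a_1,a_2$ and the ``pole'' $a_3$, chosen so that each join $\mathcal{M}_i\cup\mathcal{N}_i$ reproduces a Sasaki hook $a_k\to a_3=a_k'\cup(a_k\cap a_3)$ and so that the recursively built term ${\mathcal T}^{\tiny\mbox{$\bigcup$}}_n(0,\ldots,n)$ reproduces the recursively defined operation ${\buildrel (n)\over\equiv}$ of Def.~\ref{def:noa}. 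The base parameter, corresponding to Day's original $3$OA equation, serves as a consistency check that the substitution scheme is correct.

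The main obstacle is precisely this equivalence: one must verify, by induction matching the recursion~(\ref{eq:hs-recn}) against~(\ref{eq:noaoper}), that after substitution the subspace term ${\mathcal T}^{\tiny\mbox{$\bigcup$}}_n$ collapses exactly onto ${\buildrel (n)\over\equiv}$, using only orthomodularity, the De~Morgan laws, and the Foulis--Holland commutations that the orthogonality hypotheses make available among the substituted elements. This is the lengthy but mechanical step established in Ref.~\onlinecite{mpoa99}. Granting it, the conclusion is immediate: Theorem~\ref{th:hs-noa} gives~(\ref{eq:hs-noa}) in $\mathcal{C}(\mathcal{H})$, the equivalence yields~(\ref{eq:noa}) there, and Theorem~\ref{th:repr} propagates the identity to every Hilbert lattice, as claimed.
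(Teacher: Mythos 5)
Your proposal follows essentially the same route as the paper: it rests on the guarded Hilbert-space inclusion of Theorem~\ref{th:hs-noa} (itself built on Theorem~\ref{th:hs-ssnoa} and Lemma~\ref{lem:hs-sum}), invokes the OML-level equivalence from Ref.~\onlinecite{mpoa99} between Eq.~(\ref{eq:hs-noa}) and the $m$OA law with $m=n+2$, and transfers the identity to abstract Hilbert lattices via the ortho-isomorphism of Theorem~\ref{th:repr}. The only difference is that you make the representation-theorem step explicit where the paper leaves it implicit; otherwise the decomposition, the key cited equivalence, and the index bookkeeping all coincide with the paper's proof.
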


The class of  equations (\ref{eq:noa}) are the {\em generalized
orthoarguesian equations} $n${\rm OA} discovered by Megill and
Pavi\v ci\'c. \cite{mpoa99,pm-ql-l-hql2} They also play a role in
proving the semi-quantum lattice theorem (Subsection~\ref{subsec:semi}).

The smallest of the generalized  orthoarguesian equations is the following
3OA:
\begin{align}
(x\to z)\cap &(((x\to z)\cap (y\to z))\cup ((x'\to z)\cap (y'\to z)))\notag\\
    &\le y\to z
\label{eq:oa3}
\end{align}
All $n$OA imply 3OA, so, if an OML does not satisfy 3OA
it will not admit any $n$OA.

\subsection{\label{subsec:greechie}Greechie diagrams}

A Greechie diagram of an OML is a shorthand graphical representation
of a Hasse diagram of an OML.

\begin{definition}\label{def:hasse}
A {\em Hasse diagram of an\/} {\rm OML} is a graphical representation of
an {\rm OML} displayed via its ordering relation with an implied
upward orientation. A point is drawn for each element of the
{\rm  OML} and line segments are drawn between these points
according to the following two rules:
\begin{enumerate}
\item[(1)] If $a<b$ in the lattice, then the point corresponding to $a$
appears lower in the drawing than the point corresponding to $a$;
\item[{2}] A line segment is drawn between the points corresponding to any
two elements $a$ and $b$ of the lattice iff
either $a$ covers $b$ or $b$ covers $a$.  ($a$ {\em covers} $b$ iff
$b<a$ and there is no $c$ such that $b<c<a$.)
\end{enumerate}
\end{definition}

The most general definition of a Hasse diagram is given for
a partially ordered set (poset), but all we deal with in this paper
is a very special poset---OML---and therefore we defined a Hasse
diagram directly for an OML above.

\begin{definition}\label{def:greechie}
A {\em Greechie diagram of an\/} {\rm OML} is a graphical
representation of a Hasse diagram of an {\rm OML\/} in which points
represent atoms [Def.~\ref{def:hl}(2)]  and smooth lines---called
{\em blocks\/}---that connect points/atoms---represent the orthogonalities
between atoms.
\end{definition}

The most general definition of a Greechie diagram is also given
for a poset but this is again too general for our purpose.  A precise
definition can be found, for example, in Ref.~\onlinecite[p.~38]{kalmb83},
which includes conditions---e.g., that there be no loops of order less
than five---necessary for the diagram to be an OML.  To avoid certain
complications, we consider only those Greechie diagrams with three or
more atoms per block.

In Fig.~\ref{fig:mmp} we show two Greechie diagrams and their
Hasse diagrams. The points in a Hasse diagrams that represent mutually
orthogonal atoms, which themselves represent orthogonal vectors, span a
hyperplane or the whole space.
Thus the orthogonalities imply that
the top elements under 1 in the diagrams are complements of the atoms
in the lowest level above 0.

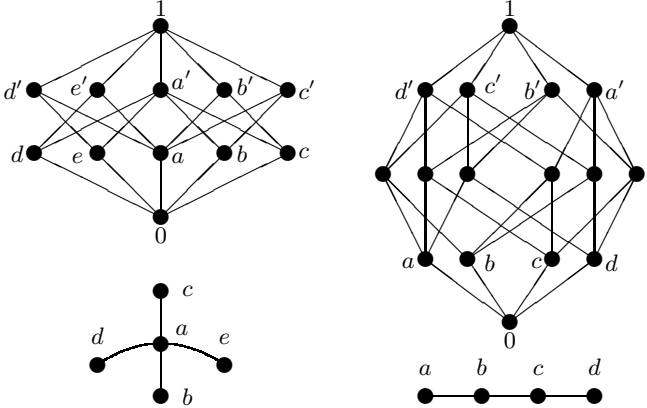
\begin{figure}[htp]\centering
  \setlength{\unitlength}{0.8pt}
  \begin{picture}(330,200)(0,0)
    \put(35,0){
      \begin{picture}(60,80)(0,0)
        \qbezier(0,40)(30,60)(60,40)
        \put(30,25){\line(0,1){50}}
        \put(30,75){\circle*{7}}
        \put(30,25){\circle*{7}}
        \put(30,50){\circle*{7}}
        \put(0,40){\circle*{7}}
        \put(60,40){\circle*{7}}
        \put(0,50){\makebox(0,0)[b]{$d$}}
        \put(40,54){\makebox(0,0)[b]{$a$}}
        \put(60,50){\makebox(0,0)[b]{$e$}}
        \put(40,25){\makebox(0,0)[l]{$b$}}
        \put(40,75){\makebox(0,0)[l]{$c$}}
      \end{picture}
    } 
    \put(5,110){
      \begin{picture}(60,90)(0,0)
        \put(60,90){\line(-2,-1){60}}
        \put(60,90){\line(-1,-1){30}}
        \put(60,90){\line(0,-1){30}}
        \put(60,90){\line(1,-1){30}}
        \put(60,90){\line(2,-1){60}}
        \put(60,0){\line(-2,1){60}}
        \put(60,0){\line(-1,1){30}}
        \put(60,0){\line(0,1){30}}
        \put(60,0){\line(1,1){30}}
        \put(60,0){\line(2,1){60}}
        \put(0,60){\line(1,-1){30}}
        \put(0,60){\line(2,-1){60}}
        \put(30,60){\line(-1,-1){30}}
        \put(30,60){\line(1,-1){30}}
        \put(60,60){\line(-2,-1){60}}
        \put(60,60){\line(-1,-1){30}}
        \put(60,60){\line(1,-1){30}}
        \put(60,60){\line(2,-1){60}}
        \put(90,60){\line(-1,-1){30}}
        \put(90,60){\line(1,-1){30}}
        \put(120,60){\line(-2,-1){60}}
        \put(120,60){\line(-1,-1){30}}

        \put(60,-5){\makebox(0,0)[t]{$0$}}
        \put(-5,30){\makebox(0,0)[r]{$d$}}
        \put(18,28){\makebox(0,0)[l]{$e$}}
        \put(65,28){\makebox(0,0)[l]{$a$}}
        \put(96,30){\makebox(0,0)[l]{$b$}}
        \put(125,30){\makebox(0,0)[l]{$c$}}
        \put(-5,60){\makebox(0,0)[r]{$d'$}}
        \put(17,62){\makebox(0,0)[l]{$e'$}}
        \put(65,65){\makebox(0,0)[l]{$a'$}}
        \put(96,62){\makebox(0,0)[l]{$b'$}}
        \put(125,60){\makebox(0,0)[l]{$c'$}}
        \put(60,95){\makebox(0,0)[b]{$1$}}
        \put(60,0){\circle*{7}}
        \put(0,30){\circle*{7}}
        \put(30,30){\circle*{7}}
        \put(60,30){\circle*{7}}
        \put(90,30){\circle*{7}}
        \put(120,30){\circle*{7}}
        \put(0,60){\circle*{7}}
        \put(30,60){\circle*{7}}
        \put(60,60){\circle*{7}}
        \put(90,60){\circle*{7}}
        \put(120,60){\circle*{7}}
        \put(60,90){\circle*{7}}
      \end{picture}
}
   \put(170,25){
      \begin{picture}(100,0)(0,0)
        \put(20,0){\line(1,0){80}}
        \put(20,0){\circle*{7}}
        \put(46.67,0){\circle*{7}}
        \put(73.33,0){\circle*{7}}
        \put(100,0){\circle*{7}}
        \put(20,10){\makebox(0,0)[b]{$a$}}
        \put(46.67,10){\makebox(0,0)[b]{$b$}}
        \put(73.33,10){\makebox(0,0)[b]{$c$}}
        \put(100,10){\makebox(0,0)[b]{$d$}}
      \end{picture}
    }

    \put(170,60){
      \begin{picture}(100,140)(0,0)
        \put(60,0){\line(-2,3){20}}
        \put(60,0){\line(2,3){20}}
        \put(60,0){\line(-4,3){40}}
        \put(60,0){\line(4,3){40}}
        \put(60,140){\line(-2,-3){20}}
        \put(60,140){\line(2,-3){20}}
        \put(60,140){\line(-4,-3){40}}
        \put(60,140){\line(4,-3){40}}

        \put(20,30){\line(-1,2){20}}
        \put(20,30){\line(0,1){40}}
        \put(20,30){\line(1,2){20}}
        \put(40,30){\line(-1,1){40}}
        \put(40,30){\line(1,1){40}}
        \put(40,30){\line(3,2){60}}
        \put(80,30){\line(-3,2){60}}
        \put(80,30){\line(0,1){40}}
        \put(80,30){\line(1,1){40}}
        \put(100,30){\line(-3,2){60}}
        \put(100,30){\line(0,1){40}}
        \put(100,30){\line(1,2){20}}

        \put(100,110){\line(1,-2){20}}
        \put(100,110){\line(0,-1){40}}
        \put(100,110){\line(-1,-2){20}}
        \put(80,110){\line(1,-1){40}}
        \put(80,110){\line(-1,-1){40}}
        \put(80,110){\line(-3,-2){60}}
        \put(40,110){\line(3,-2){60}}
        \put(40,110){\line(0,-1){40}}
        \put(40,110){\line(-1,-1){40}}
        \put(20,110){\line(3,-2){60}}
        \put(20,110){\line(0,-1){40}}
        \put(20,110){\line(-1,-2){20}}

        \put(60,140){\circle*{7}}
        \put(20,110){\circle*{7}}
        \put(40,110){\circle*{7}}
        \put(80,110){\circle*{7}}
        \put(100,110){\circle*{7}}
        \put(0,70){\circle*{7}}
        \put(20,70){\circle*{7}}
        \put(40,70){\circle*{7}}
        \put(80,70){\circle*{7}}
        \put(100,70){\circle*{7}}
        \put(120,70){\circle*{7}}
        \put(20,30){\circle*{7}}
        \put(40,30){\circle*{7}}
        \put(80,30){\circle*{7}}
        \put(100,30){\circle*{7}}
        \put(60,0){\circle*{7}}

        \put(60,-5){\makebox(0,0)[t]{$0$}}
        \put(105,28){\makebox(0,0)[l]{$d$}}
        \put(75,28){\makebox(0,0)[r]{$c$}}
        \put(48,28){\makebox(0,0)[l]{$b$}}
        \put(15,28){\makebox(0,0)[r]{$a$}}
        \put(105,110){\makebox(0,0)[l]{$a'$}}
        \put(75,110){\makebox(0,0)[r]{$b'$}}
        \put(48,113){\makebox(0,0)[l]{$c'$}}
        \put(15,110){\makebox(0,0)[r]{$d'$}}
        \put(60,145){\makebox(0,0)[b]{$1$}}


      \end{picture}
    }
  \end{picture}
\caption{3- and 4-dim Greechie diagrams and their
corresponding Hasse diagrams shown
above them.~\cite[Fig.~18, p.~84]{beran}
\label{fig:mmp}}
\end{figure}

The Hasse diagrams shown in Fig.~\ref{fig:mmp} is a subalgebra of
a Hilbert lattice but, as we show below (Th.~\ref{th:notsubalgebra}),
already a 3-dim one with
a third orthogonal triple attached to it is not. Therefore, if we tried
to arrive at complete lattices in a realistic application by reading off
all properties from a  corresponding Hilbert space description, we
would end up with complicated and unmanageable properties.
If we used just orthogonalities between, say, spin projections of a
considered system, we would arrive at an incorrect  description
by means of Greechie diagrams. In other words Greechie diagrams cannot
represent all possible OMLs---to do so, we also need more complicated
interconnections of blocks called {\em pastings}\cite[p.~48]{kalmb83}
that we do not describe here.

As mentioned below Def.~\ref{def:hl}, the number of atoms in an HL is
infinite, which means that finite Greechie diagrams cannot represent an
HL.  However, because of their practical advantages, it is natural to
ask whether Greechie diagrams can serve in the role of partial
representations or approximate representations of HLs, as has
been sometimes assumed in the literature as mentioned in the Introduction.
First, we make precise the notion of a partial representation with
the following definition.

\begin{definition}\label{def:subalgebra}

A {\em subalgebra} of an {\rm OL} (and thus an {\rm OML}, {\rm HL}, etc.)
$L=\langle L_0,',\cup,\cap\rangle$ is a set
$M=\langle M_0,',\cup, \cap\rangle$ where $M_0$ is a subset of $L_0$,
the operations $',\cup,\cap$ of $M$ are the same as the operations of $L$
(optionally restricted to $M_0$), and $M_0$ is closed under the
operations of $L$ (and therefore of $M$).

\end{definition}

Because the notion of subalgebra is crucial to our argument, we will
elaborate on it slightly.  Some literature definitions can be misleading
if not read carefully.  For example, Kalmbach \cite[p.~22]{kalmb83}
omits the algebra component breakdown as well as the word ``same.''
The reader could interpret an OML $M$ as being a subalgebra of
$L$ as long as $M_0$ is a subset of $L_0$ and $M_0$ is closed under the
operations of $M$ (even if different from the operations of $L$, which
might be the case if the operation symbols are interpreted as being
local to their associated algebras as is that author's convention elsewhere).
A careful definition can be found in e.g., Beran \cite[p.~18]{beran}.

\begin{lemma} \label{lem:subalgebra}
If $M$ is a subalgebra of $L$, then any equation (identity) that holds
in $L$ will continue to hold in $M$.  Equivalently, if an equation
fails in $M$ but holds in $L$, then $M$ cannot be a subalgebra of
$L$.
\end{lemma}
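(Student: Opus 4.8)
The plan is to reduce the statement to a routine structural induction on terms, with the real content residing entirely in the careful formulation of subalgebra given in Def.~\ref{def:subalgebra}. Recall that an equation (identity) is an equality $s=t$ of two terms $s,t$ built up from variables and the operations $',\cup,\cap$ (the constants $0,1$ being definable from these as in Def.~\ref{def:ourOL}), and that ``$s=t$ holds in $L$'' means that $s$ and $t$ evaluate to the same element of $L_0$ under every assignment of the variables to elements of $L_0$.

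First I would establish the key auxiliary fact: for every term $t$ and every assignment $v$ of its variables to elements of $M_0$, the value computed in $M$ coincides with the value computed in $L$, i.e.\ $t^M(v)=t^L(v)$. This is proved by induction on the construction of $t$. In the base case $t$ is a variable, so $t^M(v)=v(t)=t^L(v)$ trivially. For the inductive step, suppose $t=r\cup s$ (the cases $r\cap s$ and $r'$ are identical). By the inductive hypothesis $r^M(v)=r^L(v)$ and $s^M(v)=s^L(v)$, and both lie in $M_0$ because $M_0$ is closed under the operations (Def.~\ref{def:subalgebra}). Since the operation $\cup$ of $M$ is by definition the operation $\cup$ of $L$ restricted to $M_0$, we get $t^M(v)=r^M(v)\cup s^M(v)=r^L(v)\cup s^L(v)=t^L(v)$.

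With this in hand the main claim is immediate. Suppose $s=t$ holds in $L$ and let $v$ be any assignment of the variables into $M_0$. Since $M_0\subseteq L_0$, $v$ is in particular an $L$-assignment, so $s^L(v)=t^L(v)$; by the auxiliary fact this gives $s^M(v)=s^L(v)=t^L(v)=t^M(v)$. As $v$ was arbitrary, $s=t$ holds in $M$. The second (``equivalently'') statement is just the contrapositive of the first.

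The main point requiring care---indeed the reason the authors elaborate on Def.~\ref{def:subalgebra} immediately before this lemma---is not any computation but the insistence that the operations of $M$ be literally the restrictions of those of $L$ and that $M_0$ be closed under them. If one adopted the weaker (misleading) reading in which $M$ carries its own operations that merely happen to make it an OML, the auxiliary agreement $t^M(v)=t^L(v)$ could fail, and with it the lemma. Thus the whole argument hinges on the closure-and-restriction clause, and everything else is a routine induction.
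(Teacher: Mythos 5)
Your proof is correct and follows essentially the same route as the paper's: the paper's argument is precisely that every evaluation of a term over elements of $M_0$ agrees with the corresponding evaluation in $L$ because the operations coincide (restricted to $M_0$), which is the auxiliary fact you prove. You have merely made explicit, via structural induction on terms, what the paper declares ``obvious,'' which is a fair elaboration but not a different approach.
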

\begin{proof}
This is obvious from the fact that the operations on $M$ are equal to
the operations on $L$ (when restricted to the base set $M_0$ of $M$).
Any evaluation of an equation in $M$, i.e.~using elements from $M_0$,
will have the same final value as the same evaluation in $L$.  Since the
equation always holds in $L$, it will also always hold in $M$.
\end{proof}

{\em Remark.} Note that the above lemma does not necessarily apply to
quantified conditions.  A quantified condition, such as superposition
[Def.~\ref{def:hl}(3); Eq.~(\ref{eq:superp})], that holds in a lattice
may not hold in a sublattice.  As a trivial example, the quantified
condition ``has more than two elements'' does not hold in the
two-element subalgebra consisting of 0 and 1. Although superposition
holds vacuously in the two-element subalgebra (because it has only one
atom), it fails in the 3-dim Greechie diagram of Fig.~\ref{fig:mmp},
which is a subalgebra of any HL (in which superposition holds).

In the case of an OML represented by a Greechie diagram, a subgraph is
not necessarily a subalgebra.  A counterexample is provided by Fig.~8a
and Fig.~8b of Ref.~\onlinecite{bdm-ndm-mp-1}, where the first figure is
a Greechie diagram that is a subgraph of the second, but the
corresponding OMLs do not have a subalgebra relationship.  In
particular, an equation holding in a Greechie diagram may not hold in a
subgraph of it, as that example shows.

The question as to whether Greechie diagrams can be subalgebras of
Hilbert lattices is answered by the following theorem.

\begin{theorem}\label{th:notsubalgebra}
Any Greechie diagram containing blocks that do not share atoms is not a
subalgebra of the lattice $\mathcal{C}(\mathcal{H})$ for a Hilbert
space with dimension 3 or greater.
\end{theorem}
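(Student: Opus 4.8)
The plan is to assume, for contradiction, that the OML $M$ defined by the Greechie diagram is ortho-isomorphic to a subalgebra of $L=\mathcal C(\mathcal H)$ with $\dim\mathcal H\ge 3$, and to compare the value a cross-block join takes in $M$ with the value it is forced to take in $L$. First I would record two facts. Since the operations $\cup,\cap,{}'$ of $M$ are literally those of $L$, the units coincide, $1_M=a\cup a'=1_L=\mathcal H$; hence the atoms of each block $B$ furnish an orthogonal decomposition $\mathcal H=\bigoplus_{p\in B}\mathcal H_p$, where $\mathcal H_p$ is the subspace corresponding to the atom $p$. Moreover, any two distinct atoms of an OML meet in $0$, so for atoms $a\in B_1$ and $d\in B_2$ taken from the two non-sharing blocks we have $a\cap d=0$, and therefore $\mathcal H_a\cap\mathcal H_d=0$ in $L$.

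The decisive step is to evaluate the cross-block join in $M$ and show $a\cup d=1$. Restricting to the subalgebra generated by $B_1\cup B_2$, two Boolean blocks that share no atom overlap only in $\{0,1\}$, so their pasting is a horizontal sum in which any proper element of one block and any proper element of the other join to $1$; this is exactly where the non-sharing hypothesis is used. Transporting $a\cup d=1$ to $L$ yields $\mathcal H_a\cup\mathcal H_d=\mathcal H$ (closed span), and with $\mathcal H_a\cap\mathcal H_d=0$ this forces $\dim\mathcal H_a+\dim\mathcal H_d=\dim\mathcal H=:N$ for every cross pair. As the right-hand side is fixed, all atoms of $B_1$ must share a single dimension $p$ and all atoms of $B_2$ a single dimension $q$, with $p+q=N$; summing over the two blocks gives $k_1 p=N=k_2 q$, hence $1/k_1+1/k_2=1$. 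For blocks with $k_1,k_2\ge 3$ atoms, as we require of Greechie diagrams, this is impossible, which is the desired contradiction. In the paper's main case $N=3$ the computation degenerates to the obvious remark that two distinct rays span only a plane, never all of $\mathcal H$, so the lattice value $a\cup d=1$ can never be realised in $L$.

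Alternatively, and more in the spirit of Lemma~\ref{lem:subalgebra}, the same joins exhibit a pentagon inside $M$: the chain $0<a<a\cup b<1$ of $B_1$ together with a side atom $d\in B_2$ satisfies $a\cup d=(a\cup b)\cup d=1$ and $a\cap d=(a\cup b)\cap d=0$. Writing $z=a\cup b$, the modular identity $(x\cap z)\cup(y\cap z)=((x\cap z)\cup y)\cap z$ then fails for $x=a$, $y=d$: its left side evaluates to $a$ while its right side evaluates to $z=a\cup b\ne a$. Since a finite-dimensional $\mathcal C(\mathcal H)$ is an MOL (Subsec.~\ref{subsec:hl}), this identity holds throughout $L$, so by Lemma~\ref{lem:subalgebra} $M$ cannot be a subalgebra of $L$.

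The step I expect to be the real obstacle is pinning down $a\cup d=1$: this is precisely where both the non-sharing hypothesis and the Greechie pasting rules must be invoked, and it is what makes the Hilbert-space join a \emph{proper} subspace that disagrees with the lattice value. A secondary difficulty is the genuinely infinite-dimensional regime, where the cardinal identity $1/k_1+1/k_2=1$ no longer closes the argument and the modular identity is unavailable in $\mathcal C(\mathcal H)$; there one must instead either locate an identity valid in every Hilbert lattice—such as an orthoarguesian law (Theorem~\ref{th:noa})—that separates the horizontal-sum substructure, or argue directly with the orthogonal complements and codimensions of the two decompositions.
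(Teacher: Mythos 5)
Your decisive step --- that $a\cup d=1$ in $M$ for \emph{every} atom $a\in B_1$ and $d\in B_2$ --- is false, and the reason it fails is exactly what the paper's proof is engineered to get around. The hypothesis of the theorem is only that the diagram \emph{contains} two blocks sharing no atom; the diagram may contain further blocks, and the join $a\cup d$ is computed in the whole pasted OML $M$, not in the two-block sub-diagram. Your ``restriction to the subalgebra generated by $B_1\cup B_2$'' does not produce a horizontal sum, because a coatom contributed by a third block can be a common upper bound of a cross pair. Concretely, take the legal Greechie diagram with blocks $\{1,2,3\}$, $\{3,4,5\}$, $\{5,6,7\}$ (it has no loops at all, so the loop condition of Lemma~\ref{lemma:greechie} holds). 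Here $B_1=\{1,2,3\}$ and $B_2=\{5,6,7\}$ share no atom, yet in $M$ one has $1\cup 5=3'$ and $3\cup 5=4'$, both strictly below $1$. Consequently the bookkeeping $\dim\mathcal{H}_a+\dim\mathcal{H}_d=N$ for \emph{all} cross pairs --- the sole input to your contradiction $1/k_1+1/k_2=1$ --- is unavailable, and the same defect invalidates the pentagon/modularity variant, which needs $a\cup d=(a\cup b)\cup d=1$ and $(a\cup b)\cap d=0$.

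The paper closes this gap with two ingredients your proposal lacks. First, it uses only \emph{one} cross pair, chosen so that no atom of the diagram is orthogonal to both $a$ and $d$ (no common third block); the existence of such a pair is precisely where the no-loops-of-order-less-than-five requirement enters (in the example above $(1,6)$ is such a pair, and for it indeed $1\cup 6=1$). Second, the paper realizes the atoms of a Greechie diagram as what they are in every intended application: one-dimensional subspaces spanned by vectors. Then the chosen pair's join in $\mathcal{C}(\mathcal{H})$ is at most two-dimensional, hence not $\mathcal{H}$ once $\dim\mathcal{H}\ge 3$ --- finite or infinite --- contradicting the equality of operations demanded by Def.~\ref{def:subalgebra}. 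This second ingredient is not cosmetic: a single good pair cannot rescue your dimension count (with subspaces of unrestricted dimension it only forces $\dim\mathcal{H}_a+\dim\mathcal{H}_d=N$, which is consistent with $k_1,k_2\ge 3$ for every $N\ge 4$), and it is also what makes the infinite-dimensional case --- which you correctly flag as out of reach for both your cardinal arithmetic and your MOL argument --- come for free.
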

\begin{proof}
Choose one atom from each such block that are not shared with a common
third block.  (There will always be such atoms due to the requirement
that there be no loops of order less than five in a Greechie diagram.)
The join of these two atoms is the lattice unit.  However,
in any $\mathcal{C}(\mathcal{H})$, the join of any two distinct
atoms (one-dimensional subspaces spanned by vectors) whatsoever
spans a 2-dimensional subspace, which for a Hilbert space of
dimension $>2$ is not the whole space (lattice unit).  This violates
the requirement of Def.~\ref{def:subalgebra} that the operations be
the same.
\end{proof}

Thus the only Greechie diagrams that can be subalgebras of a lattice of
Hilbert space subspaces with dim $> 2$, and thus of the ortho-isomorphic
HL, are either single blocks, such as in Fig.~\ref{fig:mmp}, or
those in a ``star'' configuration where all blocks share a
common atom (Fig.~\ref{fig:mod-star}).

\subsection{Semi-quantum lattices}\label{subsec:semi}

Now we can state our main theorem.
\begin{theorem}\label{th:rksa}{\em [Semi-quantum lattice algorithms]}
There exist OMLs represented
by Greechie diagrams that admit superposition,
real-valued states, and a vector state given by Eq.~(\ref{eq:E-4})
but do not admit other conditions that have to be satisfied
by every Hilbert lattice, in particular equations like the orthoarguesian
and Godowski ones. As a consequence of violating Godowski
equations, these OMLs do not admit strong sets of states.
\end{theorem}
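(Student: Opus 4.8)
The plan is to prove the statement by explicit construction. Since OL, OML, and each of the varieties $n$OA, $n$GO, and $E_n$ are equational varieties, every equational assertion below can be decided by evaluating the relevant identity on a single finite lattice; it therefore suffices to exhibit one concrete OML --- encoded as an MMP hypergraph / Greechie diagram from the equal-atoms-and-blocks family generated later in the paper --- and certify each property on it. Concretely, I would fix such a diagram $L$ in which every block is an orthogonal triple and the diagram is connected through shared atoms, so that it is a genuine OML (the loop-free and three-atoms-per-block conditions of Def.~\ref{def:greechie} guarantee orthomodularity), and then verify each clause in turn.

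First, superposition [Def.~\ref{def:hl}(3)]. I would split the pairs of distinct atoms $a,b$ into two cases. If $a$ and $b$ lie in no common block, then in the OML generated by the diagram their join is the unit, $a\cup b=1$; this is exactly the phenomenon behind Theorem~\ref{th:notsubalgebra} (two atoms of non-adjacent blocks already join to $1$), and since $1$ dominates every atom, any third atom $c$ witnesses $c\le a\cup b$. If instead $a,b$ share a block $\{a,b,c\}$, then $a\cup b=c'$, so superposition for this pair reduces to the existence of a further atom below $c'$, i.e.\ an atom orthogonal to $c$ other than $a,b$; this is present precisely when $c$ is shared with a second block, which I arrange by choosing $L$ with every atom lying in at least two blocks. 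Clause (b) of Def.~\ref{def:hl}(3) I would verify by the same finite inspection, consistent with the characterization of Theorem~\ref{th:bc1}. The existence of nonzero real-valued states is certified directly: one exhibits an explicit probability assignment on the atoms, block by block, so that the state space is nonempty.

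Next come the equational certificates. I would evaluate Mayet's vector-state equation Eq.~(\ref{eq:E-4}) on $L$ and confirm that it passes (these lattices lie among those for which $E_4$ holds, as already noted in Subsec.~\ref{subsec:vectorstates}), so $L$ admits the corresponding vector state. I would then exhibit an assignment of lattice elements to the variables of the smallest orthoarguesian law $3$OA, Eq.~(\ref{eq:oa3}), under which the inequality fails in $L$; because every $n$OA implies $3$OA (noted after Eq.~(\ref{eq:oa3})), this single failure kills the entire orthoarguesian family. Likewise I would produce a variable assignment under which one Godowski law, say $3$-Go from Theorem~\ref{th:god-eq}, fails in $L$. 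The final sentence is then immediate: Theorem~\ref{th:god-eq} states that all Godowski equations hold in every OML admitting a strong set of states, so its contrapositive forces an OML in which a Godowski law fails to admit no strong set of states.

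The main obstacle is not any individual verification --- each is a finite, mechanical lattice evaluation --- but exhibiting one $L$ on which the ``quantum-like'' properties (superposition, real and vector states) and the failures (orthoarguesian, Godowski) hold \emph{simultaneously}; these pull in opposite directions, since the classical small OMLs that violate $3$OA typically carry too few states to support superposition or a rich enough state space, whereas highly symmetric, state-rich diagrams tend to satisfy the very equations we need to fail. Threading all constraints at once is exactly what the bipartite-graph / MMP generation machinery of Secs.~\ref{sec:one-state} and \ref{sec:one-state-p} is built to do, and the substantive part of the proof is to point to an output of that machinery and run the checks above on it.
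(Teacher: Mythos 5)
Your overall plan---exhibit a lattice from the equal-atoms-and-blocks family of Secs.~\ref{sec:one-state} and \ref{sec:one-state-p} and certify each property by finite evaluation---is exactly the paper's own proof: its ``main proof'' is the exhaustive generation of the 35-35 through 41-41 OMLs from cubic bipartite graphs, together with machine verification that they pass Eq.~(\ref{eq:superp}) and Eq.~(\ref{eq:E-4}), admit real-valued states, and violate the orthoarguesian and Godowski equations, the no-strong-states clause then following from Theorem~\ref{th:god-eq} by the same contrapositive you invoke.

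There are, however, two concrete errors in your treatment of superposition. First, in a Greechie lattice the join of two atoms $a,b$ lying in no common block is \emph{not} always $1$: if some atom $d$ is orthogonal to both (and by the no-loops-of-order-less-than-five condition there is at most one such $d$), then $a\cup b=d'$, a coatom, not the unit. Clause (a) still holds in the cubic lattices because $d$ lies in three blocks, so $d'$ dominates four atoms besides $a$ and $b$; but your case split, and your gloss on Theorem~\ref{th:notsubalgebra} (whose proof deliberately chooses atoms \emph{not} linked through a common third block), are wrong as stated. Second, and more seriously, the step ``verify clause (b) of Def.~\ref{def:hl}(3) by the same finite inspection'' would fail outright: pick atoms $a,b$ with $a\cup b=1$ (such pairs exist in every lattice of this family, since an atom has only $6$ orthogonal atoms and at most $24$ atoms sharing an orthogonal atom with it, fewer than the $34$ or more remaining atoms), and let $c$ be a block-mate of $b$, say in block $\{b,c,e\}$. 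Then $c\le a\cup b$ and $c\notin\{a,b\}$, yet $a\le b\cup c=e'$ would make $e$ a common orthogonal atom of $a$ and $b$, contradicting $a\cup b=1$. Hence clause (b) fails in all these OMLs. The theorem is true---and is what the paper actually checks with its programs---only for superposition in the sense of the prenex formula Eq.~(\ref{eq:superp}), i.e., the existence clause (a) of Def.~\ref{def:hl}(3) (equivalently, the condition of Theorem~\ref{th:bc1}); your proof must restrict ``admit superposition'' to that reading, or the verification you propose will refute itself at this step.
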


We point out here that we developed special algorithms and
programs (e.g., {\tt states}) that follow the definition Def.~\ref{def:strong}
of the strong set of states and are much faster than
those that check whether an equation passes in a lattice.
Besides, a lattice that satisfies Godowski equations need not
admit a strong set of states.

The generation algorithms mentioned in Theorem \ref{th:rksa}
are presented in Sec.~\ref{sec:one-state}.
The outcomes of our massive computations, given in Sec.\
\ref{sec:one-state-p} and based on these algorithms, provide
Theorem \ref{th:rksa} with the following corollary:

\begin{corollary}\label{th:rks}{\em [Semi-quantum lattices]}
There exists a class of OMLs that admit superposition,
real-valued states, and a vector state but do not admit other
conditions that have to be satisfied by every Hilbert lattice.
\end{corollary}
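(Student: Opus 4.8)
The plan is to read Corollary~\ref{th:rks} as a pure existence statement and to discharge it by exhibiting a nonempty, explicitly computed family of OMLs and certifying each listed property on its members. The corollary differs from Theorem~\ref{th:rksa} only in that it drops the reference to the particular witnessing vector state [Eq.~(\ref{eq:E-4})] and asserts that the witnesses form a whole class rather than merely existing; consequently the substance of the argument is already contained in Theorem~\ref{th:rksa}, and the task is to produce the class and verify its members one clause at a time.

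First I would run the generation procedure of Theorem~\ref{th:rksa}: exhaustively generate cubic bipartite graphs, pass to the associated MMP hypergraphs, and read off the corresponding OMLs with equal numbers of atoms and blocks (the concatenated orthogonal-triple lattices of Sec.~\ref{sec:one-state-p}). For each generated lattice $L$ I would then certify the three positive features separately. Superposition is checked by testing the prenex condition Eq.~(\ref{eq:superp}) directly, which is exactly what the {\tt latticeg.c} program does. The existence of a real-valued state (Def.~\ref{def:state}) and of a vector state is established by exhibiting one explicitly; for lattices assembled from orthogonal triples a consistent probability assignment is available, and the vector state is certified by verifying that Eq.~(\ref{eq:E-4}) passes in $L$.

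The substantive half is the failure of the Hilbert-lattice conditions. Here I would evaluate the smallest orthoarguesian law 3OA [Eq.~(\ref{eq:oa3})] on $L$; since every $n$OA implies 3OA, a single failure of 3OA shows that no $n$OA holds. By Lemma~\ref{lem:subalgebra}, an equation that holds in every HL but fails in $L$ forces $L$ to be no subalgebra of any HL, which is precisely the sense in which $L$ is \emph{semi-quantum}. I would likewise exhibit a Godowski equation that fails in $L$; by the contrapositive of Theorem~\ref{th:god-eq}, failure of any $n$-Go equation entails that $L$ admits no strong set of states (Def.~\ref{def:strong}). The exhaustive search of Sec.~\ref{sec:one-state-p} actually produces lattices meeting every clause at once, so the class is nonempty and in fact large, extending up to $41$ atoms and blocks.

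I expect the main obstacle to be not any single verification but the \emph{simultaneity} of the constraints: superposition together with real-valued and vector states pulls toward quantum-like behaviour, whereas the required failure of 3OA pulls the other way, and a priori it is unclear that both can coexist in one finite lattice. This tension is resolved only empirically, by the massive computation of Sec.~\ref{sec:one-state-p}, whose role is precisely to certify that the competing demands are jointly satisfiable; once a single witness passing superposition, a real-valued state, and Eq.~(\ref{eq:E-4}) while failing 3OA is found, the asserted class exists and the corollary follows.
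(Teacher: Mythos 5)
Your proposal is correct and follows essentially the same route as the paper: Corollary~\ref{th:rks} is obtained there as a direct consequence of Theorem~\ref{th:rksa} together with the exhaustive bipartite-graph generation of Sec.~\ref{sec:one-state} and the computational verification in Sec.~\ref{sec:one-state-p} that the resulting equal-atom-and-block OMLs (35--35 through 41--41) admit superposition, real-valued states, and the vector state of Eq.~(\ref{eq:E-4}) while failing the orthoarguesian and Godowski equations, hence admitting no strong set of states. Your individual certification steps (3OA failure implying failure of all $n$OA, and Godowski failure implying no strong set of states via the contrapositive of Theorem~\ref{th:god-eq}) match the paper's own reasoning.
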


This corollary corresponds to the original KS theorem and
Theorem \ref{th:rksa} corresponds to the algorithms that
generate KS vectors as given in Ref.~\onlinecite{pmmm03a}.
Moreover, hopefully we shall be able use the same algorithms
to generate genuine and complete KS setups and prove
a non-vacuous KS theorem, because an OML
that admits Mayet vector states and superposition and
all other Hilbert lattice conditions corresponds to
a realistic quantum system whose measurement does not allow
a classical interpretation. For the time being, however,
this project apparently exceeds today's computing power.

As shown in the next sections, we can give the proof of the
theorems in several different ways. However, our main proof
is provided by algorithms for exhaustive generation of
Greechie diagrams with
equal number of atoms and blocks generated from cubic bipartite
graphs presented in Sec.~\ref{sec:one-state}. We generated
all such lattices from the smallest ones with 35 atoms and 35
blocks through all those that have 41 atoms and 41 blocks
in which particular known Hilbert lattice equations fail.  Thus,
although they satisfy a number of Hilbert lattice conditions
they represent impossible setups.

\section{\label{sec:represent}Why 3D Kochen-Specker
Setups Cannot Be Described with Greechie
Diagrams, and How They Can Be}

In the Introduction we mentioned that the Hultgren and Shimony
tried to build up a lattice that would correspond to a
spin-1 Stern-Gerlach experiment.  Orthogonal vectors of spin-1
projections determine directions in which we prepare spin projections
of a particle or orient our detection devices.
We can choose one-dimensional subspaces ${\mathcal H}_a,\dots
,{\mathcal H}_e$ as shown in Fig.~\ref{fig:mmp}, where we
denote them as $a,\dots,e$. The first Hasse diagram shown in
Fig.~\ref{fig:mmp} graphically represents the orthogonality
between the vectors in a 3-dim space---in our case the ones between
each chosen vector and a plane determined by the other two.
In particular, the orthogonalities are $a\perp b,c,d,e$ since
$a\le b',c',d',e'$, $b\perp c$ since $a\le c'$, and
$d\perp e$ since $d\le e'$. Also, e.g., $b'$ is a
complement of $b$ and that means a plane to which $b$
is orthogonal: $b'=a\cup c$. Eventually $b\cup b'=1$ where
$1$ stands for $\mathcal H$.

That shows that if we wanted to use a Greechie diagram for some
application or if wanted to just generate it or check on some of its
properties we have to use all the elements of its Hasse diagrams.
So, our idea is to use a graphical pattern of Greechie diagrams
directly and to go around all the elements contained in the Hasse
diagrams. For that we needed another definition of a Greechie
diagram which exploited only graphical elements of its
shorthand representation of a Hasse diagram---atoms and blocks.
The following lemma provide us with such a definition.

\begin{lemma}\label{lemma:greechie}
A definition equivalent to Def.~\ref{def:greechie} is the following
one~\cite{greechie71}
\begin{enumerate}
\item[(1)] Every atom belongs to at least one block;
\item[(2)] If there are at least two atoms, then every block is at least
2-element;
\item[(3)] Every block which intersects with another block is at least
3-element;
\item[(4)] Every pair of different blocks intersects in at most one atom;
\item[(5)] There is no loop of order less than 5,
\end{enumerate}
where {\em loop} of order $n\ge 2$---$(b_1,\dots, b_n)$ is a sequence of
different blocks such that there are mutually distinct atoms
$a_1,\dots,a_n$ with $a_i\in b_i\cap b_{i+1}\ (i=1,\ldots,n;\ b_{n+1}=b_1)$.
\end{lemma}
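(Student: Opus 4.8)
The plan is to prove the equivalence in both directions, relating the purely combinatorial incidence structure on atoms and blocks described by (1)--(5) to the order-theoretic content of Def.~\ref{def:greechie}, namely that the diagram be shorthand for the Hasse diagram of an OML. The bridge in both directions is the standard observation that each block, being a maximal family of mutually orthogonal atoms together with the elements they span, carries the structure of a finite Boolean algebra $2^{|b|}$, and that the whole diagram represents the orthomodular structure obtained by \emph{pasting} these Boolean algebras along their shared atoms (and the in-block complements of those atoms).

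First I would treat the forward direction: assuming the diagram is the Greechie diagram of an OML $L$ in the sense of Def.~\ref{def:greechie}, I would read off each of (1)--(5) from the lattice structure. Condition (1) holds because every atom [Def.~\ref{def:hl}(2)] lies below $1=a\cup a'$ and hence belongs to some maximal orthogonal family, i.e.\ a block. Condition (2) is the nondegeneracy that a block with more than one atom spans at least a two-element orthocomplemented sublattice. Condition (3) reflects that a $2$-element block meeting another block would force its shared atom to coincide with an in-block complement in both blocks, degenerating the structure, so blocks that intersect must be at least $3$-element; and condition (4) is the requirement that two distinct Boolean subalgebras of an OML cannot overlap in more than a single atom without being identified, since a shared second atom forces a shared join and collapses the two blocks. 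The substantive part is condition (5): I would argue that a loop of order $3$ or $4$ among the blocks forces a system of orthogonality relations admitting no orthomodular solution, so orthomodularity of $L$ already precludes loops of order less than $5$ (order-$2$ loops being already excluded by (4)).

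For the converse I would invoke the pasting construction of \cite{greechie71}: given an incidence structure satisfying (1)--(5), form for each block $b$ the Boolean algebra on its atoms, and paste these along the atoms identified by (4). Conditions (1)--(4) guarantee that this pasting is well defined---each atom sits in a consistent family of blocks and no two distinct blocks are forced to merge---so the result is at least an orthocomplemented poset in which meets and joins of pairwise-orthogonal atoms are computed blockwise. I would then show it is an orthomodular \emph{lattice} exactly when (5) holds: the absence of loops of order $3$ and $4$ is precisely what prevents the pasted structure from acquiring the non-orthomodular configurations that short cycles create, and what guarantees that all required joins exist. The resulting OML has the given incidence structure as its Greechie diagram in the sense of Def.~\ref{def:greechie}, closing the equivalence.

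I expect the main obstacle to be the orthomodularity half of the converse, i.e.\ verifying that (5) alone upgrades the pasted orthoposet to an OML satisfying Eq.~(\ref{eq:oml}). The delicate point is that a short loop need not be visible within any single block, so the argument must track how orthogonalities propagate around a cycle of blocks and show that only cycles of length $\ge 5$ fail to force either a collapsing identification or an explicit distributivity failure. I would handle this by the case analysis on short loops from \cite{greechie71}, treating the order-$3$ and order-$4$ cases separately and exhibiting in each the concrete six-element non-orthomodular sublattice (the benzene-ring obstruction) whose presence would contradict Eq.~(\ref{eq:oml}).
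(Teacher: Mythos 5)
The paper offers no proof of this lemma at all: it is stated as a borrowed definition and supported only by the citation to Ref.~\onlinecite{greechie71}, together with the earlier remark (after Def.~\ref{def:greechie}) deferring to Kalmbach's precise treatment. So your proposal can only be compared with the standard literature argument---the pasting of block Boolean algebras and Greechie's Loop Lemma---and that is indeed the architecture you chose: forward direction by reading the incidence conditions off the lattice, converse by pasting and invoking the loop conditions. At that level your plan is the right one.

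However, there is a genuine gap precisely at the step you identify as delicate. You propose to settle both the order-3 and order-4 loop cases by exhibiting ``the concrete six-element non-orthomodular sublattice (the benzene-ring obstruction) whose presence would contradict Eq.~(\ref{eq:oml})''. This cannot work for 4-loops: the almost-disjoint pasting of a 4-loop of blocks is a perfectly good orthomodular \emph{poset}, so Eq.~(\ref{eq:oml}) is not violated and no $O_6$ subalgebra can occur (an $O_6$ would violate the poset orthomodular law). What fails is the lattice property itself: with $x_1\in b_1\cap b_2$, $x_2\in b_2\cap b_3$, $x_3\in b_3\cap b_4$, $x_4\in b_4\cap b_1$, the non-orthogonal pair $x_1,x_3$ has two incomparable minimal upper bounds $x_2'$ and $x_4'$ and hence no join. (The 3-loop case is analogous: there the join of two \emph{orthogonal} atoms already has two incomparable minimal upper bounds, which kills the orthomodular-poset property; again the obstruction is non-existence of suprema, not an embedded benzene ring.) This is exactly why the Loop Lemma splits into two statements---no 3-loops gives an OMP, no 3- or 4-loops gives an OML---and a proof that only hunts for orthomodularity failures cannot close the 4-loop case. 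A smaller but real defect is your justification of condition (4): the claim that two distinct blocks of an OML sharing two atoms must ``collapse'' is valid only for 3-element blocks; in $\mathcal{C}(\mathbb{C}^4)$ two orthonormal bases can share exactly two rays without collapsing, so (4) is not a consequence of being an OML but part of what singles out the OMLs that are Greechie-representable at all---consistent with the paper's own remark that Greechie diagrams cannot represent every OML and that general pastings are needed.
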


Using this definition we recognize a Greechie diagram as a special
case of an MMP hypergraph.

\begin{definition}\label{def:mmp}
A {\em hypergraph} is a set of vertices (drawn as points) together
with a set of
edges (drawn as line segments connecting points).
An {\em MMP hypergraph} is a hypergraph in which
\begin{itemize}
\item[(i)] Every vertex belongs to at least one edge;
\item[(ii)] Every edge contains at least 3 vertices;
\item[(iii)] Edges that intersect each other in $n-2$
         vertices contain at least $n$ vertices.
\end{itemize}
\end{definition}

This definition enables us to formulate algorithms for
exhaustive generation of MMP hypergraphs, which are
exponentially faster than possible generation of Greechie
diagrams by means of Def.~\ref{def:greechie}, because MMP
hypergraphs are just sets of vertices and edges with no other
meaning or conditions imposed on them. Any condition we want
lattices to satisfy we build into generation algorithms, which
can speed up the generation further. As opposed to this,
a lattice approach requires the generation of all
possible lattices first and then filtering out lattices
that meet the condition. For the time being we just
assume that each vertex (atom; see below) is orthogonal
to other two on the edge they share. But as opposed
to Greechie diagrams we shall also have relations
between nonorthogonal vertices.

\font\1=cmss8
\font\2=cmssdc8
\font\3=cmr8

We encode MMP hypergraphs by means of alphanumeric
and other printable ASCII characters. Each vertex (atom)
is represented by one of the following
characters: {\11\hfil $\:$2\hfil $\:$3\hfil $\:$4\hfil
$\:$5\hfil $\:$6\hfil $\:$7\hfil $\:$8\hfil $\:$9\hfil
$\:$A\hfil $\:$B\hfil $\:$C\hfil $\:$D\hfil $\:$E\hfil
$\:$F\hfil $\:$G\hfil $\:$H\hfil $\:$I\hfil $\:$J\hfil
$\:$K\hfil $\:$L\hfil $\:$M\hfil $\:$N\hfil $\:$O\hfil
$\:$P\hfil $\:$Q\hfil $\:$R\hfil $\:$S\hfil $\:$T\hfil
$\:$U\hfil $\:$V\hfil $\:$W\hfil $\:$X\hfil $\:$Y\hfil
$\:$Z\hfil $\:$a\hfil $\:$b\hfil $\:$c\hfil $\:$d\hfil
$\:$e\hfil $\:$f\hfil $\:$g\hfil $\:$h\hfil $\:$i\hfil
$\:$j\hfil $\:$k\hfil $\:$l\hfil $\:$m\hfil $\:$n\hfil
$\:$o\hfil $\:$p\hfil $\:$q\hfil $\:$r\hfil $\:$s\hfil
$\:$t\hfil $\:$u\hfil $\:$v\hfil $\:$w\hfil $\:$x\hfil
$\:$y\hfil $\:$z\hfil $\:$!\hfil $\:$"\hfil $\:$\#\hfil
$\:${\scriptsize\$}\hfil $\:$\%\hfil $\:$\&\hfil $\:$'\hfil $\:$(\hfil
$\:$)\hfil $\:$*\hfil $\:$-\hfil $\:$/\hfil $\:$:\hfil
$\:$;\hfil $\:$$<$\hfil $\:$=\hfil $\:$$>$\hfil $\:$?\hfil
$\:$@\hfil $\:$[\hfil $\:${\scriptsize$\backslash$}\hfil $\:$]\hfil
$\:$\^{}\hfil $\:$\_\hfil $\:${\scriptsize$\grave{}$}\hfil
$\:${\scriptsize\{}\hfil
$\:${\scriptsize$|$}\hfil $\:${\scriptsize\}}\hfil
$\:${\scriptsize\~{}}}\ , and then again all these characters prefixed
by `+', then prefixed by `++', etc. There is no upper limit on the
number of characters.

Each block is represented by a string of characters that
represent atoms (without spaces). Blocks are separated by
commas (without spaces). All blocks in a line form a
representation of a hypergraph.  The order of the blocks is
irrelevant---however, we shall often present them
starting with blocks forming the biggest loop to facilitate
their possible drawing.  The line must end with a full stop.
Skipping of characters is allowed.

For 3-atoms-in-a-block lattices, the biggest possible loop
is either $n/2$ (for an even $n$) or $(n-1)/2$ (for an odd $n$),
where $n$ is the number of atoms. To see this this let use
all (or all except one for odd number of atoms) atoms to
form such a loop. If we did not count the first atom in the
first block, each concatenated new block would contribute
with two new atoms to the chain and when we finally close the
chain so as to form the loop, one of the two new atoms
in the last block will coincide with the atom from their first
block, which we did not take into account at the beginning of
our enumeration. That means that all additional blocks will
only connect atoms already making the biggest loop (apart
from the free remaining one in lattices with odd number of
atoms). The more restrictions we impose on a lattice the
smaller the biggest loop will be.

As a functional example, below we present lattices in which
Eq.~(\ref{eq:E-3}) and (\ref{eq:E-4}) fail.

\begin{figure}[htp]
\begin{center}
\includegraphics[width=0.47\textwidth,height=0.12\textwidth]{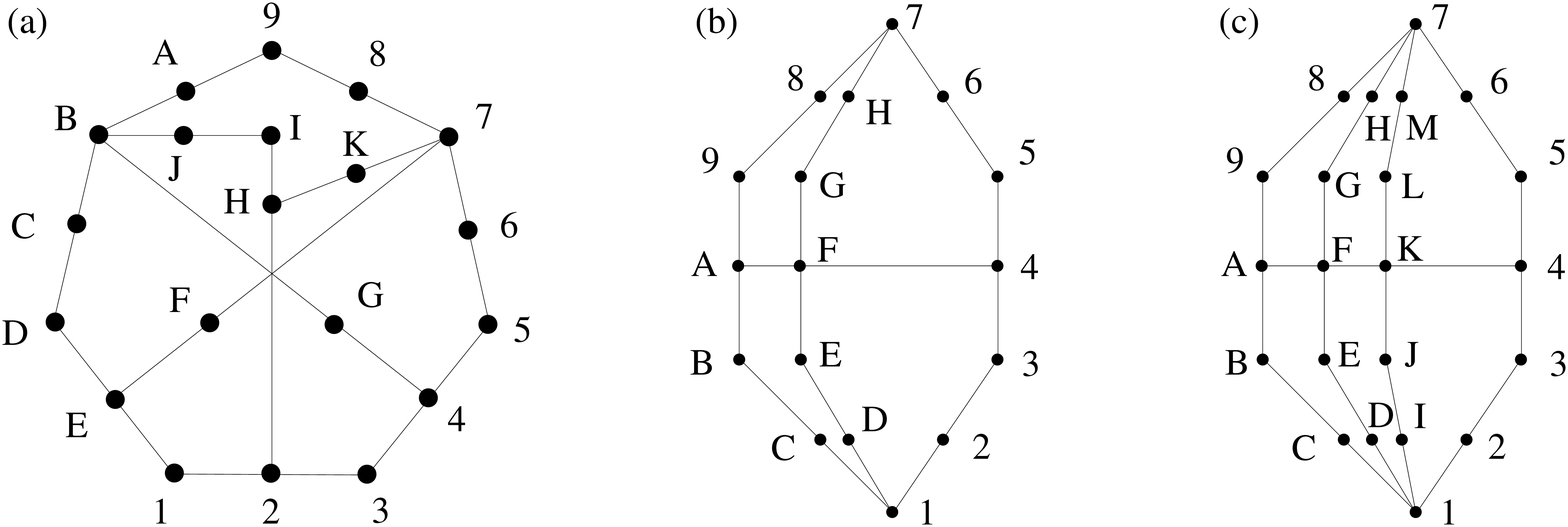}
\end{center}
\caption{(a) OML L42 \cite{mpoa99} in which  Eq.~(\ref{eq:E-3})
fails; (b) $E_3$ \cite{pm-ql-l-hql2} in which Eq.~(\ref{eq:E-3})
fails; (b) $E_4$ \cite{pm-ql-l-hql2} in which Eq.~(\ref{eq:E-4}) fails.}
\label{fig:l42-e3-e4}
\end{figure}

We can now come back to the problem of finding lattices
that would correspond to realistic experiments.
To understand the problem better we shall discuss most known
3D KS lattices that are usually considered to be experimentally
feasible. This will make clear why none of these
KS setups can be experimentally verified and why they are not
``quantum,'' following the idea presented in.~\cite{pavicic-primosten09}

We start with the original KS to show how it can be
represented as an MMP hypergraph in our notation:
{\1123,  345,  567,  789,  9AB,  BC1, \dots , D7z,
\dots, 1z+U.}, as shown in  Fig.~\ref{fig:ks-117}.
The other atoms and blocks can easily be read off from
the figure of the hypergraph.

We give MMP hypergraphs of 4  well-known 3D KS setups below to
enable computer verification of our present and other future
statements on them. Notice that number of atoms and therefore the number of
vectors is 192 and not 117 as commonly assumed. For an explanation of
this discrepancy see the comment on Fig.~\ref{fig:conw-k} in the text below.

\begin{figure*}[htp]
\begin{center}
\includegraphics[width=0.97\textwidth,height=0.17\textwidth]{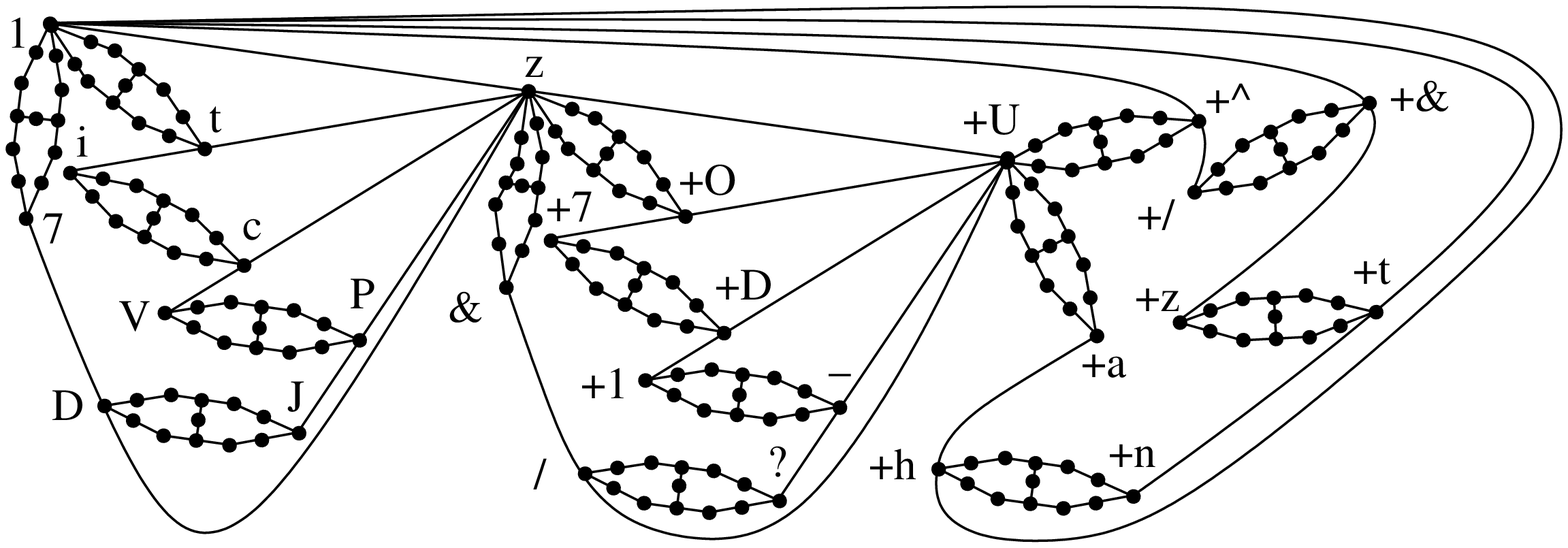}
\end{center}\vskip-10pt
\caption{KS original setup.\cite{koch-speck}. {\em How to} turn their figure
into an MMP is in Ref.~\onlinecite{pmmm03a}.}
\label{fig:ks-117}
\end{figure*}

We establish an OML representation of KS setups as follows. Three mutually
orthogonal directions of spin projections correspond  to three atoms
within a block, say $a,b,c$ in Fig.~\ref{fig:mmp}, because in an OML
$a\le b'$ means $a\perp b$. These three directions also correspond to the
orientation of a device we use to detect spin along them. Keeping one of
the directions fixed, say $a$ in Fig.~\ref{fig:mmp}, means a rotation
of the other two in the plane spanned by $d$ and $e$, what corresponds
to $a\le d'$ and $a\le e'$. As we show below, the aforementioned
Hilbert lattice equations require that the OMLs also have relations
between non-orthogonal atoms and therefore we cannot represent the
considered KS setups by means of Greechie diagrams. Therefore
until we come to that point we shall speak only of MMP hypergraphs.

Asher Peres found another highly symmetrical (in 3D) but much smaller KS
setup.\cite{peres-book} Its MMP hypergraph exhibits symmetry
similar to the MMP hypergraph of the original KS setup as shown in
Fig.~\ref{fig:peres}.
\begin{figure}[htp]
\begin{center}
\includegraphics[width=0.47\textwidth,height=0.12\textwidth]{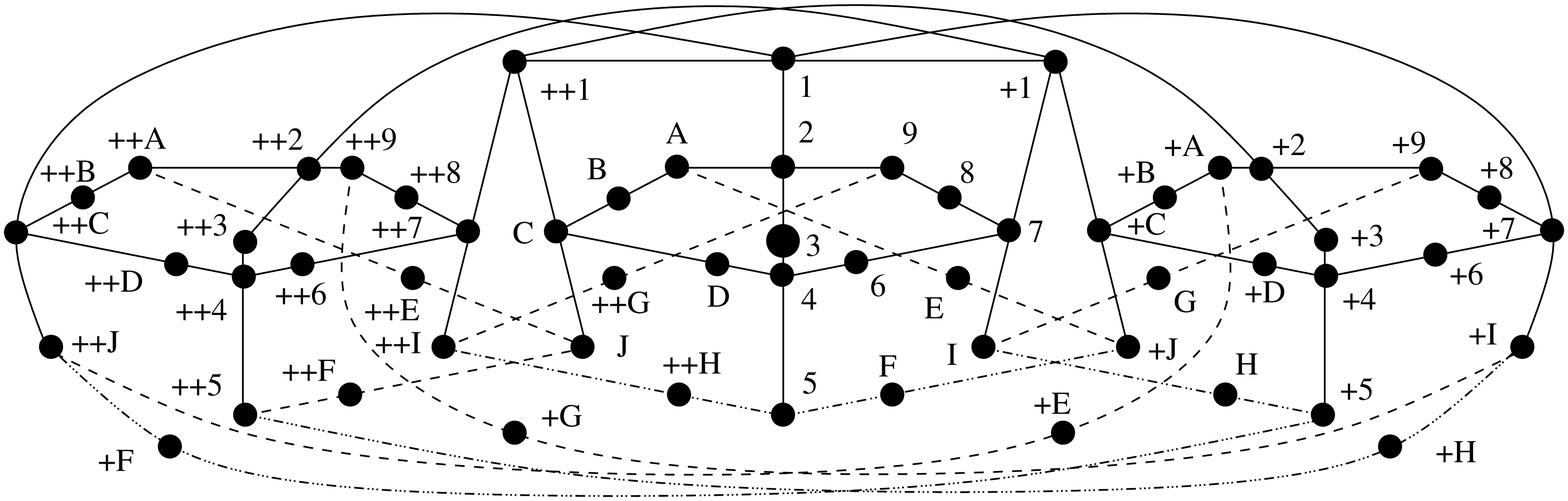}
\end{center}\vskip-15pt
\caption{Peres' KS MMP hypergraph.}
\label{fig:peres}
\end{figure}

The smallest known KS setup was found by Jeffrey Bub.~\cite{bub}
It is shown in Fig.~\ref{fig:bub} l\phantom{.}

\begin{figure}[htp]
\begin{center}
\includegraphics[width=0.47\textwidth,height=0.105\textwidth]{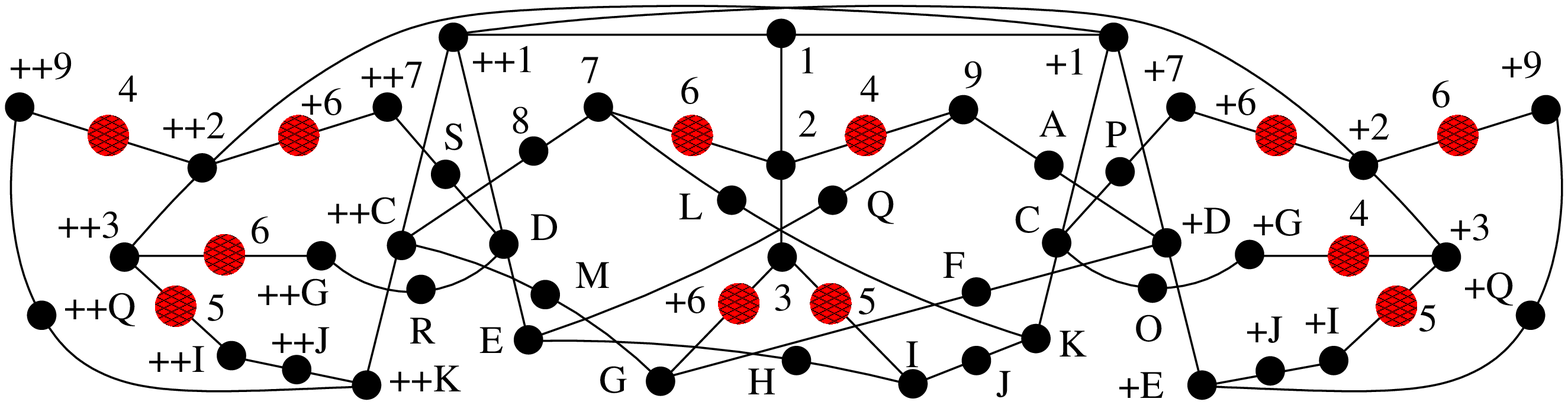}
\end{center}
\caption{Bub's MMP with 49 atoms and 36 blocks. Notice that 12 bigger
dots with a pattern (red online) represent just 4 atoms: 4, 5, 6, and +6.}
\label{fig:bub}
\end{figure}

In Fig.~\ref{fig:conw-k} we show MMP hypergraph of the Conway-Kochen
KS setup.\cite{bub} It reads: {\1123,  249,  267,
  9A+D,  +1CK,  ++1DE,  9QE,  35I,  3+6G,
  EHI,  IJK,  CP+7,  +1+D+E,  CO+G,  DN++7,
  DW++G,  ++GRS,  +7+V+T,  S1+T,  ++7TU,
  1U+S,  +26+9,  +2+6+7,  ++1+2+3,  +S+W+9,
  +S+R+G,  +34+G,  +35+I,  +T+U+I,  +I+J+E,
  +9+Q+E,  ++3++2+1,  ++2+6++7,  ++36++G,
++94++2, ++35++I, 1+1++1.}
It was considered to be the smallest known KS setup, but it
turned out that we cannot remove atoms 7, G, Q, and others that do
not share two or more blocks because they represent one of the three
orientations of the spin projections.\cite{pmmm03a,larsson} Hence,
it has 51 and not 31 vectors as originally assumed. This holds for all
considered KS setups. Thus, Peres' and Bub's setups contain 57 and
49 vectors and not both 33 as commonly assumed.

\begin{figure}[htp]
\begin{center}
\includegraphics[width=0.47\textwidth,height=0.105\textwidth]{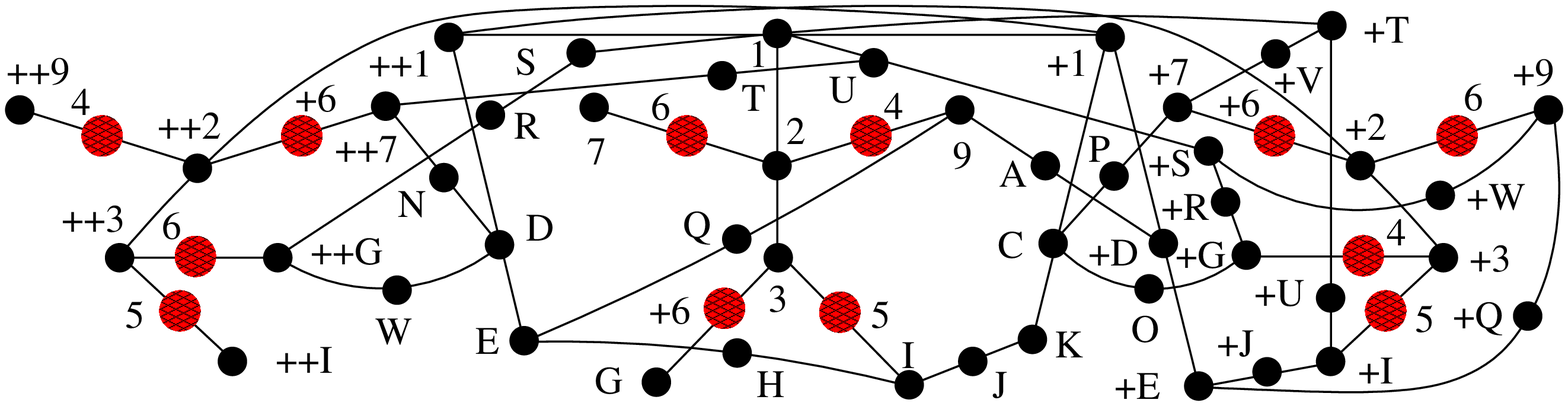}
\end{center}
\caption{Conway-Kochen's MMP. Notice that we cannot drop
blocks containing ++9, ++I, 7, and G because atoms
4, 5, 6, and +6, from them also share two other blocks each.
Why we cannot drop atoms 7, G, H, J, etc., is explained in the
text.}
\label{fig:conw-k}
\end{figure}

Our program {\tt vectorfind} gives possible values of the vectors
corresponding to atoms belonging to orthogonal triples of any of the
above MMPs as explained in Ref.~\onlinecite{pmmm03a}.
Using our program {\tt states},\cite{mpoa99} we can easily verify
that all the above MMPs interpreted as lattices, even Hasse and
Greechie diagrams, admit a strong set of states, and
using our program {\tt latticeg},\cite{mpoa99} we can prove
that they all really are OMLs (by confirming that Eq.~(\ref{eq:oml})
is satisfied by all of them) and that they all admit Mayet vector states
characterized by Eqs.~(\ref{eq:E-3}) and (\ref{eq:E-4}) 
(by verifying that they pass in them).

On the other hand, using {\tt latticeg} we can also show that if we
interpret MMP hypergraph as Greechie diagrams, none
of the considered lattices is modular since the modular law given
by Eq.~(\ref{eq:mod}) fails in each of them. This might come as a
surprise since Birkhoff and von Neumann \cite{birk-v-neum}
proved that a finite-dimensional lattice has to be modular.
However, it turns out that this is because Greechie diagrams
cannot describe relations between nonorthogonal vectors and
planes they span.

To understand this better we exhaustively generated Greechie
lattices with up to 16 blocks and then filtered them all for
modularity given by Eq.~(\ref{eq:mod}). For each
number of blocks we find only one modular
lattice---the biggest one has 33 atoms and 16 blocks.
They all have star-like shape as shown in
Fig.~\ref{fig:mod-star}(a). In the figure we show the first four:
{\1123}, \ {\1123,145}, \ {\1123,145,167}, and
{\1123,145,167,189}---over each other---with
vectors {\1\{\{0,0,1\}\{0,1,0\}\{1,0,0\}\}, \
\{\{0,0,1\}\{1,-2,0\}\{2,1,0\}\}, \
\{\{0,0,1\}\{1,-1,0\}\{1,1,0\}\}, \
\{\{0,0,1\}\{1,2,0\}\{2,-1,0\}\}} (over each other).
And for all those lattices up to 16 blocks we
generated there is always only one such star-like modular
lattice among them. They all admit strong sets of states, but
because of their planar distribution, they cannot describe spin
vectors in a realistic spin space.

\begin{figure}[htp]
\begin{center}
\includegraphics[width=0.47\textwidth,height=0.2\textwidth]{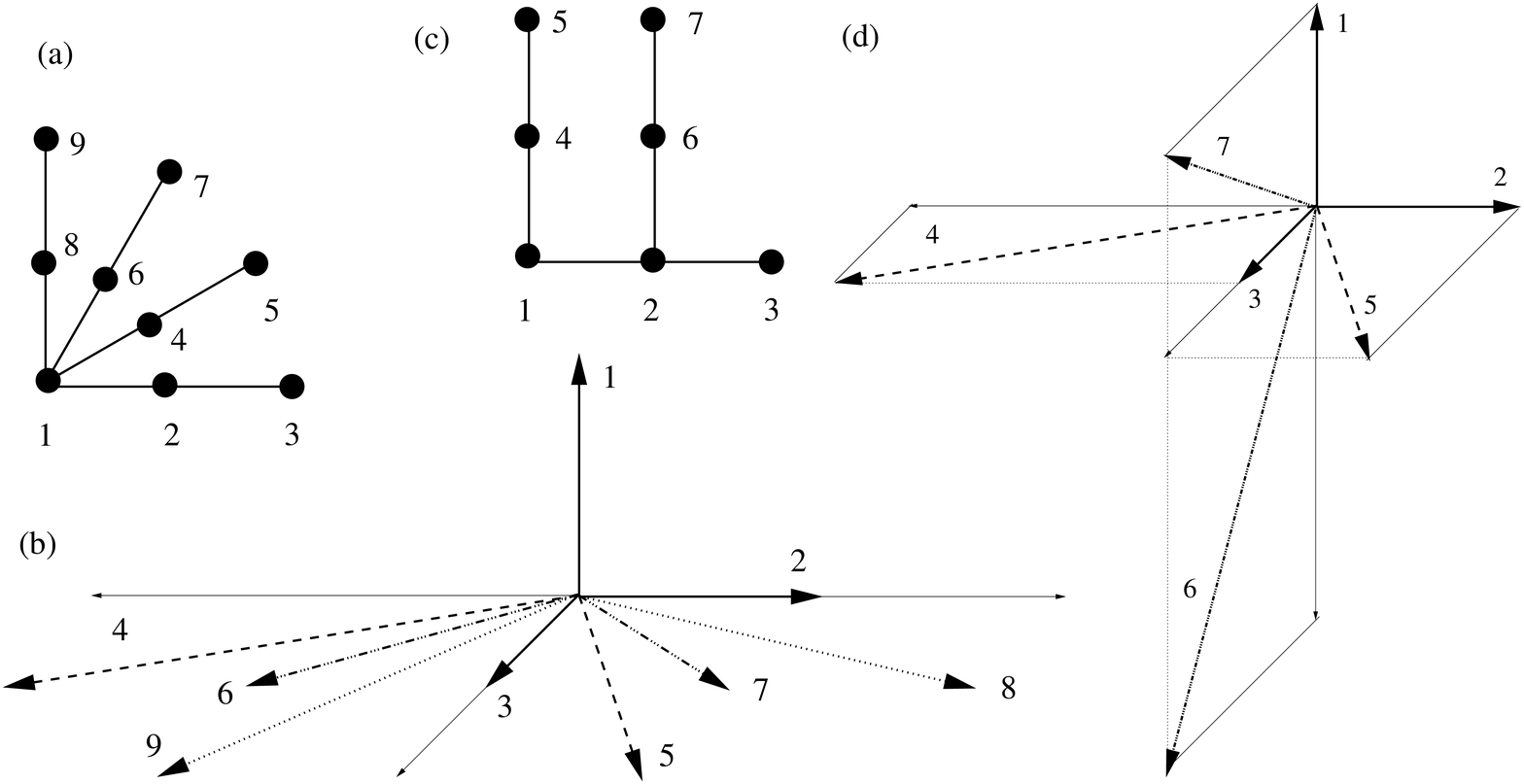}
\end{center}\vskip-10pt
\caption{(a) Greechie diagrams that correspond to
modular lattices with 1, 2, 3 and 4 blocks (over each other);
(b) their vectors triples; (c) the smallest non-modular
lattice; (d) its vector triples.}
\label{fig:mod-star}
\end{figure}

For a comparison, in Fig.~\ref{fig:mod-star}~(c), we show the
smallest OML {\1123,145,267}, with vectors
{\1\{\{\{0,0,1\}\{0,1,0\}\{1,0,0\}\},
  \{\{0,0,1\}\{1,-2,0\}\{2,1,0\}\}, \{\{0,1,0\}\{1,0,-2\}\{2,0,1\}\}\}}
shown in Fig.~\ref{fig:mod-star}~(d), which
allows a ``3D'' rotation that can correspond
to a more complex experimental setup than the ``2D''
rotations given in Figs.~\ref{fig:mod-star}~(a) and (b).
This means that Greechie/Hasse diagrams cannot represent even
the simplest experiment where we let a particle pass
successive magnetic fields, i.e., successive Stern-Gerlach
devices, mutually rotated along different axes by means of Euler
angles.

The same is true of the generalized orthoarguesian equations $n$OA given
by Theorem \ref{th:hs-noa} and Eq.~(\ref{eq:hs-noa}) in a Hilbert
space and by Theorem \ref{th:noa} and Eq.~(\ref{eq:noa}) in a
Hilbert lattice. If these equations failed in a sub-lattice,
they would fail in the lattice as well. And the point here is that
smallest orthoarguesian equation 3OA---and therefore all $n$OA with
$n>3$---fail in almost all known KS Greechie diagrams.  Peres' fails
$n$OA for $n=7$. Again, this means that we cannot represent KS
setups with the help of Greechie diagrams.

The details are as follows. We consider Bub's KS setup. To be able
to apply our program {\tt vectorfind} for finding the vector
components of Bub's setup shown in Fig.~\ref{fig:bub}, we have to
write down its MMP representation without gaps in letters. So, we
have {\1123,\dots,DFH,\dots}, where we present only those
Greechie/Hasse diagrams atoms in which 3OA failed. Their
Hilbert space vectors are: {\11=\{0,0,1\}}, {\12=\{1,0,0\}},
{\1F=\{1,-2,-1\}}, and {\1D=\{1,1,-1\}}.

In a Hilbert space representation, Bub's KS setup does pass 3OA.
Let us consider 3OA in the following form \
\begin{align}
&a\perp b  \quad\&\quad q\perp n\quad\notag\\
       & \Rightarrow
(a\cup b)\cap (q\cup n)\le b\cup(a\cap
(q\cup((a\cup q)
\cap (b\cup n)))).\notag
\end{align}
In 3-dim Euclidean space, all subspaces are closed (they are lines,
planes, or the whole space), so $a\cup b=a+b$, i.e., subspace join
and subspace sum are the same. Thus, converting joins in the previous
equation to subspace sums and using the orthogonality we get:
\begin{align}
a\perp b & \quad\&\quad q\perp n
\Rightarrow (a+b)\cap(q+n)\notag\\
&\le b+(a\cap(q+((a+q)\cap (b+n)))).\label{eq:bubs-proof}
\end{align}

Now, using the subspaces determined by the afore mentioned vectors
and their spans in a Hilbert space we can easily check that
Bub's representation pass 3OA. For instance, vectors {\11},
{\12}, {\1F}, and {\1D}, determine subspaces {\1\{0,0,$\alpha$\}},
{\1\{$\beta$,0,0\}}, {\1\{$\gamma$,-2$\gamma$,-$\gamma$\}}, and
{\1\{$\delta$,$\delta$,-$\delta$\}}, with arbitrary coefficients
$\alpha,\dots$. They represent lines in both 3-dim Hilbert
space and 3-dim Euclidean space. {\1\{0,0,$\alpha$\}}+{\1\{$\beta$,0,0\}}=
{\1\{$\beta$,0,$\alpha$\}} is a plane spanned by {\11} and {\12}, etc.
We show a verification of Eq.~(\ref{eq:bubs-proof}) in
Fig.~\ref{fig:bub-proof}.

\begin{figure}[htp]
\begin{center}
\includegraphics[width=0.47\textwidth,height=0.4\textwidth]{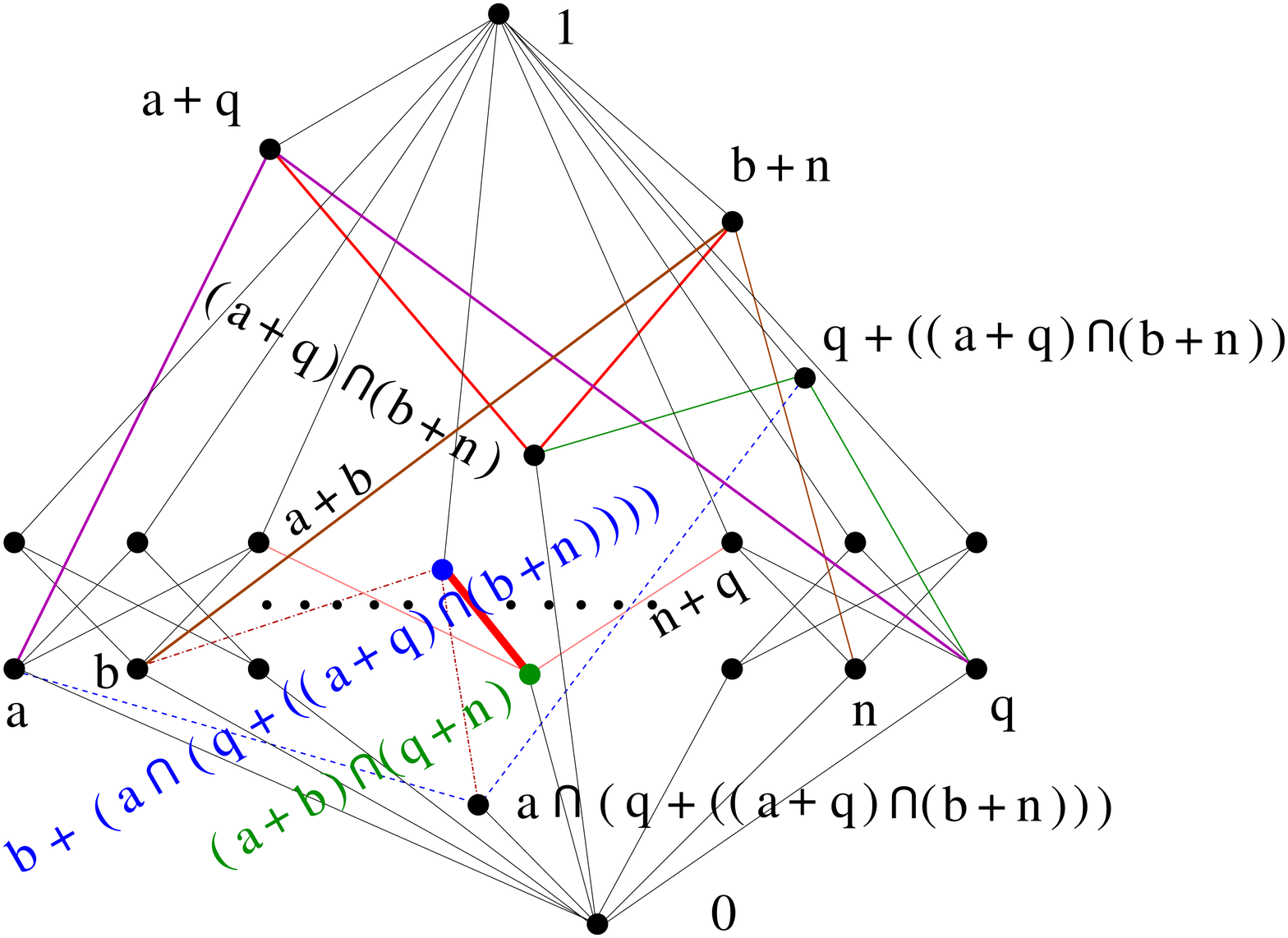}
\end{center}
\caption{A new kind of lattice (MMPL) in which Bub's setup passes 3OA.
The inequality relation in Eq.~(\ref{eq:bubs-proof}) is represented
by the thick red line.}
\label{fig:bub-proof}
\end{figure}

Such a lattice---we call it MMPL---can be used for a Hilbert
lattice representation of
a Hilbert space setup by the following procedure. Whenever we check
an equation and we need either a plane formed as a span of two
existing lines or a line formed as an intersection of two
existing planes, we just add them to the basic
Greechie/Hasse diagram that describes the triples of orthogonal
spin vectors. However, details of such a construction are not
within the scope of the present paper. We will elaborate on it
in a forthcoming publication and here we just give a constructive
definition.

\begin{definition}\label{def:mmpl} An {\rm MMPL} is a lattice of
setup in which we explicitly state:
\begin{enumerate}
\item all orthogonality relation required by
the setup (spins within it);
\item only those non-orthogonal relations that are required by
equations and conditions that lattice atoms of a particular
setup have to satisfy for at least one set of subspace
(vector) components.
\end{enumerate}
\end{definition}

So, the most general MMPL would be a lattice that would
contain all possible atoms corresponding to all possible
Hilbert space subspaces allowed by all possible Hilbert
space conditions and equations. But our primary goal of
considering MMPLs is to enable our algorithms to find
minimal lattices for a particular setup which would generate
just one or just a desired set of vector component values
for orientation of spins and devices that would handle
these spins.

Next, the superposition condition given by Eq.~(\ref{eq:superp})
fails in all considered KS OMLs. However, the superposition condition
is a quantified expression that involves an existential
quantifier, so it is possible that it passes in an enlarged lattice
even though it fails in the original one. For instance,
Eq.~(\ref{eq:superp}) fails in any five block loop but passes in
the 36-36 OML shown in Fig.~\ref{fig:36-36}, which contains five
block loops. That means that we may be able to enlarge the above KS
OMLs so as to admit superposition. Of course, a first-order statement
containing existential quantifiers (when expressed in prenex normal
form) that holds in a lattice need not hold in a subalgebra of the
lattice.  As a trivial example, the statement ``There exist 16
elements'' is true for a 16-element lattice but false for a smaller
subalgebra.

\section{\label{sec:no-god}Lattices that Admit Almost No
Hilbert Lattice Equations}

There are a number of OMLs that admit a full set of
states but do not admit a strong set of states and also
those that admit a strong set of states (and therefore also
a full set of states) but violate equations that must
hold in any Hilbert lattice. Using algorithms developed
in Ref.~\onlinecite{mpoa99,pm-ql-l-hql2} we can easily generate
such lattices. For instance, a lattice with
13 atoms (one dimensional Hilbert space subspaces) and
7 blocks (connected orthogonal triples of one dimensional
Hilbert space subspaces) shown in Fig.~\ref{fig:examples}\
(a) does admit a strong and therefore also a full set of
states but violates all orthoarguesian equations.
Any Hilbert lattice admits a strong and therefore a full
set of states, and the orthoarguesian equations hold in any
 Hilbert lattice.\cite{mpoa99,pm-ql-l-hql2}

\begin{figure}[htp]
\begin{center}
\includegraphics[width=0.47\textwidth,height=0.125\textwidth]{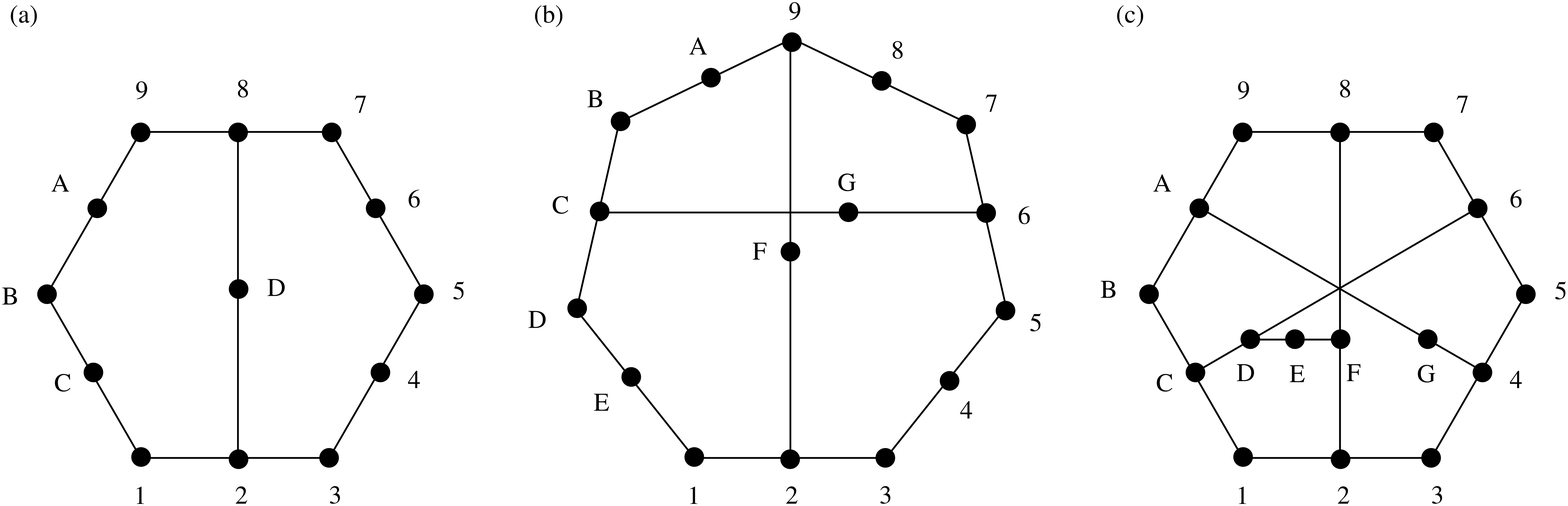}
\end{center}\vskip-20pt
\caption{(a) A 13-7 OML that admits a strong set of states
but violates Hilbert-space orthoarguesian equations;
(b) A 16-9 OML that does not admit a strong set of states
but satisfies orthoarguesian equations;
(c) A 16-10 OML that admits neither a strong nor a full set
of states and violates all orthoarguesian equations.}
\label{fig:examples}
\end{figure}

On the other hand, the 16-9 OML in Fig.~\ref{fig:examples}\ (b)
satisfies orthoarguesian equations and admits a full
set of states but does not admit a strong set of states,
L42 from Fig.~\ref{fig:l42-e3-e4}$\>$(a) satisfies orthoarguesian
equations and admits a strong set of states, but does not
admit Mayet vector state Eq.~(\ref{eq:E-3}),
while 16-10 OML in Fig.~\ref{fig:examples}\ (c)
neither admits a strong (and therefore also not a full) set
of states nor satisfies the orthoarguesian equations.
All these OMLs and many more provided in 
Refs.~\onlinecite{mpoa99,pm-ql-l-hql2} are examples semi-quantum 
lattices. Yet other examples are provided by lattices that satisfy 
the Godowski equations (corresponding to strong sets of states)
of lower order but violate those of higher orders.~\cite{mpoa99}
While all OMLs admitting strong sets of states satisfy all
Godowski equations, there are examples showing the converse isn't
true.~\cite[Fig.~10, p.~780]{pm-ql-l-hql2}

Such examples can be exhaustively generated, but no common structural
feature has been recognized so far. To be more precise,
features and general rules for generation of infinite classes of
lattices that admit a strong set of states---Godowski
equations,\cite{godow,mayet85,mayet86,mayet06,pm-ql-l-hql2}
satisfy the orthoarguesian properties---$n$OA
equations,\cite{mpoa99,pm-ql-l-hql2} and a
class of lattices that admit real Hilbert-space-valued
states---$E_n$ equations, \cite{ mayet06-hql2,pm-ql-l-hql2} have
all been   discovered, but the rule for generating all lattices that
lack all these properties has not been found.
Since we still do not have a single example of a complete
realistic lattice for $n\ge 3$, it would be important
to find a class of lattices that would narrow down the
search for a complete lattice description of
Hilbert space. Therefore, in Sec.~\ref{sec:one-state-p} we consider
a class of OMLs that admit a field over which a Hilbert
space is defined but neither a strong set of states nor any of the
Hilbert space algebraic properties.

We stress here that an OML admitting a strong set of states will satisfy
the Godowski equations.
\cite{godow,mayet85,mayet86,mayet06,pm-ql-l-hql2,mp-alg-hilb-eq-09} Thus
OMLs that violate Godowski equations do not admit strong sets of
states.  Moreover, most likely
they cannot be enlarged to admit such a set in
order to satisfy these equations---similarly to what we have with the
modular and orthoarguesian equations in Sec.~\ref{sec:represent}.

\section{\label{sec:one-state}MMP Hypergraphs  with
Equal Number of Vertices and Edges Generated from
Cubic Bipartite Graphs}

To avoid confusion with vertices and edges in (bipartite)
graphs in this section (and only in this section) we use term atom
for a vertex of a MMP hypergraph and block for an edge of an MMP
hypergraph. Since later we shall consider the corresponding lattices
anyway, this terminology is not inconsistent.
Here we describe the exhaustive computation of MMP hypergraphs
with equal numbers of atoms and blocks having 3 atoms in
each block and 3 blocks containing each atom.
This special case allows exploitation of a connection with
graph theory in order to considerably speed up the generation
compared to our earlier methods~\cite{bdm-ndm-mp-1,pmmm03a}.

We begin by representing MMP hypergraphs as graphs with two
types of vertex.  An atom $a$ is converted to a white vertex $A$,
and a block $b$ to a black vertex~$B$.  If atom~$a$ lies in
block~$b$, then vertex~$A$ is joined by an edge to vertex $B$.
In graph theory terminology, the resulting graph is \textit{cubic\/}
(each atom is in~3 blocks and each block has~3 atoms), and
\textit{bipartite\/} (edges have ends of different color).
In Ref.~\onlinecite[Sec.~5(x)]{pmmm03a} we have shown that
3-dim KS MMP hypergraphs have no loops of length less than 5.
This corresponds to the graph having \textit{girth at least~10} (i.e.,
having no cycles of length less than~10).
Apart from taking the dual MMP hypergraph, which corresponds to exchanging
the colors of the vertices, isomorphism of the MMPs corresponds
to isomorphism of the graphs.

For definiteness, we consider the case of 41 atoms and
41 blocks.  That is, we seek 82-vertex cubic bipartite graphs
of girth at least~10.
The method used is an extension of one used in the non-bipartite
case by McKay et al.~\cite{mmn98}.

We begin with 41 white vertices and 41 black vertices, plus the
61 edges at distance at most 4 from an arbitrary fixed edge.
These 61 edges form a tree, since otherwise there would be cycles
of length less than~10.
This starting configuration is shown in Fig.~\ref{fig:bipart},
with dashed lines indicating the places available for
extra edges.

\begin{figure}[htp]
\begin{center}
\includegraphics[width=0.47\textwidth]{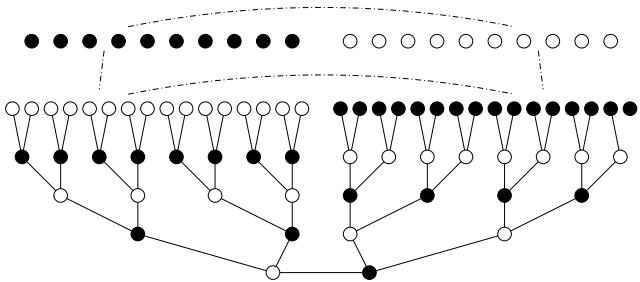}
\end{center}
\caption{Starting configuration for generation of 41-41 MMPs}
\label{fig:bipart}
\end{figure}

The task is now to add 62 extra edges so that each vertex has
3 edges and there are no short cycles.
This is a non-trivial task since there are 676 places where an
edge may potentially be placed, but fortunately many of the
possibilities are equivalent.
We proceed using a backtrack search
together with some mechanisms for isomorphism rejection.
The backtrack search looks for an incomplete vertex whose
set of potential neighbours is as small as possible, then
recursively tries each of them.

Isomorphism rejection is achieved by two methods which are
described in detail in Ref.~\onlinecite{exoo10}.
First, the starting configuration has a large group of
symmetries and we avoid trying more than one possibility that
is equivalent under those symmetries.
This can be done without explicit isomorphism testing since the
structure of the starting configuration is rather simple.

Second, when the space of supergraphs of any configuration $C$
has been completely explored, we reject any future configuration
$C'$ that contains $C$ as a subgraph.
This is valid since any cubic graph constructible by adding edges
to $C'$ was previously seen (up to isomorphism) when edges were
added to~$C$.
This technique is too expensive to apply throughout the search,
because subgraph finding is very difficult.
As a compromise, we applied the technique only limited circumstances
with at most 78 edges (the initial 61 edges plus 17 more).
We did this using the graph isomorphism package
{\tt nauty}~\cite{mckay90}.

These isomorph-rejection methods are not complete, so each
isomorphism type of graph was generated a few thousand times.

The complete search on order 41-41 involved about $10^{14}$
separate configurations and took approximately 60 GHz-years.
The computation can be efficiently divided into independent parts
(see~\cite{exoo10} for an explanation), so it was run over a
few weeks on a multi-processor cluster.

\section{\label{sec:one-state-p}Properties of Lattices with
Equal Numbers of Atoms and Blocks}

In this section we consider OMLs that correspond to MMP
hypergraphs we obtain by means of methods presented in
the previous section.

In Ref.~\onlinecite{bdm-ndm-mp-1} we mentioned five 35-35 OMLs
(OMLs with 35 atoms and 35 blocks), eight 38-38s
and gave a graphical representation of the single 36-36
(there is no 37-37). They were obtained by different algorithms
and at the time we were not aware of their properties and
did not yet have tools to analyze them. In ~\cite{pavicic-rmp09}
we wrote down all 35-35s and 38-38s, gave two graphical images of
them and obtained some features of them in a different context.
So, in this section we shall focus on 39-39s, 40-40s, and 41-41s.
In doing so, we will make use of a new way of presenting
MMP hypergraphs, because our previous one becomes
unreadable for so many edges.
We introduce the new way as opposed to the previous one in
Fig.~\ref{fig:36-36}.

\begin{figure}[htp]
\begin{center}
\includegraphics[width=0.23\textwidth]{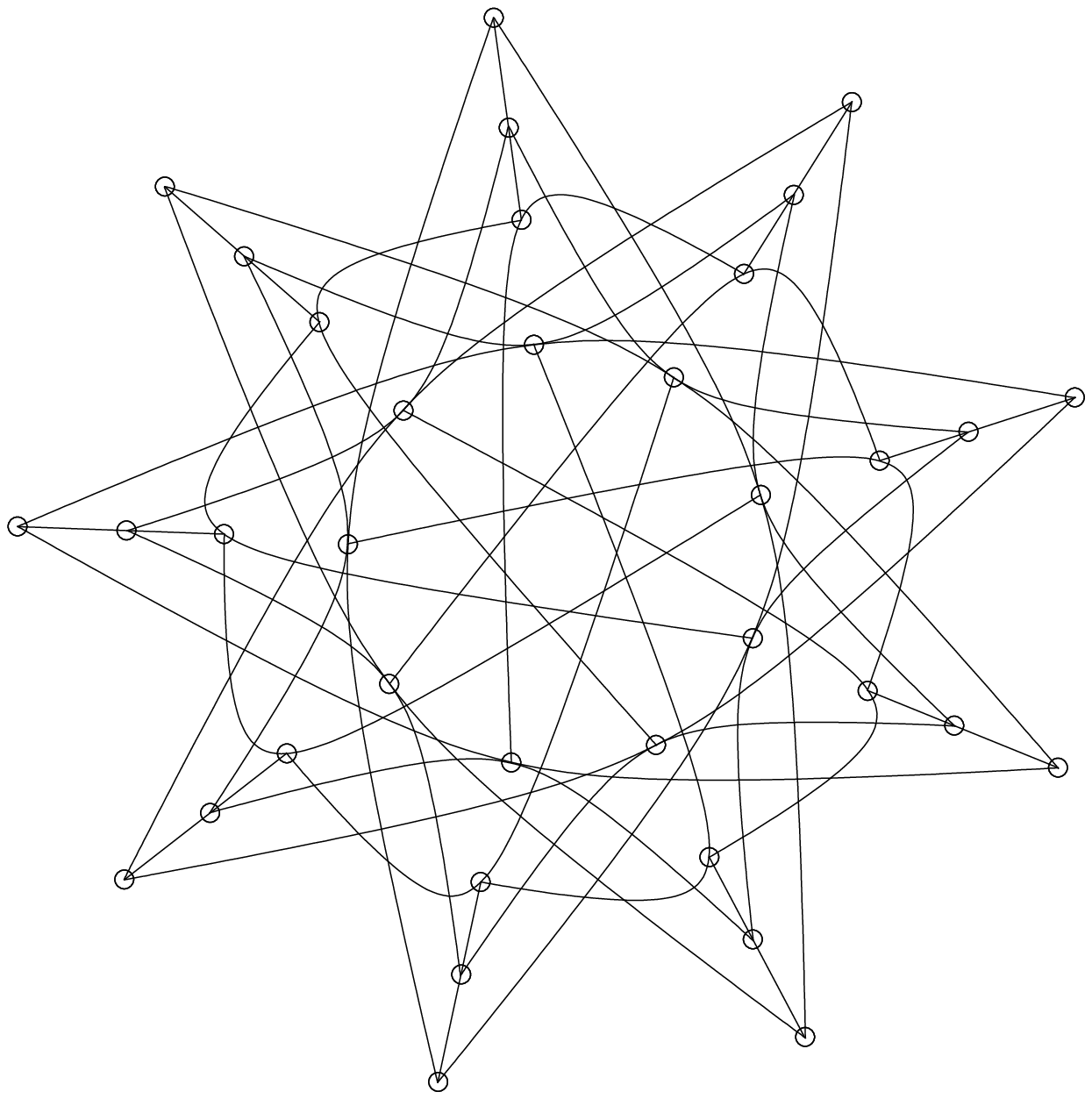}
\includegraphics[width=0.23\textwidth]{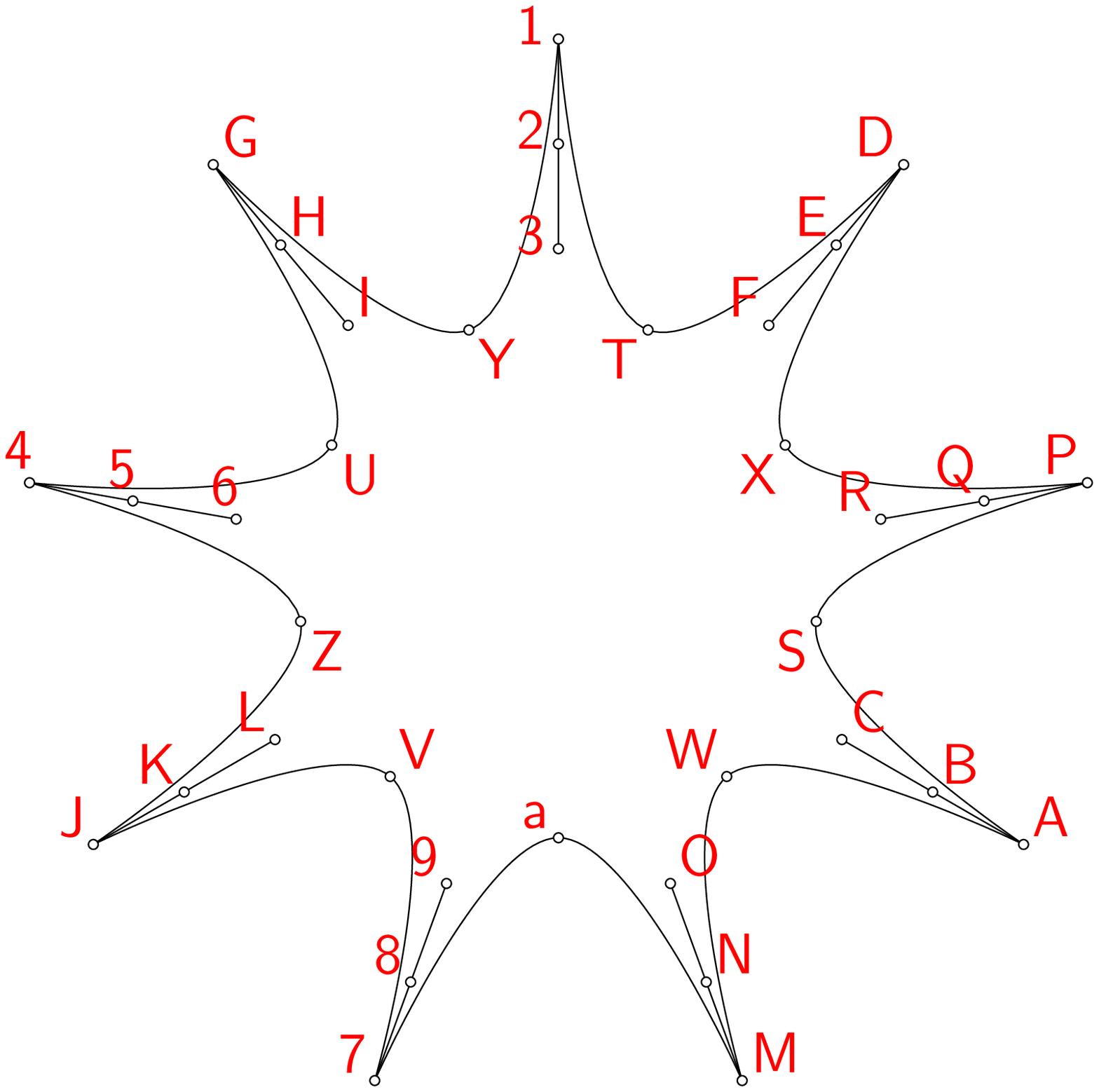}
\includegraphics[width=0.23\textwidth]{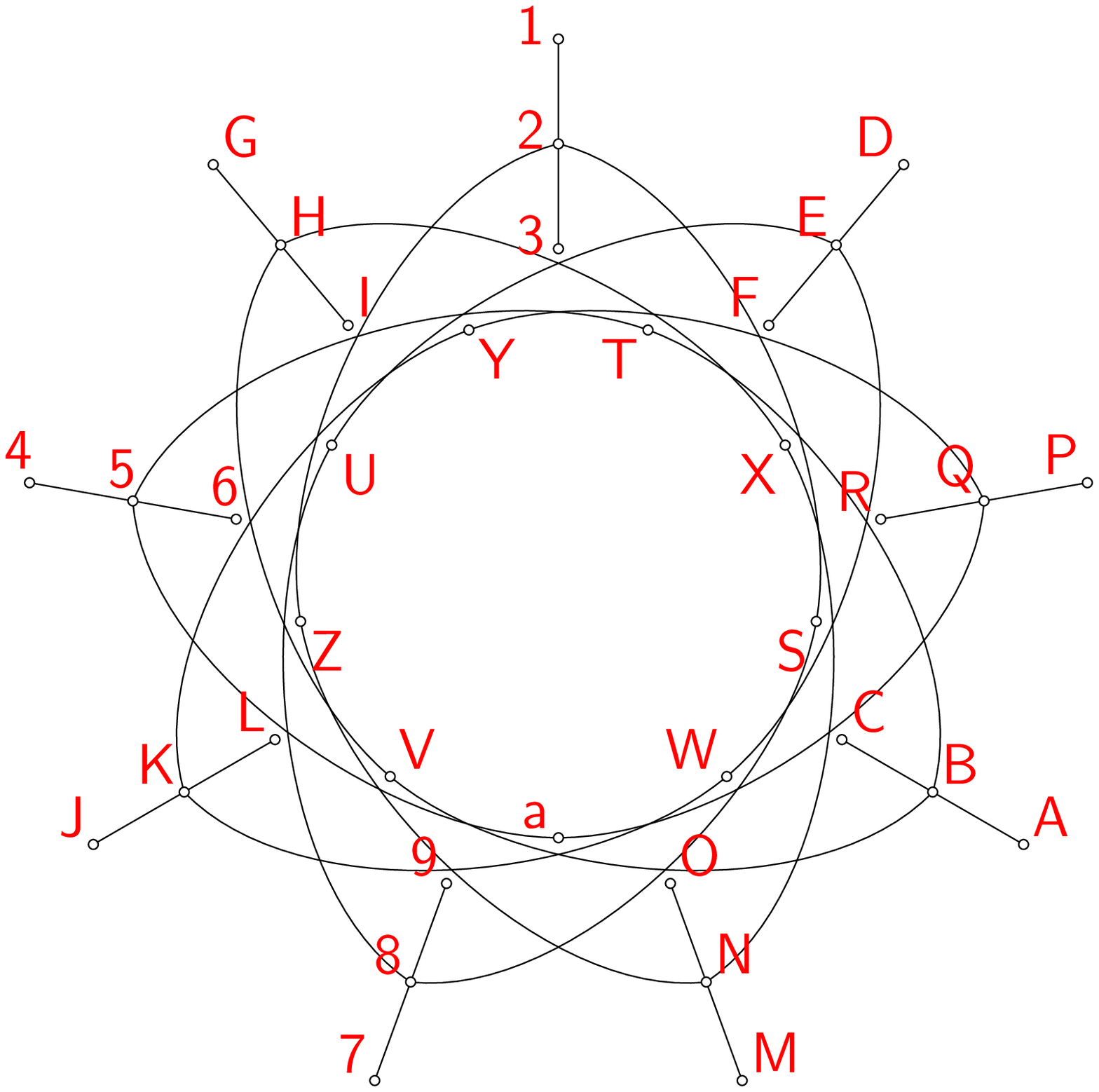}
\includegraphics[width=0.23\textwidth]{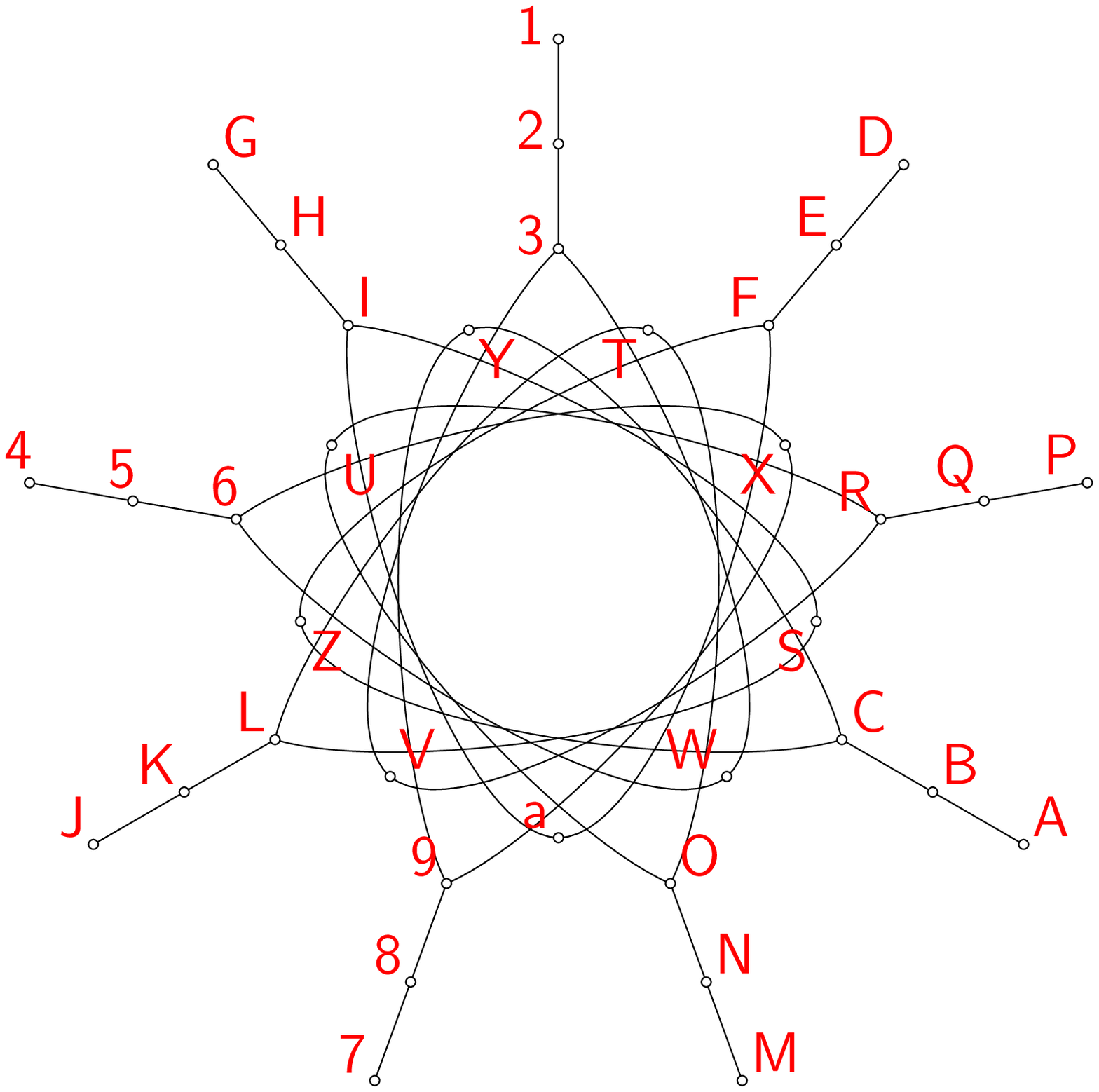}
\end{center}
\caption{36-36 OML that admits exactly one state and
is dual to itself. It is given in the standard compact
representation in the 1st figure and in our separate
cycle representation in the other 3 figures. The
figures are explained in detail in Sec.~\ref{sec:figs}.}
\label{fig:36-36}
\end{figure}

The new presentation is based on a feature of such big
lattices that one can recognize separate cycles of blocks
through a maximal set of vertices that belong to isolated
blocks that mostly do not take part in the cycles.
The terminology ``isolated blocks'' and
``cycles'' will be explained in Sec.~\ref{sec:figs}.
The approach stems from the way the lattice 36-36 is
presented in Fig.~2 from \cite{bdm-ndm-mp-1} which is
here shown as the first figure of Fig.~\ref{fig:36-36}.
We separately present the three cycles in the remaining
three figures and see that we have three separated
closed cycles. In all the other cases below we also
recognize three independent cycles most of which are
closed.

The cycles themselves will allow us to generate new
lattice equations following the procedure developed in
\cite{mp-gen-godowski06-arXiv,pm-ql-l-hql2,mp-alg-hilb-eq-09},
but they do not automatically follow possible geometrical
symmetries of the hypergraphs. In the 36-36 case they
do, but, e.g., they do not exhibit the left right symmetry
of the 35-35 lattice shown in Fig.~\ref{fig:35-35}.
Closed cycle representation does not exhibit any
symmetry.

\begin{figure}[htp]
\begin{center}
\includegraphics[width=0.23\textwidth]{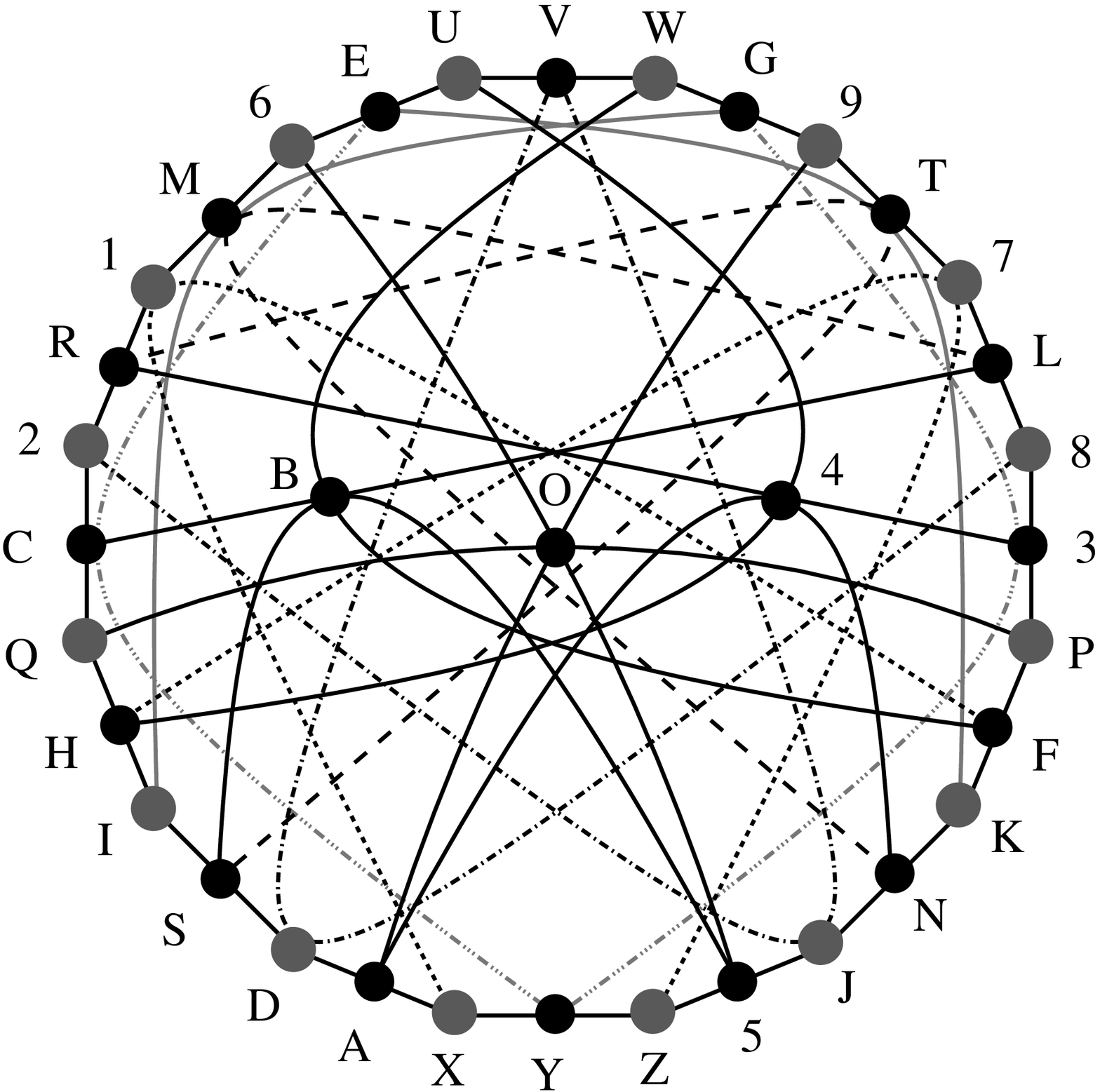}
\includegraphics[width=0.23\textwidth]{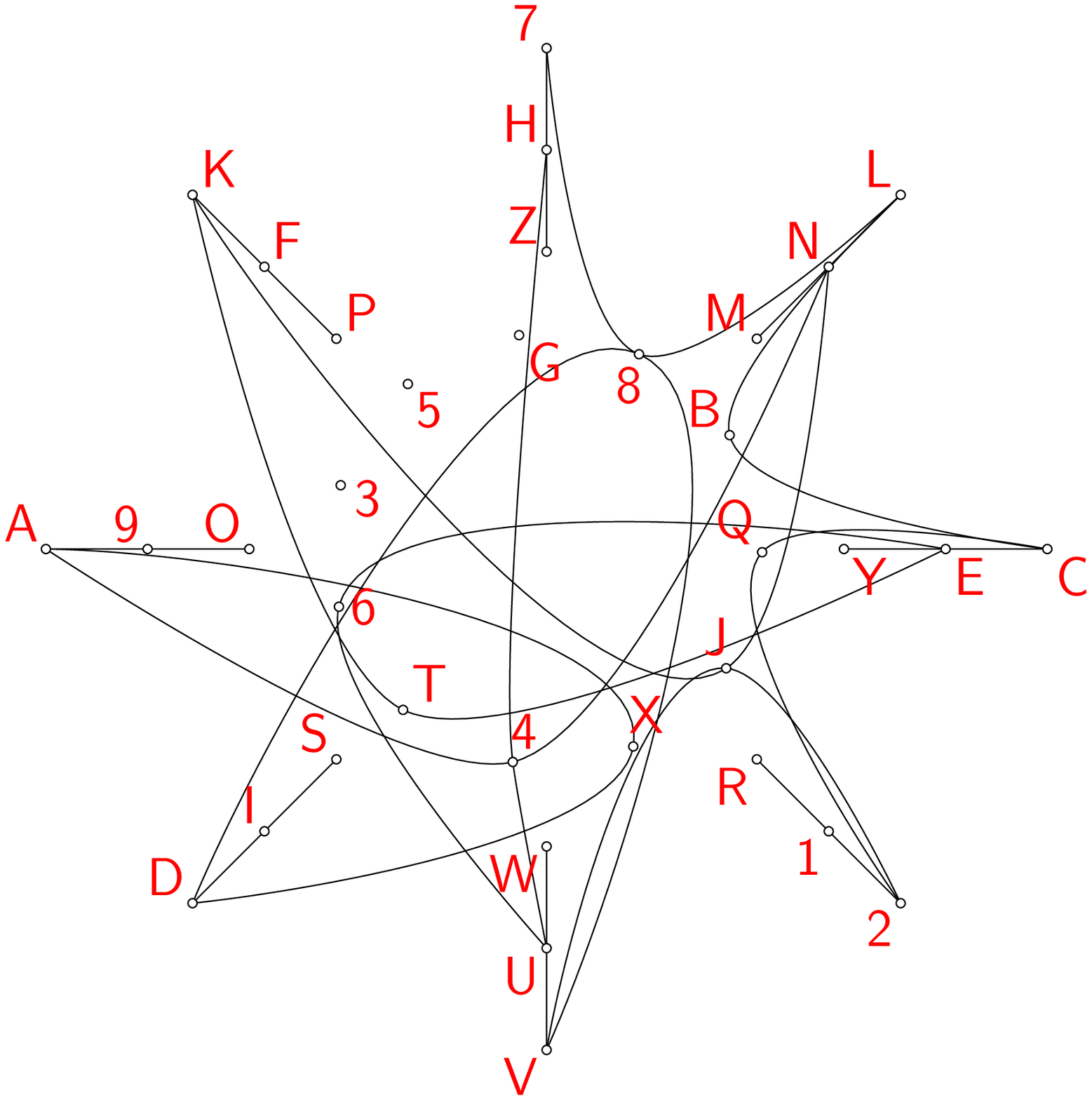}
\includegraphics[width=0.23\textwidth]{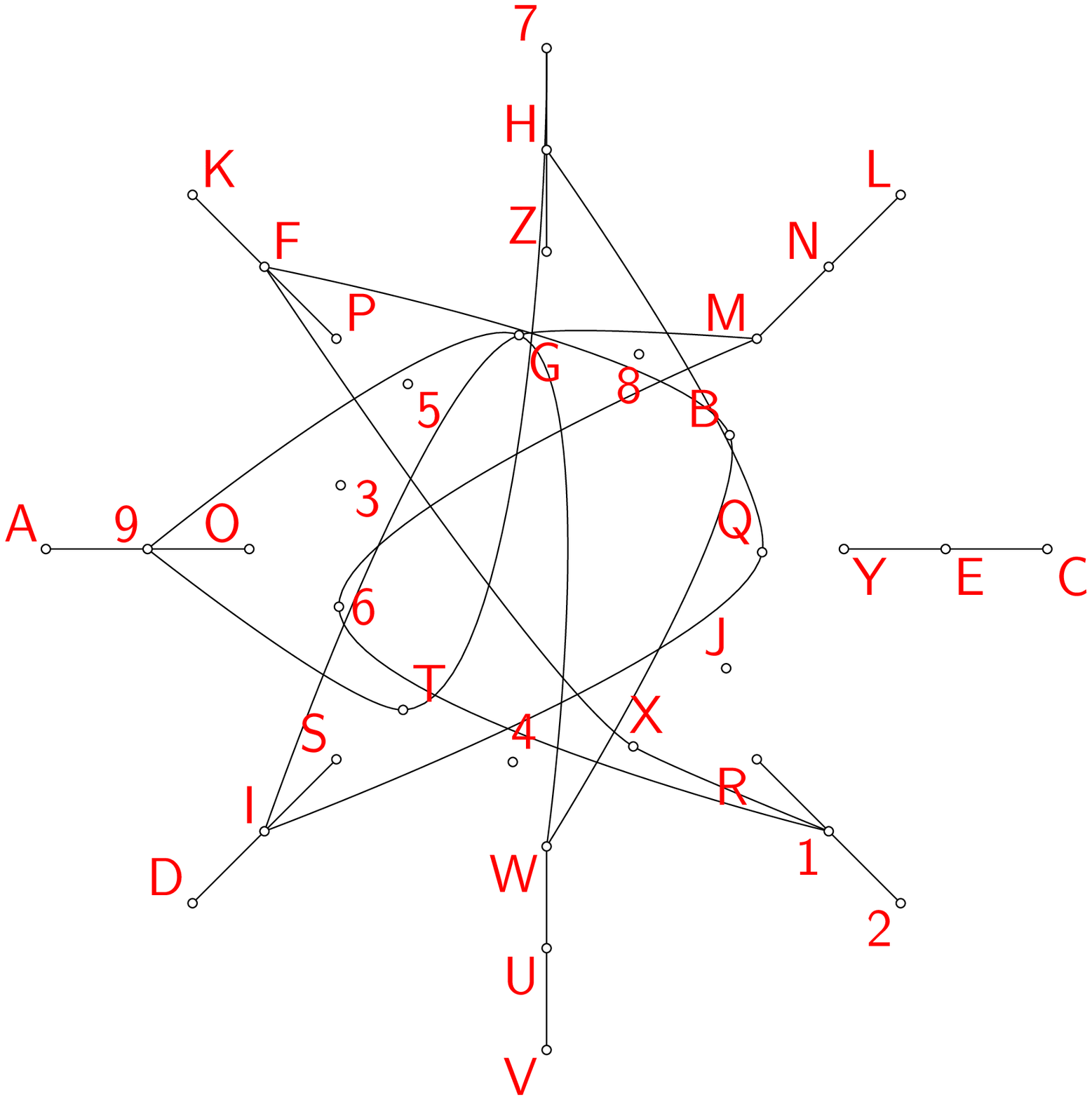}
\includegraphics[width=0.23\textwidth]{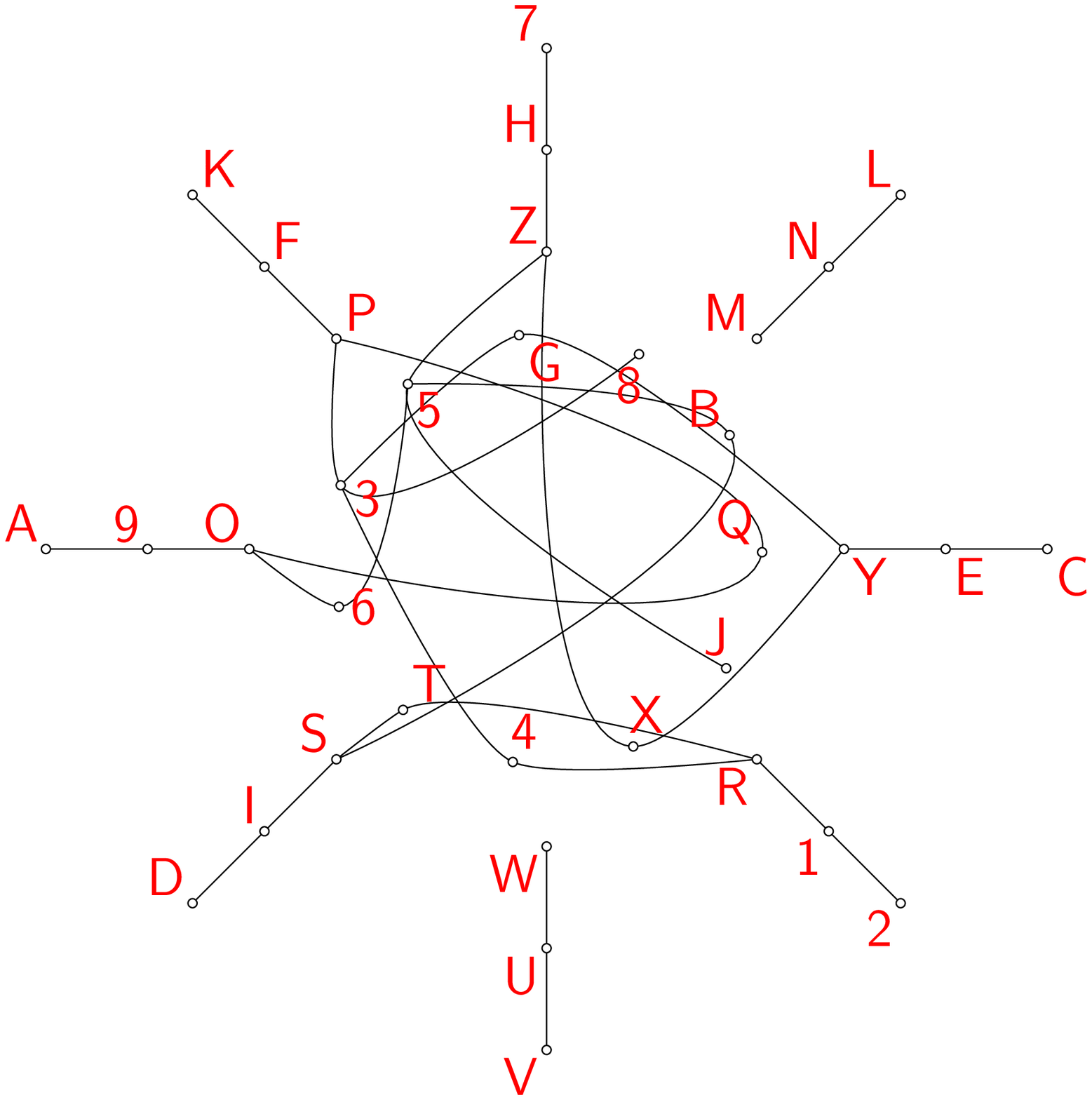}
\end{center}
\caption{1st figure shows a 35-35 lattice presented
by means of its biggest loop, hexadecagon; it exhibits a
left-right symmetry with respect to an axis through
vertices V and Y. Three other figures show the same OML
in the separate cycle representation. They are explained
in detail in Sec.~\ref{sec:figs}.}
\label{fig:35-35}
\end{figure}

There are 11 eleven bipartite graphs with 78 vertices that give 39-39 OMLs.
Nine of them correspond to the MMP hypergraphs that are dual to
themselves---when we exchange their atoms for blocks and vice
versa we obtain OMLs that are isomorphic to the original ones.
\begin{description}
{\baselineskip=8pt
\item[\hbox to 40pt{\3(39-39-00)}]{\1123,\hfil 145,\hfil 167,\hfil 289,\hfil 2AB,\hfil 3CD,\hfil 3EF,\hfil 4GH,\hfil 4IJ,\hfil 5KL,\hfil 5MN,\hfil 6OP,\hfil 6QR,\hfil 7ST,\hfil 7UV,\hfil 8GO,\hfil 8MV,\hfil 9Ia,\hfil 9LT,\hfil AKU,\hfil AQc,\hfil BPb,\hfil BXd,\hfil CGS,\hfil CRY,\hfil DVW,\hfil DPa,\hfil EKO,\hfil EIX,\hfil FTb,\hfil FHQ,\hfil NSc,\hfil UYZ,\hfil MRX,\hfil LWd,\hfil JWc,\hfil NZa,\hfil JYb,\hfil HZd.\hfill}

\item[\hbox to 40pt{\3(39-39-02)}] {\1123,\hfil 145,\hfil 167,\hfil 289,\hfil 2AB,\hfil 3CD,\hfil 3EF,\hfil 4GH,\hfil 4IJ,\hfil 5KL,\hfil 5MN,\hfil 6OP,\hfil 6QR,\hfil 7ST,\hfil 7UV,\hfil 8GO,\hfil 8XY,\hfil 9KQ,\hfil 9IT,\hfil AMP,\hfil AVd,\hfil BRa,\hfil BWb,\hfil CGS,\hfil CQZ,\hfil DIP,\hfil DYb,\hfil EOW,\hfil ELd,\hfil FMX,\hfil FHU,\hfil LSa,\hfil KUb,\hfil NWZ,\hfil JVZ,\hfil NTc,\hfil HRc,\hfil JXa,\hfil Ycd.\hfill}

\item[\hbox to 40pt{\3(39-39-03)}] {\1123,\hfil 145,\hfil 167,\hfil 289,\hfil 2AB,\hfil 3CD,\hfil 3EF,\hfil 4GH,\hfil 4IJ,\hfil 5KL,\hfil 5MN,\hfil 6OP,\hfil 6QR,\hfil 7ST,\hfil 7UV,\hfil 8GO,\hfil 8MU,\hfil 9IT,\hfil 9QY,\hfil AKW,\hfil AHV,\hfil BNP,\hfil BXZ,\hfil CGS,\hfil CQZ,\hfil DJP,\hfil Dac,\hfil EKO,\hfil EIX,\hfil FNY,\hfil FVb,\hfil SWd,\hfil MRd,\hfil UXa,\hfil Jbd,\hfil LTc,\hfil LZb,\hfil WYa,\hfil HRc.\hfill}

\item[\hbox to 40pt{\3(39-39-04)}] {\1123,\hfil 145,\hfil 167,\hfil 289,\hfil 2AB,\hfil 3CD,\hfil 3EF,\hfil 4GH,\hfil 4IJ,\hfil 5KL,\hfil 5MN,\hfil 6OP,\hfil 6QR,\hfil 7ST,\hfil 7UV,\hfil 8GO,\hfil 8UX,\hfil 9TZ,\hfil 9Ia,\hfil ASW,\hfil ANP,\hfil BHV,\hfil Bbc,\hfil CGS,\hfil CQa,\hfil DNX,\hfil DJb,\hfil EUY,\hfil EKZ,\hfil FIP,\hfil Fcd,\hfil KOb,\hfil MTc,\hfil LVa,\hfil MQY,\hfil HRZ,\hfil RXd,\hfil JWY,\hfil LWd.\hfill}

\item[\hbox to 40pt{\3(39-39-05)}] {\1123,\hfil 145,\hfil 167,\hfil 289,\hfil 2AB,\hfil 3CD,\hfil 3EF,\hfil 4GH,\hfil 4IJ,\hfil 5KL,\hfil 5MN,\hfil 6OP,\hfil 6QR,\hfil 7ST,\hfil 7UV,\hfil 8GO,\hfil 8XY,\hfil 9NU,\hfil 9Ra,\hfil AKW,\hfil AHQ,\hfil BMP,\hfil BVc,\hfil CGS,\hfil CNb,\hfil DVY,\hfil DIa,\hfil EKO,\hfil EJd,\hfil FHU,\hfil FXZ,\hfil LSZ,\hfil IPZ,\hfil QYd,\hfil WXb,\hfil MTd,\hfil LRc,\hfil Jbc,\hfil TWa.\hfill}

\item[\hbox to 40pt{\3(39-39-06)}] {\1123,\hfil 145,\hfil 167,\hfil 289,\hfil 2AB,\hfil 3CD,\hfil 3EF,\hfil 4GH,\hfil 4IJ,\hfil 5KL,\hfil 5MN,\hfil 6OP,\hfil 6QR,\hfil 7ST,\hfil 7UV,\hfil 8GO,\hfil 8XY,\hfil 9MQ,\hfil 9Td,\hfil AKZ,\hfil AJV,\hfil BRb,\hfil BHc,\hfil COW,\hfil CKS,\hfil DNa,\hfil DVX,\hfil EQZ,\hfil EIY,\hfil FHU,\hfil FLd,\hfil GZa,\hfil NPc,\hfil JPd,\hfil MUW,\hfil IWb,\hfil SYc,\hfil Tab,\hfil LRX.\hfill}

\item[\hbox to 40pt{\3(39-39-07)}] {\1123,\hfil 145,\hfil 167,\hfil 289,\hfil 2AB,\hfil 3CD,\hfil 3EF,\hfil 4GH,\hfil 4IJ,\hfil 5KL,\hfil 5MN,\hfil 6OP,\hfil 6QR,\hfil 7ST,\hfil 7UV,\hfil 8GO,\hfil 8MU,\hfil 9IT,\hfil 9QY,\hfil AKW,\hfil AHV,\hfil BNP,\hfil BXa,\hfil CGS,\hfil CQZ,\hfil DJP,\hfil DLd,\hfil EKO,\hfil EIX,\hfil FVY,\hfil FNc,\hfil Sab,\hfil MRb,\hfil UXZ,\hfil JWb,\hfil Tcd,\hfil WZc,\hfil LYa,\hfil HRd.\hfill}

\item[\hbox to 40pt{\3(39-39-09)}] {\1123,\hfil 145,\hfil 167,\hfil 289,\hfil 2AB,\hfil 3CD,\hfil 3EF,\hfil 4GH,\hfil 4IJ,\hfil 5KL,\hfil 5MN,\hfil 6OP,\hfil 6QR,\hfil 7ST,\hfil 7UV,\hfil 8GO,\hfil 8SX,\hfil 9NV,\hfil 9bd,\hfil AUY,\hfil AHZ,\hfil BJP,\hfil BLc,\hfil CGW,\hfil CVc,\hfil DZb,\hfil DMP,\hfil EKO,\hfil EIU,\hfil FHT,\hfil FNQ,\hfil KZa,\hfil JSa,\hfil IRb,\hfil LTd,\hfil QWa,\hfil WYd,\hfil MXY,\hfil RXc.\hfill}

\item[\hbox to 40pt{\3(39-39-10)}] {\1123,\hfil 145,\hfil 167,\hfil 289,\hfil 2AB,\hfil 3CD,\hfil 3EF,\hfil 4GH,\hfil 4IJ,\hfil 5KL,\hfil 5MN,\hfil 6OP,\hfil 6QR,\hfil 7ST,\hfil 7UV,\hfil 8GW,\hfil 8OY,\hfil 9QZ,\hfil 9IU,\hfil AKS,\hfil APb,\hfil BRa,\hfil BXc,\hfil CGX,\hfil CKQ,\hfil DMb,\hfil DJT,\hfil ESW,\hfil EZc,\hfil FLa,\hfil FIP,\hfil HVb,\hfil HZd,\hfil NOd,\hfil MRW,\hfil NUX,\hfil LVY,\hfil JYc,\hfil Tad.\hfill}

\baselineskip=8pt
}
\end{description}
Two bipartite graphs give 4 MMP hypergraphs that are
not dual to themselves.
\begin{description}
{\baselineskip=8pt
\item[\hbox to 40pt{\3(39-39-01a)}] {\1123,\hfil 145,\hfil 167,\hfil 289,\hfil 2AB,\hfil 3CD,\hfil 3EF,\hfil 4GH,\hfil 4IJ,\hfil 5KL,\hfil 5MN,\hfil 6OP,\hfil 6QR,\hfil 7ST,\hfil 7UV,\hfil 8GO,\hfil 8SW,\hfil 9Rb,\hfil 9NX,\hfil AMP,\hfil AVZ,\hfil BHa,\hfil BQd,\hfil CKO,\hfil CUX,\hfil DQY,\hfil DJW,\hfil EIP,\hfil ETa,\hfil FRc,\hfil FLZ,\hfil GYZ,\hfil Kab,\hfil LSd,\hfil HUc,\hfil IXd,\hfil MWc,\hfil NTY,\hfil JVb.\hfill}

\item[\hbox to 40pt{\3(39-39-01b)}] {\1123,\hfil 145,\hfil 167,\hfil 289,\hfil 2AB,\hfil 3CD,\hfil 3EF,\hfil 4GH,\hfil 4IJ,\hfil 5KL,\hfil 5MN,\hfil 6OP,\hfil 6QR,\hfil 7ST,\hfil 7UV,\hfil 8GW,\hfil 8MZ,\hfil 9Sa,\hfil 9Rd,\hfil AOX,\hfil AVY,\hfil BKb,\hfil BJc,\hfil CGO,\hfil CKS,\hfil DNQ,\hfil DIU,\hfil EHY,\hfil ETc,\hfil FPZ,\hfil FLd,\hfil HRb,\hfil JPa,\hfil QWc,\hfil LVW,\hfil MTX,\hfil IXd,\hfil UZb,\hfil NYa.\hfill}

\item[\hbox to 40pt{\3(39-39-08a)}] {\1123,\hfil 145,\hfil 167,\hfil 289,\hfil 2AB,\hfil 3CD,\hfil 3EF,\hfil 4GH,\hfil 4IJ,\hfil 5KL,\hfil 5MN,\hfil 6OP,\hfil 6QR,\hfil 7ST,\hfil 7UV,\hfil 8GO,\hfil 8WX,\hfil 9JU,\hfil 9MR,\hfil ALY,\hfil AId,\hfil BVZ,\hfil BNP,\hfil CGS,\hfil CLQ,\hfil DVW,\hfil DMd,\hfil EKO,\hfil EIT,\hfil FHY,\hfil FUc,\hfil KZa,\hfil Sab,\hfil QXc,\hfil WYb,\hfil HRZ,\hfil NTX,\hfil JPb,\hfil acd.\hfill}

\item[\hbox to 40pt{\3(39-39-08b)}] {\1123,\hfil 145,\hfil 167,\hfil 289,\hfil 2AB,\hfil 3CD,\hfil 3EF,\hfil 4GH,\hfil 4IJ,\hfil 5KL,\hfil 5MN,\hfil 6OP,\hfil 6QR,\hfil 7ST,\hfil 7UV,\hfil 8GO,\hfil 8Ua,\hfil 9LT,\hfil 9Ic,\hfil ASW,\hfil AKP,\hfil BJR,\hfil BNb,\hfil CGS,\hfil CNc,\hfil DPY,\hfil DJa,\hfil EOX,\hfil ETb,\hfil FIV,\hfil FMQ,\hfil HQZ,\hfil HYb,\hfil KUZ,\hfil MWa,\hfil WXd,\hfil XZc,\hfil VYd,\hfil LRd.\hfill}

\baselineskip=8pt
}
\end{description}

\parindent=20pt

The above OMLs admit neither a strong set of states nor
any known property stronger than the orthomodularity itself
apart from the Mayet vector field Eq.~(\ref{eq:E-4}).
One pair (39-39-01a,b) of the duals that are not dual to
each other admit at least two
states while the other  (39-39-08a,b) admit one single
state. All OMLs that are dual to themselves
(39-39-00,-02--07,-09--10) admit exactly one state (1/3
for each atom).

Bipartite graphs with 80 vertices that give 40-40 OMLs are
much more numerous than those with 78 vertices above.
There are 174 such graphs and they give 80 OMLs that are
dual to themselves. Among them there is only one (40-40-038)
that admits more than one state. Among the others
(94 graphs) there are eight OMLs that admit more than one state
(40-40-043a,b, -097a,b, -111a,b, -130a,b).

There are 2515 bipartite graphs with 82 vertices that give
4612 41-41 OMLs. 418 of the MMP hypergraphs are dual to
themselves and the other 4194 are not. The latter graphs form
2097 pairs of duals that are not dual to themselves. Of the
former ones, 10 admit two or more states (all the others
admit only one single state) and of the latter ones, 78
dual pairs admit two or more states and the remaining
2019 pairs admit only one state. We can recognize that the
more vertices we have the smaller is the portion of lattices
dual to themselves.

The biggest loops of 39-39 are enneadecagons (19-gons)
and of 40-40 and 41-41 icosagons (20-gons)\footnote{The
biggest loop of the OMLs that admit only one state are in
average neither significantly smaller nor bigger than
those that admit two or more states. Therefore neither
of these two properties impose significant restrictive
conditions on the OMLs.}  which makes them
inappropriate for the standard graphical presentation---there
are too many lines over each other in their
figures to discern patterns. Therefore and because of the
new feature of the existence of three separate cycles for
3D OMLs with equal number of vertices (atoms) and edges
(blocks) we present details of our separate cycle
representation and give several figures in the next section.

\section{\label{sec:figs}Separate Level Representation of the
MMP Hypergraphs}

As already mentioned in section~\ref{sec:one-state-p}, our new
layout of MMP hypergraphs is inspired by the presentation of
the 36-36 one given in Ref.~\onlinecite{bdm-ndm-mp-1}
and repeated here as the first figure in Fig.~\ref{fig:36-36}.
Our goal is to simplify graphical
representation of big MMP hypergraphs and big
arbitrary hypergraphs with the same number of atoms
and blocks, i.e., vertices and edges, respectively.

In the latter figure one can notice 9 radially placed blocks which
do not have common atoms (and which therefore include 27 atoms), while
9 remaining atoms form an inner ring.
We call radial blocks {\em independent blocks} and remaining atoms
{\em free atoms}. The outermost atom of each independent block is
connected to the outermost atoms of two other independent blocks by two
blocks, middle atoms of which are free atoms. These connecting
blocks form a cycle shown separately in the second figure of
Fig.~\ref{fig:36-36} (as opposed to the original layout, where
oppositely placed independent blocks are connected, we connect
adjacent blocks). Similarly, middle atoms of independent blocks are
connected by blocks with free atoms as their middle atoms and there
is again a cycle of connecting blocks (shown separately in the third
figure of Fig.~\ref{fig:36-36}).
Finally, innermost atoms of independent blocks are also connected with
blocks that contain one free atom. In the original layout free atoms
are ``last'' atoms of connecting blocks, but as the atoms in a
block can be freely permuted, we can again form a cycle, shown here as
the fourth figure of Fig.~\ref{fig:36-36}.

Based on described analysis of the layout of the 36-36 MMP,
we break the representation of MMPs with equal
number of atoms and blocks into three separate levels.

The first step is to identify sets of {\em independent blocks}, i.e.,
those that meet two criteria: they do not share atoms and no three such
blocks are connected by a single block. In the archetype case of
the 36-36 MMP all {\em connecting blocks} (blocks that
connect independent ones) contain one free atom.
When all sets of independent blocks are found, we extract the largest
ones.

In the second step, for each such set we try to identify all
cycles that visit all independent blocks in the set.
Here we do not use the term ``cycle'' in the sense of graph theory---our
cycle is a sequence of blocks that connect atoms of two independent
blocks and pass through a free atom (if required, atoms of connecting
blocks are permuted so that the free atom becomes the middle one).
The shortest cycle forms the first level of our presentation.
Independent blocks and free atoms are arranged in the sequence in
which they are visited. But, as compared to the archetypal 36-36
MMP, there are some differences: (1) a cycle is usually not closed,
that is, it does not finish in the same atom in which it starts (as
can be seen on the uppermost blocks in the second figure of
Fig.~\ref{fig:35-35} and first ones of Figs.~\ref{fig:39-39} and
\ref{fig:40-40-34}), although sometimes it does (first figure in
Fig.~\ref{fig:40-40-38}); (2) in most cases some independent blocks
are visited two or even three times (Figs.~\ref{fig:35-35},
 \ref{fig:39-39}, and \ref{fig:40-40-38}); (3) in most
(maybe even all) cases some free atoms are visited more than once
and, of course, there are free
atoms that are not visited at all (all examples).
(Figs.~\ref{fig:35-35}, \ref{fig:39-39}, \ref{fig:40-40-34}
and \ref{fig:40-40-38}).

\begin{figure*}[htp]
\begin{center}
\includegraphics[width=0.3\textwidth]{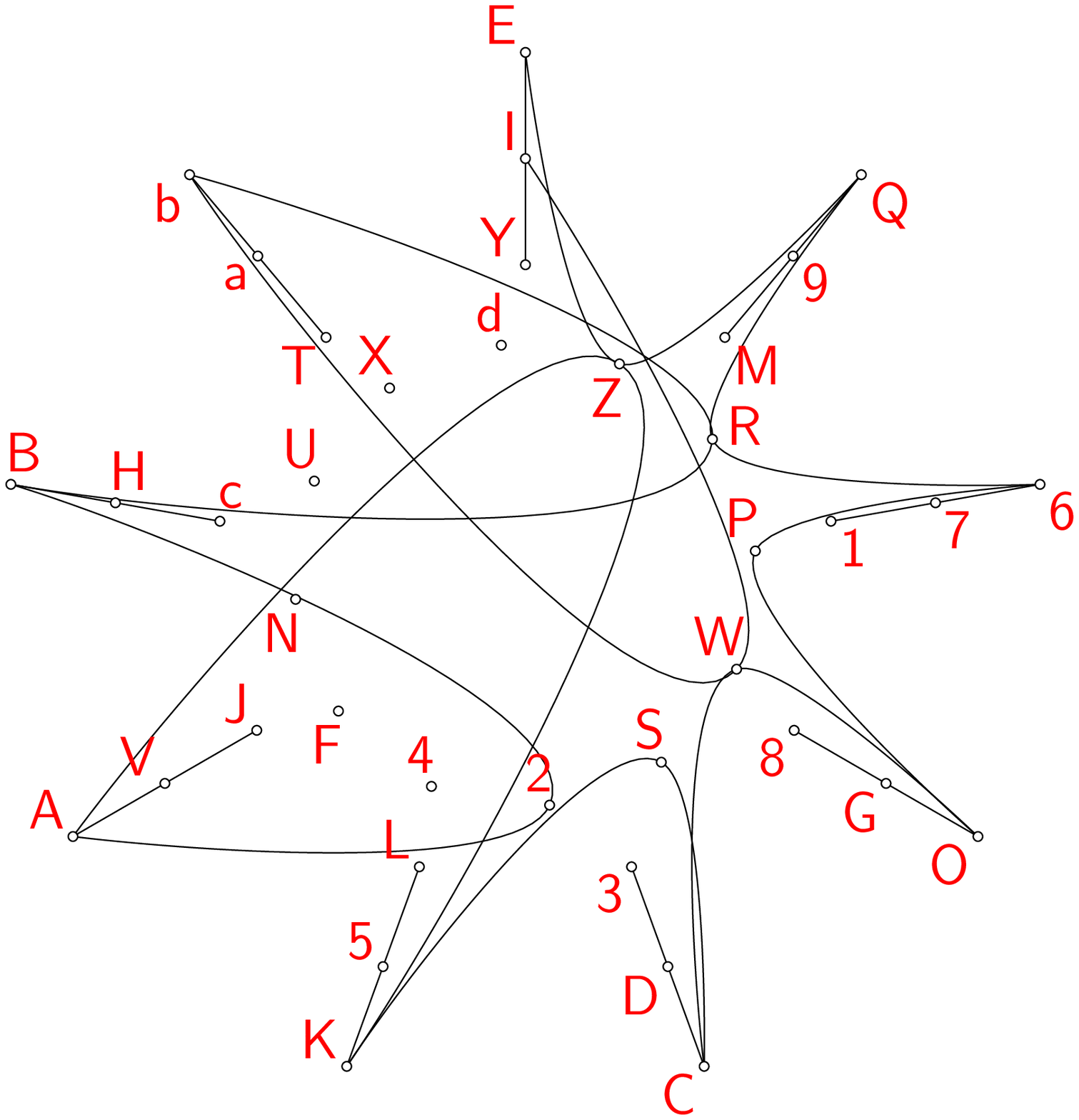}
\includegraphics[width=0.3\textwidth]{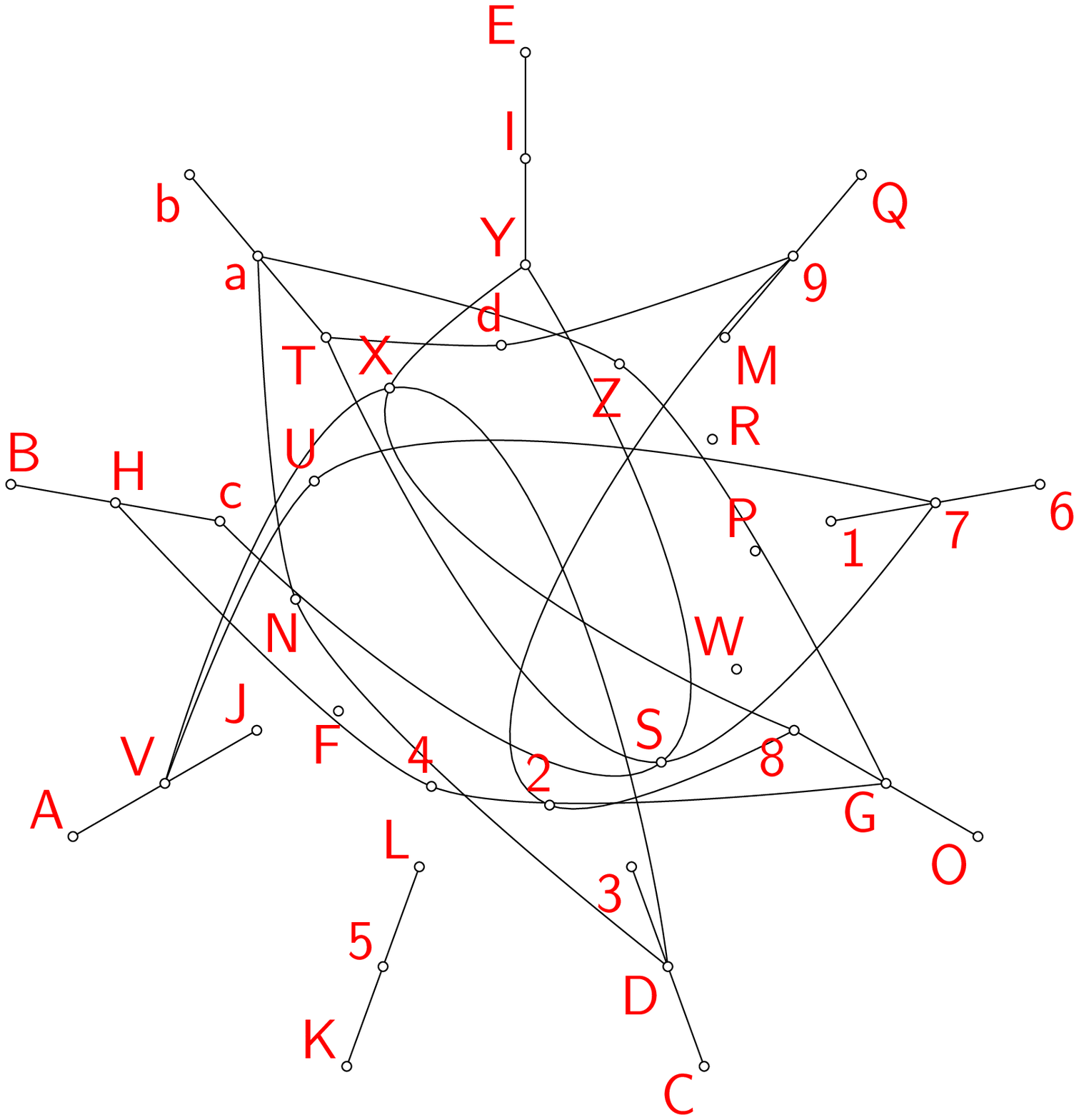}
\includegraphics[width=0.3\textwidth]{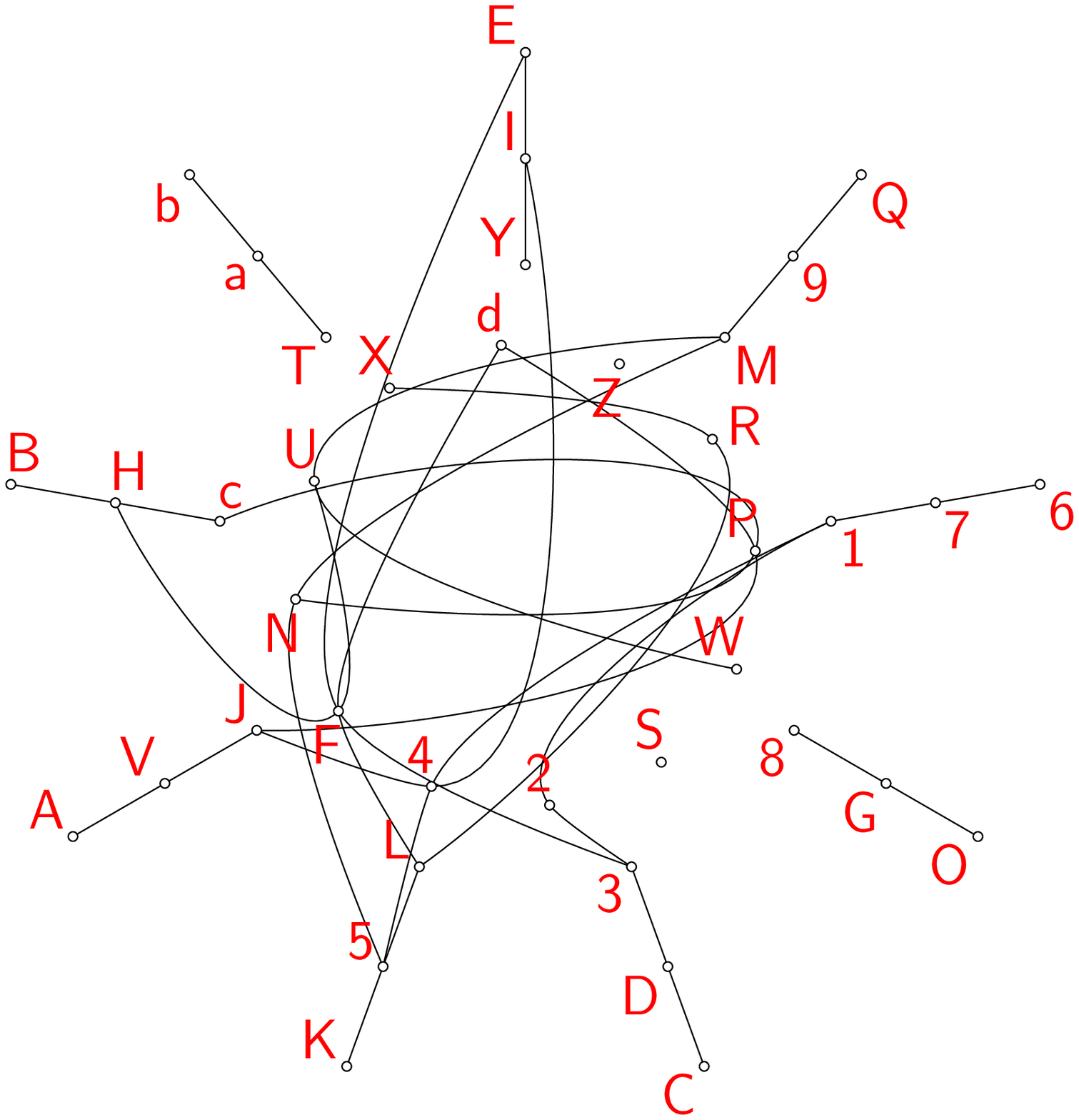}
\end{center}
\caption{39-39-06 OML dual to itself}
\label{fig:39-39}
\end{figure*}

\begin{figure*}[htp]
\begin{center}
\includegraphics[width=0.3\textwidth]{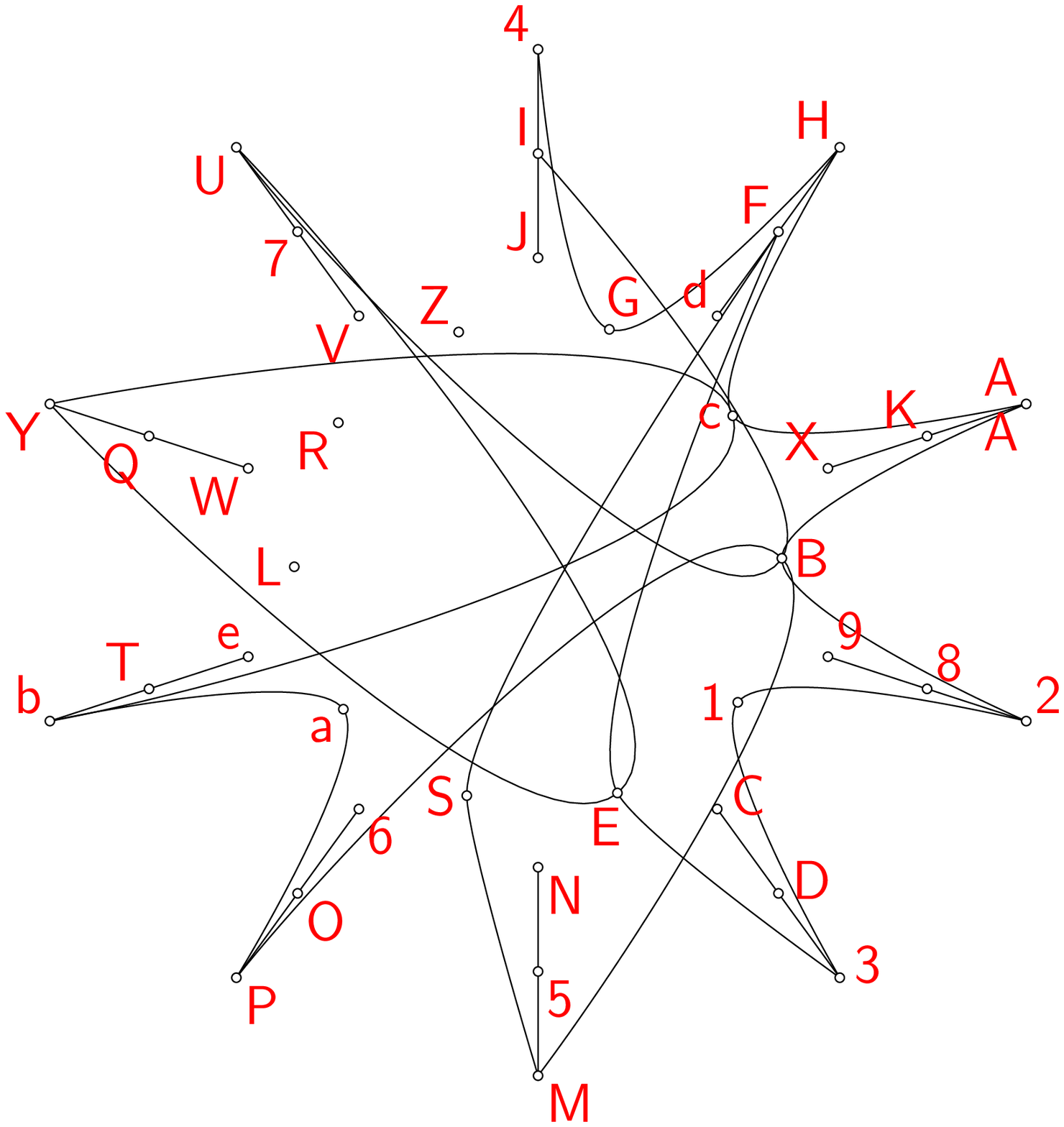}
\includegraphics[width=0.3\textwidth]{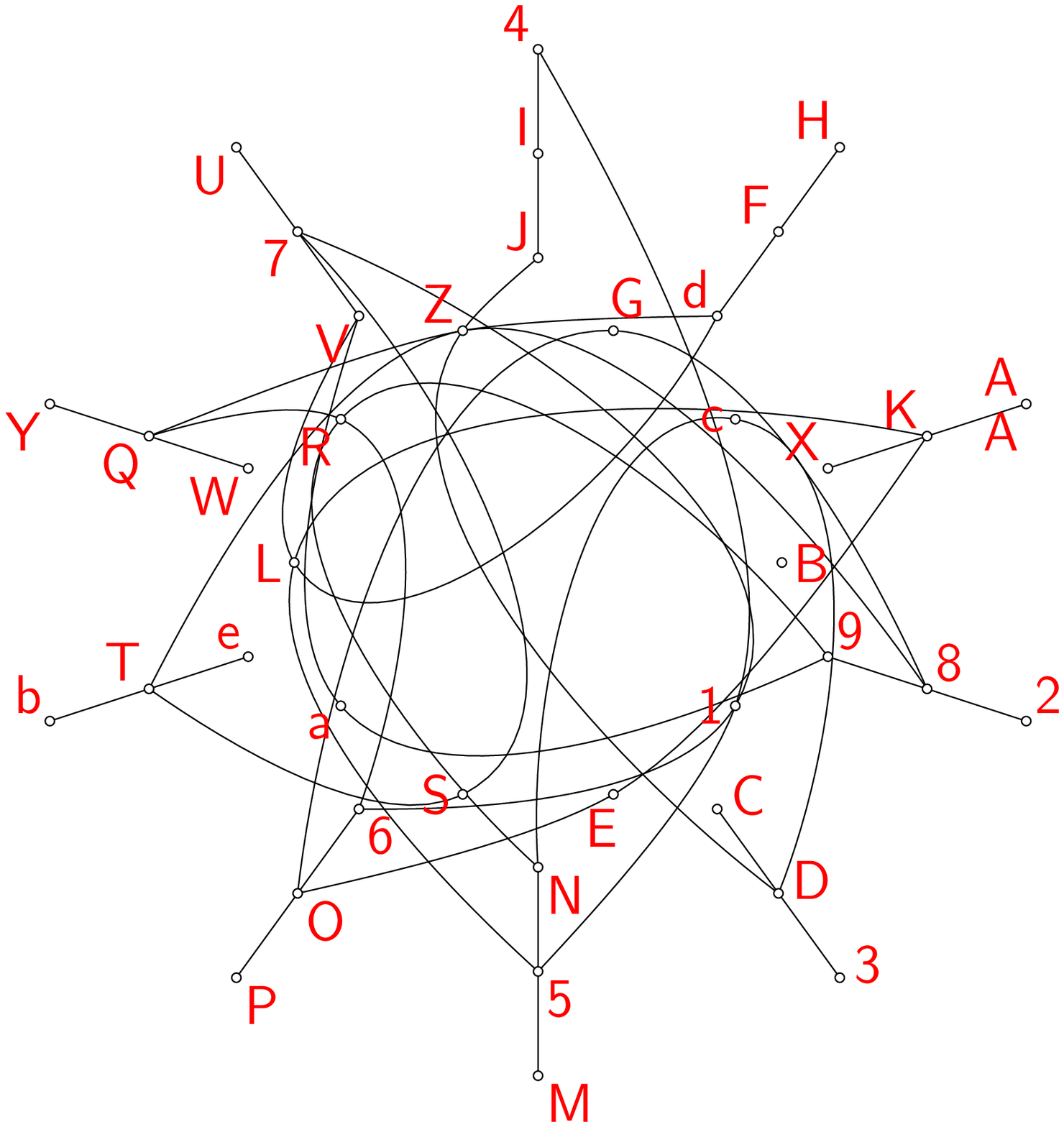}
\includegraphics[width=0.3\textwidth]{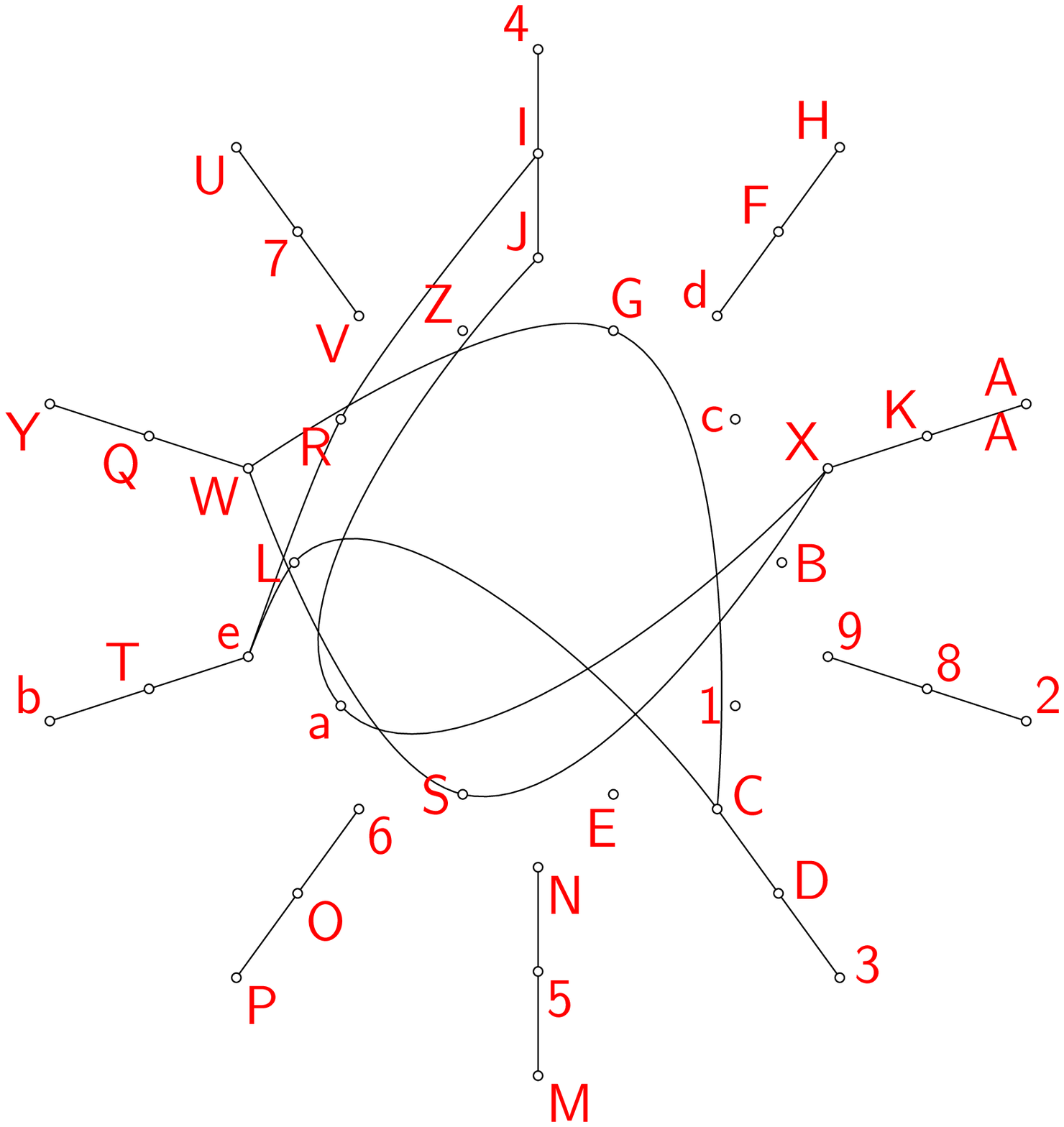}
\end{center}
\caption{40-40-34 OML dual to itself}
\label{fig:40-40-34}
\end{figure*}

\begin{figure*}[htp]
\begin{center}
\includegraphics[width=0.3\textwidth]{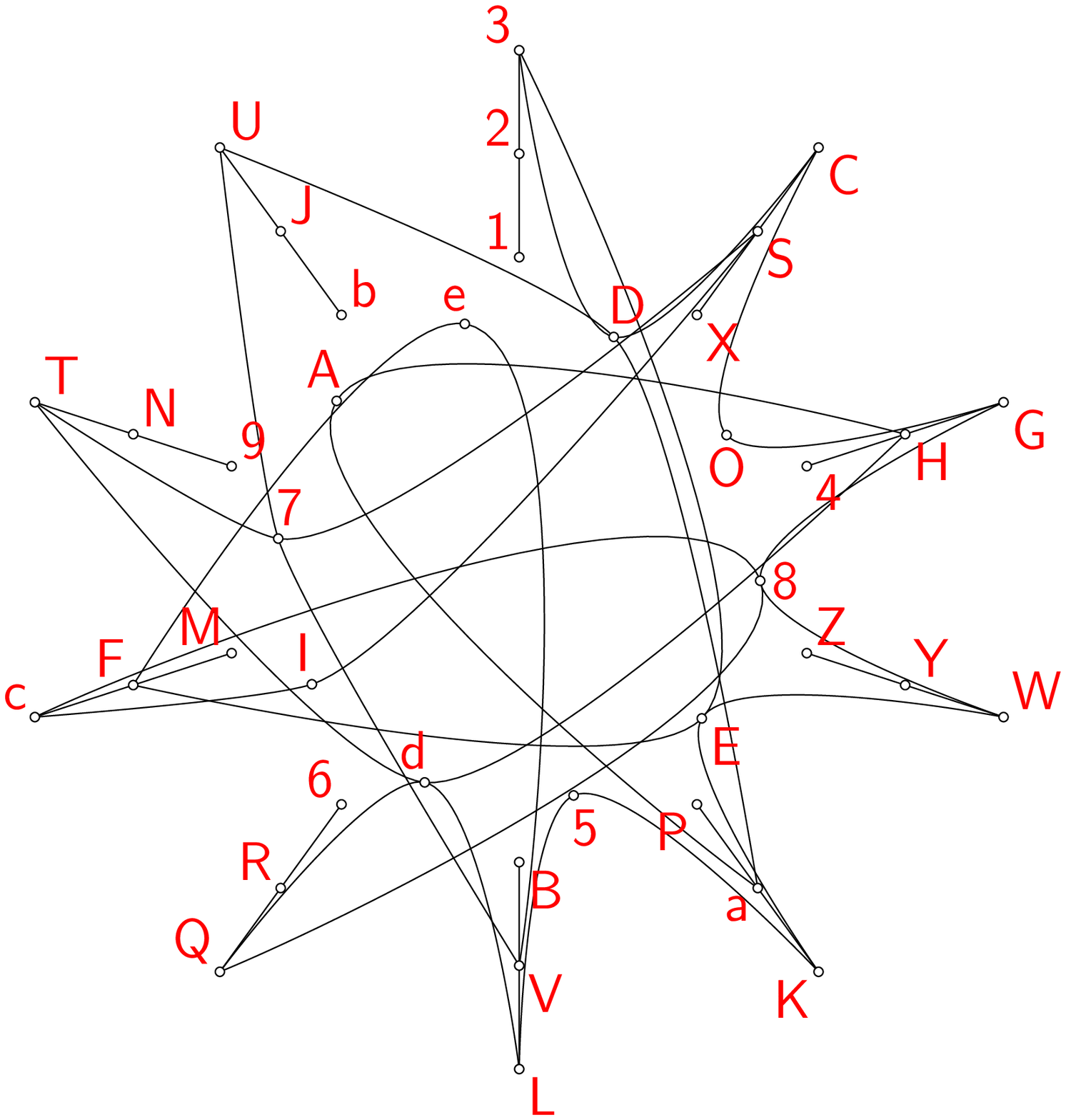}
\includegraphics[width=0.3\textwidth]{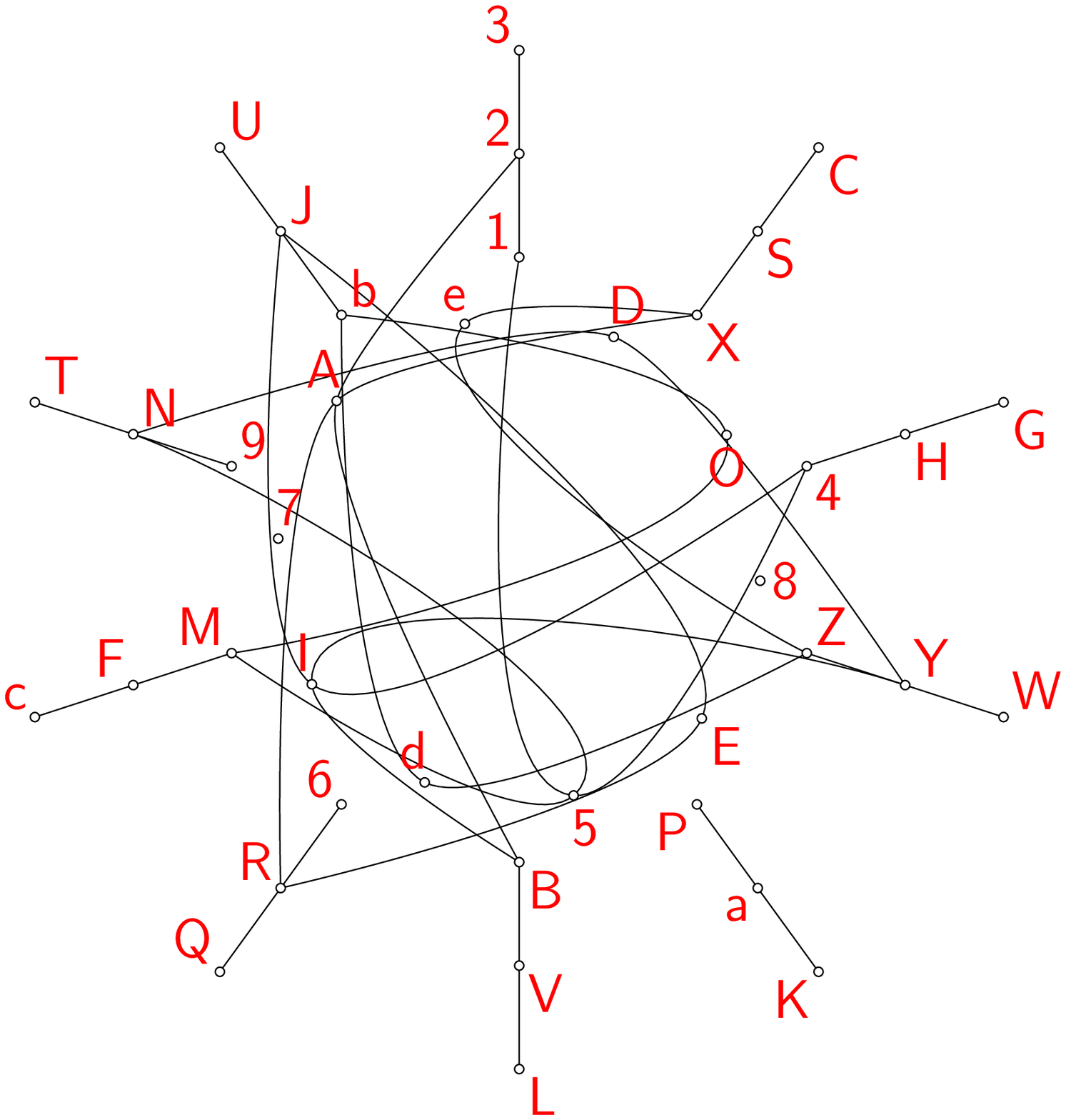}
\includegraphics[width=0.3\textwidth]{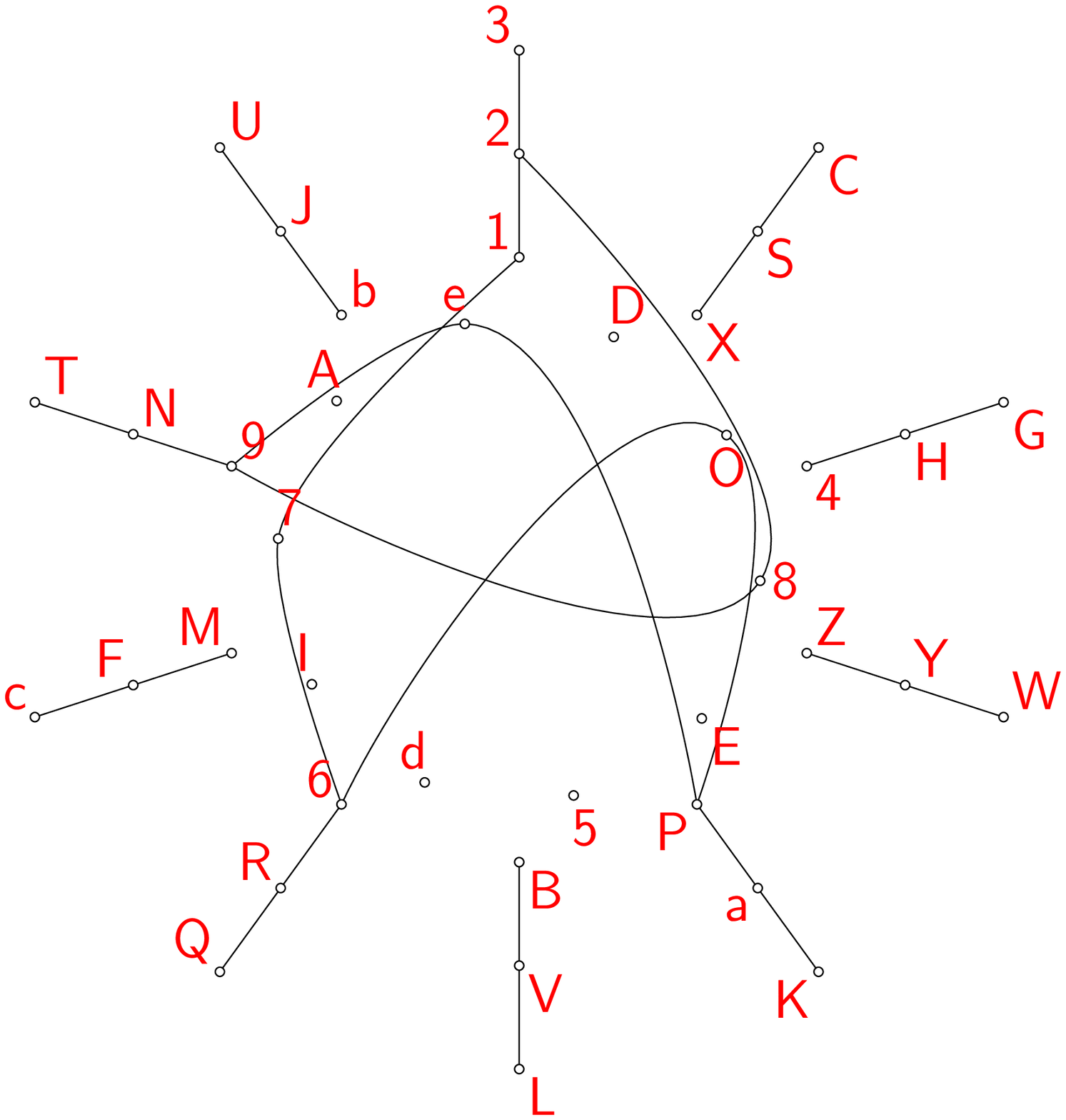}
\end{center}
\caption{40-40-38 OML dual to itself}
\label{fig:40-40-38}
\end{figure*}

If required---and if possible---atoms of independent blocks are
permuted so that the visited atom becomes the first/outermost atom
(if the block is visited twice, then visited atom is placed in the
middle).

In most cases we can find a second cycle that begins and ends on the
same independent block, but not in the same atom; besides, these
cycles usually do not visit all independent blocks.
This can be seen in all our examples: the third figure of
Fig.~\ref{fig:35-35} and the second figures
of Figs.~\ref{fig:39-39}, \ref{fig:40-40-34} and \ref{fig:40-40-38}.
Such cycles form the second level of our layout (again, if required and
if possible, atoms of independent blocks are permuted so that
connecting blocks visit their second/middle atoms).

The remaining blocks are contained in the third level. In some cases they
again form a cycle: the third figures of Figs.~\ref{fig:40-40-34} and
\ref{fig:40-40-38} (in fact, in these two examples the second and third
cycles can be regarded as a single cycle, but we broke that cycle when
the independent block in which it began was visited for the second
time). But usually the third level contains two or even more unconnected
sequences of blocks. Namely, some blocks connect one atom of some
independent block and two free atoms, that is, there are some blocks
that do not connect two independent blocks: the fourth figure of
Fig.~\ref{fig:35-35} and the third figure of Fig.~\ref{fig:39-39}.

The previously described parts of our algorithm are implemented in
the C{\small$++$} programming language using the Boost Graph
library.~\cite{bgl}
The program for the final graph layout (including the calculation of the
atoms' coordinates and drawing of the graph) is written in
the {\tt Asymptote}~\cite{asympt}
vector graphics language based on Donald Knuth's
{\footnotesize\sf METAFONT}.

\section{\label{sec:conclusions}Conclusions}

In this paper, we found a correct way to establish a correlation
between a lattice description and a Hilbert space description
of quantum systems as well as their preparation, handling, and
measurement. Our description also allows for a straightforward
reconstruction of the quantum formalism from empirically
justified axioms. In Sec.~\ref{sec:represent}  we explain how
this can be done and why the previous descriptions from
the literature were wrong. Essentially they were wrong
because they were based on Greechie diagrams that
handle only orthogonalities between Hilbert space
subspaces and have no way to describe conditions and
equations that have to be satisfied in any Hilbert space
or any Hilbert lattice quantum formalism and that involve
detailed relations between non-orthogonal subspaces.

We describe several families of equations and other
conditions that must hold in every Hilbert lattice in
Sec.~\ref{sec:ortho}.  We made use of correspondences
between graphs and lattices, which in turn correspond to
Hilbert space subspaces, in order to visualize and study
3-dim quantum setups in Sections
\ref{sec:represent}--\ref{sec:figs}.  In particular, we found and
investigated MMP hypergraphs (see Def.~\ref{def:mmp}) with 
equal numbers of vertices and edges, which correspond to bipartite
graphs (Sec.~\ref{sec:one-state}).  Separately, we studied
Greechie diagrams used in the literature to represent
Kochen-Specker and other quantum setups
(Secs.~\ref{sec:one-state} and \ref{sec:no-god}) to
see which Hilbert lattice properties hold and which do
not hold in them.

In Sec.~\ref{sec:represent} we developed a new graphical
representation of the known KS setups by means
of MMP hypergraphs (see Figs.~\ref{fig:ks-117}, \ref{fig:peres},
\ref{fig:bub-proof}, and \ref{fig:conw-k}) to visualize their
properties. Then, using our algorithms and programs,
we showed, in particular in Eq.~(\ref{eq:bubs-proof}) and
Fig.~\ref{fig:bub-proof}, that Greechie diagrams cannot
represent KS setups because they are not subalgebras of a
Hilbert lattice. This is obvious from the fact that in
a Greechie diagram, the join of nonorthogonal atoms
(lines) $a$ and $q$ (in Fig.~\ref{fig:bub-proof}) is the
whole space (1), while in a Hilbert space, it is a plane
$a+q$. Therefore, if we wanted to have a lattice
representation of KS setups, we should add lattice
elements missing in Greechie diagrams as shown in
Fig.~\ref{fig:bub-proof}. This provides us with a new type 
of lattices (MMPL) that include all relations---between 
both orthogonal and non-orthogonal elements---needed for a 
lattice description of a considered quantum system. We 
define MMPL by Def.~\ref{def:mmpl}.  However, a detailed
elaboration of such a representation is outside of the
scope of the present paper.

Application of such an approach is in any case computationally
unfeasible for the time being, and therefore we consider
non-quantum setups to narrow down classes of lattices that
we can use to obtain complex setups in the future and
in particular KS setups.

The Kochen-Specker theorem claims that there are quantum
experimental setups that cannot be given a classical
rendering. Its proof was based on setups (KS setups)
that were considered quantum and to which it was
impossible to ascribe classical 0-1 values.
A number of authors have represented KS setups or indeed
any spin-1 experimental setup by means of Greechie
lattices.\cite{shimony,hultgren,svozil-tkadlec,svozil-book-ql,tka98,tkadlec,tkadlec-2,smith03,foulis99}
However, in Sec.~\ref{sec:represent} we proved that no
known 3-dimensional KS setup represented by Greechie/Hasse
diagrams, in particular, Kochen-Specker's, Bub's, 
Conway-Kochen's, and Peres' pass the equations that
hold in every Hilbert space. These KS setups themselves do, of course,
pass these equations in the Hilbert space itself.

A Hilbert space des\-cription of such systems
is ortho-isomorphic to a Hilbert lattice (Th.~\ref{th:repr}).
\cite{beltr-cass-book}  An OML
equipped with additional properties described in
Sec.~\ref{sec:ortho} such as admitting
strong sets of states and Mayet vector states, atomicity,
the superposition principle, the orthoarguesian property, etc.,
is easier to handle in the lattice theory than in the original
Hilbert space. This is because, e.g., Peres' KS design,
shown in Fig.~\ref{fig:peres}, has 40 triples of mutually
orthogonal vectors. The majority of the vectors are orthogonal
to vectors from other triples and rotated at various
angles in space with respect to every other.
We would have to extract this vector edifice from the
Schr\"odinger equations describing the deflections
of a spin-1 system in electric and magnetic fields.
Lattices, as opposed to such a standard Hilbert space
approach, are easier to handle, but even they
are too demanding at present.

Therefore it is viable to approach the problem from the other
end, to see whether we can generate lattices that would admit
neither quantum nor classical interpretation from the very
start (see Subsec.~ref{subsec:semi}). Such finding of 
properties and lattices that are not sufficient for a full i
description of a quantum system (e.g., the aforementioned 
description by means of Greechie diagrams) is likely to enable 
us to achieve, eventually, a complete lattice description (with
superposition included) of quantum experiments.

Here we stress that the superposition we refer to above
and in Theorem \ref{th:rksa} and Corollary \ref{th:rks}
is a  superposition of vectors contained in 1-dim Hilbert
space subspaces. As opposed to this, when we look at all
possible superpositions of two vectors they span a plane
in a 3-dim Hilbert space. That is trivial in
the sense that for some definite constants we can always
find a value that a superposition of two vectors has in
particular direction, but is nontrivial in the sense
that for bigger lattices we can find a superposition
for vectors for which only mutual orthogonalities are
known.

Another reason for a ``semi-quantum approach''
is that there exist several methods of finding and
generating new properties and equations in the theory of
OMLs and Hilbert lattices based on the lattices that do
not admit some states or other properties. The most relevant
here is a method of generating the Mayet-Godowski
equations (Def.~\ref{def:gge}) using lattices
that do not admit strong sets of
states.\cite{pm-ql-l-hql2,mp-alg-hilb-eq-09}
Based on all that together with several previous results
based on lattices admitting only one state,
\cite{shultz74,ptak87,weber94,navara08} in Sec.\
\ref{sec:ortho} we formulated the following theorem:

{\parindent=0pt
{\bf Theorem \ref{th:rksa}} [Semi-quantum lattice algorithms]
{\em There exist algorithms that generate finite
sequences of OMLs that admit superposition,
real-valued states, and a vector state given by Eq.~(\ref{eq:E-4})
but do not admit other conditions that have to be satisfied
by every Hilbert lattice, in particular equations like orthoarguesian
and Godowski ones (Subsection \ref{subsec:eqns}).
As a consequence of violating Godowski
equations, these OMLs do not admit strong sets of states. }}

Such a choice is determined by our recent finding that
OMLs with equal number of atoms and blocks
possess and lack properties stated in the theorem. They
all satisfy the superposition principle and therefore
do not admit classical interpretation, they all admit
real-valued states,  and they all admit a vector state which,
when applied to Hilbert lattices, select those over
which a field (real, for the time being) can be defined.
They admit neither strong sets of states nor orthoarguesian
properties, and this makes them non-quantum but suitable for
generation of quantum properties such as Mayet-Godowski
equations.\cite{mp-alg-hilb-eq-09}
We generate them by means of novel algorithms which first
generate bipartite graphs (Sec.~\ref{sec:one-state}) and
then convert them into hypergraphs that correspond to
OMLs with equal numbers of atoms and blocks as
described in Sec.~\ref{sec:one-state}.
These results substantiate the following corollary of
Theorem \ref{th:rksa}:

{\parindent=0pt
{\bf Corollary \ref{th:rks}} [Semi-quantum lattices]
{\em There exists a class of {\rm OML\/}s that admit superposition,
real-valued states, and a vector state but do not admit other
conditions that have to be satisfied by every Hilbert lattice.}}

To verify these and find new properties of lattices
with equal number of atoms and blocks
we had to generate a significant number of them. Towards that
goal we developed several algorithms for generating and
verifying properties on them as well as for their graphical
representations, in Secs.~\ref{sec:one-state},
\ref{sec:one-state-p}, and \ref{sec:figs}, respectively.

The generation was performed by representing lattices as
graphs then applying an extended algorithm that exhaustively
determines all the associated graphs.

As a final note, we point out that in Sec.~\ref{sec:represent}
(antepenultimate paragraph) we obtained an important ``by-product'' in the
field of Hilbert lattice equations while we were checking whether
$n$OA equations (\ref{eq:noa}) pass Peres' OML
that corresponds to Peres' MMP hypergraph shown in Fig.\
\ref{fig:peres}. In Ref.~\onlinecite{mpoa99}, we found the
new infinite class of generalized orthoarguesian equations of
Theorem~\ref{th:noa}, but at the time the
computing power of available clusters were only sufficient to
find lattices in which the equations up to 4OA would pass and
a 5OA fail. In Ref.~\onlinecite{pm-ql-l-hql2} we generated lattices in
which 6OA failed and OAs up to 5OA passed. Such examples
are important because they prove that the equations form a
successively stronger sequence at least up to those orders.
In Ref.~\onlinecite{mpoa99}, we proved
that all individual orthoarguesian equations previously found
(by other authors) were equivalent to either 3OA or 4OA.
When we found our $n$OA, it was unknown whether the same
might occur with $n$OA at the 6OA level.\cite{mayet06-hql2}
Our result (the aforementioned passing of 3OA through 6OA and failure
of 7OA in Peres' lattice) dispels any doubt. It was serendipitous
that we obtained this result in this way, because no present-day
supercomputer is capable of generating 7OA examples by brute
force---at least not with our present algorithms.

\begin{acknowledgments}
One of us (M. P.) would like to thank his host Hossein Sadeghpour
for a support during his stay at ITAMP.

Supported by the US National Science Foundation through a
grant for the Institute for Theoretical Atomic, Molecular,
and Optical Physics (ITAMP) at Harvard University and Smithsonian
Astrophysical Observatory and Ministry of
Science, Education, and Sport of Croatia through the project
No.~082-0982562-3160.

Computational support was provided by the cluster Isabella of
the Zagreb University Computing Centre and
by the Croatian National Grid Infrastructure.
\end{acknowledgments}

\end{document}